\documentclass[a4paper,11pt,dvipsnames]{article}

\usepackage{amsmath,amsthm,amsfonts,amssymb,bbm,xcolor,mathtools}
\usepackage[margin=2cm]{geometry}
\usepackage{etoolbox}
\usepackage{multicol}
\usepackage{centernot}

\usepackage{svg}
\usepackage{tikz}
\usepackage[only,bigsqcap]{stmaryrd}
\usepackage{subfig}

\usepackage{hyperref}
\usepackage{cleveref}

\usepackage[shortlabels]{enumitem}

\theoremstyle{plain}
\newtheorem{theorem}{Theorem}[section]

\newtheorem{corollary}  [theorem]{Corollary}

\newtheorem{example}    [theorem]{Example}
\newtheorem{lemma}      [theorem]{Lemma}

\newtheorem{proposition}[theorem]{Proposition}

\theoremstyle{definition}
\newtheorem{claim}  {Claim}
\newtheorem*{claim*}{Claim}

\newcommand{\B} {\mathbb{B}}

\newcommand{\Define}    [1] {\textbf{#1}}

\newcommand{\Lattice}   [1] { \mathcal{#1} }

\newcommand{\zero}      { \mathbf{0} }
\newcommand{\one}       { \mathbf{1} }

\newcommand{\id}                    { \mathrm{id} }
\newcommand{\Functions}             { \mathrm{F} }

\newcommand{\Updates}               [1] { \mathtt{#1} }

\newcommand{\Asynchronous}              { \Updates{A} }
\newcommand{\GeneralAsynchronous}       { \Updates{GA} }

\newcommand{\TrappingGraph}             { \Updates{T} }

\newcommand{\Trapping}              [1] { {#1}^\mathrm{T} }
\newcommand{\MinimalTrapping}       [1] { {#1}^\mathrm{M} }

\newcommand{\Trapspaces}            { \mathcal{T} }
\newcommand{\PrincipalTrapspaces}   { \mathcal{P} }
\newcommand{\MinimalTrapspaces}     { \mathcal{M} }
\newcommand{\Subcubes}              { \mathrm{S} }
\newcommand{\Collections}           { \mathrm{A} }

\newcommand{\HammingDistance}{d_\mathrm{H}}

\newcommand{\Fix}{\mathrm{Fix}}

\title{Trapping and commutative Boolean networks}
\author{Maximilien Gadouleau\footnote{Department of Computer Science, Durham University, Durham, UK. \texttt{m.r.gadouleau@durham.ac.uk}} }
\date{}

\begin{document}

\maketitle

\begin{abstract}
A Boolean network is a transformation of the set of Boolean configurations of a given length.
Trapspaces of Boolean networks have been garnering attention due to their theoretical and applicative significance.
A trapspace is a subcube (i.e. a set of configurations defined by fixing certain components) invariant by the Boolean network; a principal trapspace is the smallest trapspace containing a given configuration; a minimal trapspace is one that does not contain any smaller trapspace.
In an unrelated development, commutative Boolean networks have been introduced as those networks where all local updates commute.
In this paper, we relate those two aspects of Boolean network theory. We make five main contributions.
First, we introduce the trapping graph and the trapping closure of a Boolean network. We also define trapping networks as the networks with transitive general asynchronous graphs and we prove that those are exactly the trapping closures.
Second, we show that two Boolean networks have the same collection of (principal) trapspaces if and only if they have the same trapping closure. 
As such, trapping networks are a normal form for the study of trapspaces, for the trapping closure is the network with the most asynchronous transitions for a given collection of trapspaces. 
We then characterise the collections of (principal) trapspaces of Boolean networks. We finally give analogous results for the collections of minimal trapspaces.
Third, we prove that commutative networks are a particular class of trapping networks, and we classify the collections of principal trapspaces of commutative networks.
Fourth, we focus on bijective commutative networks, which we refer to as Marseille networks. We provide several alternative definitions for Marseille networks, and we classify them as special commutative or trapping networks.
Fifth, we focus on idempotent commutative networks, which we refer to as Lille networks. We provide several alternative definitions for Lille networks, we classify them as special commutative or trapping networks, and we relate them to globally idempotent networks. Lille networks are the globally idempotent commutative networks; we then prove that globally idempotent networks are trapping, and we provide alternative definitions for those networks as well.
Our investigations of Marseille and Lille networks also highlight relations amongst the asynchronous, general asynchronous, and trapping graphs of Boolean networks, as well as the structure of trapping networks in general.
\end{abstract}

\section{Introduction} \label{section:introduction}

\subsection{Background} \label{subsection:background}

\paragraph{Boolean networks}

Boolean networks are a fundamental framework for addressing complex systems, with prominent applications in biology, ecology, and social sciences~\cite{Montagud22,Rozum_2024,Gaucherel_2017,Poindron_2021,Grabisch_2013}.
A Boolean network represents a network of $n$ interacting entities, where each entity $i \in [n]$ has a Boolean state $x_i \in \B$, which evolves over time according to a deterministic function $f_i(x_1, \dots, x_n)$ of the current states of the entities.
Mathematically, a Boolean network is simply a mapping $f : \B^n \to \B^n$, which takes an overall configuration of states $x = (x_1, \dots, x_n)$ as input and returns $f(x) = (f_1(x), \dots, f_n(x))$.
Applying $f$ to $x$ corresponds to all entities updating their state at the same time, which is referred to as the parallel schedule.

Of course, different entities may update their state according to different schedules, yielding (general) asynchronous updates. Since the original works by Kauffman \cite{Kau69} and Thomas \cite{Tho73}, asynchronous updates have been widely studied, both in terms of modelling purposes and of dynamical analysis (see \cite{Bor08, ARS23} and references therein). The update of a subset $S \subseteq [n]$ of entities can be represented by $f^{(S)}$, where $f^{(S)}( x ) = (f_S(x), x_{-S})$, and updating $S$ and $T$ successively can be represented by $f^{(S, T)} = f^{(T)} \circ f^{(S)}$. In the fully asynchronous case, only updates of the form $f^{(i)}$ for some $i \in [n]$ occur. All the general asynchronous transitions of the form $x \to f^{(S)}(x)$ are collected in the general asynchronous graph of $f$, while the asynchronous graph of $f$ only considers asynchronous transitions of the form $x \to f^{(i)}(x)$.

\paragraph{Trapspaces}

Arguably the most well studied property of Boolean networks is their fixed points, i.e. $x$ such that $f(x) = x$ (see \cite{Gad20, ARS14, ARS17, Ric15} and references therein).
Beyond their intrinsic significance as ``stable states'' of the modelled network, fixed points have an important theoretical property: they are immune to changes in the update schedule.
Indeed, if $x$ is a fixed point, then $x$ is also a fixed point of $f^{(S)}$ for any $S \subseteq [n]$.

A trapspace of a Boolean network can be viewed as a ``localised'' fixed point, where only part of the network is fixed.
More formally, a trapspace is an invariant subcube: it is a set of the form $X = \{ x \in \B^n : x_S = 0, x_T = 1 \}$ obtained by fixing some states which satisfies $f(X) \subseteq X$, so that those states remain fixed \cite{Klarner15-TrapSpaces}.
Trapspaces have garnered a lot of interest due to their significance in biological applications. Moreover, like fixed points, they are also immune to changes in the update schedule, which makes them theoretically important.

A trapspace is minimal if it does not contain any smaller trapspace. Minimal trapspaces of networks have been notably studied for their relation with limit (ultimately
periodic) configurations: each minimal trapspace necessarily contains at least one limit configuration
\cite{Klarner15-TrapSpaces,pauleve2020reconciling}.

\paragraph{Commutative networks}

Many different classes of Boolean networks have been proposed, based on their interaction graphs (e.g. \cite{Rob80, Aracena2004}), the nature of their local functions (e.g. \cite{Gol85, ARS17}), their metric properties (e.g. \cite{Ric11,Gad19}) etc. Recently, Bridoux et al. \cite{BGT20} introduced the class of commutative networks, where the updates $f^{(S)}$ and $f^{(T)}$ commute for all $S, T \subseteq [n]$. Remarkably, commutative networks are extremely well structured, and their asynchronous graphs have been fully classified. A review of some of the main results in \cite{BGT20} will be provided in Section \ref{subsection:review_commutative}.

\subsection{Contributions} \label{subsection:contributions}

The main scope of this paper brings together the study of trapspaces of networks and that of commutative networks.
The contributions of this paper are broad, as they cover trapping networks, collections of trapspaces, commutative networks, bijective commutative networks, and idempotent commutative networks.
Let us give an overview of each contribution.

Section \ref{section:trapping_networks} is devoted to so-called \textbf{trapping networks}. 
We first introduce the \textit{trapping graph}, which only depends on the principal trapspaces of the network. 
The trapping graph has a handy representation as the general asynchronous graph of another network, which we refer to as the trapping closure. 
We then introduce \textit{trapping networks} as the networks with transitive general asynchronous graphs; a network is trapping if it is the trapping closure of some other network.
Our first main result, Theorem \ref{theorem:trapping_networks}, gives seven different definitions of trapping networks, including the two provided in Table \ref{table:characterisation}.

Section \ref{section:classification_trapspaces} is devoted to \textbf{collections of (principal) trapspaces} of Boolean networks.
We first prove in Theorem \ref{theorem:T(f)=T(g)} that the following are equivalent for two networks: they have the same collection of trapspaces, they have the same collection of principal trapspaces, they have the same trapping graph / closure.
As such, the trapping closure is a ``normal form'' when studying trapspaces: it is the network with the most transitions amongst those with the same collection of trapspaces.
In turn, this shows that when studying trapspaces, one can restrict oneself to trapping networks.
We then give in Theorem \ref{theorem:three-way_equivalence} a \textit{full classification} of the collections of (principal) trapspaces of Boolean networks, and how they relate to one another and to trapping networks.
We finally give a full characterisation of the collections of minimal trapspaces of Boolean networks, and show that two networks with the same collection of minimal trapspaces can have different trapspaces.

Section \ref{section:commutative} is denoted to \textbf{commutative networks}.
Firstly, in Theorem \ref{theorem:commutative_trapping}, we prove that commutative networks are trapping and we provide the alternate definitions of commutative networks in Table \ref{table:characterisation}.
Secondly, in Theorem \ref{theorem:convex_principal_trapspaces}, we classify the collections of principal trapspaces of commutative networks.

Section \ref{section:Marseille_networks} is devoted to bijective commutative networks, which we refer to as \textbf{Marseille networks}\footnote{The terms Marseille and Lille networks come from two facts: (1) Marseille and Lille were the two first locations of the WAN series of workshops; (2) bijective commutative networks were first discussed in Marseille by the authors in \cite{BGT20}, while idempotent commutative networks were first exposed at the WAN workshop in Lille.}.
We make three main contributions with regards to Marseille networks.
\begin{itemize}
    \item Firstly, we give four alternate definitions of Marseille networks, including those in Table \ref{table:characterisation}, in Theorem \ref{theorem:Marseille_definitions}. 

    \item Secondly, we study the properties of networks with symmetric asynchronous / general asynchronous / trapping graphs and relate them to involutive networks in Theorem \ref{theorem:graphs_symmetric}.
    In particular, the following are equivalent for a network: it is Marseille; it is globally involutive (i.e. $f^{(S,S)} = \id$ for all $S \subseteq [n]$); its general asynchronous graph is symmetric.

    \item Thirdly, we characterise Marseille networks as particular trapping or commutative networks in Theorem \ref{theorem:Marseille_classification}.
    In particular, all locally bijective (i.e. $f^{(i)}$ is a bijection for all $i \in [n]$) trapping networks are Marseille.
\end{itemize}

Section \ref{section:Lille} is devoted to idempotent commutative networks, which we refer to as \textbf{Lille networks}.
We make four main contributions with regards to Lille networks.
\begin{itemize}
    \item Firstly, we give four alternate definitions of Lille networks, including those in Table \ref{table:characterisation}, in Theorem \ref{theorem:Lille_definitions}.

    \item Secondly, we study the properties of networks with triangular (i.e. acyclic with loops) or oriented asynchronous / general asynchronous / trapping graphs and relate them to idempotent networks in Theorem \ref{theorem:graphs_triangular}.
    In particular, a trapping network is fixable (i.e. its asynchronous attractors are fixed points) if and only if every trapspace contains a fixed point.

    \item Thirdly, we consider the fifth main class of networks in this paper, namely \textit{globally idempotent networks}.
    Lille networks are exactly the commutative globally idempotent networks.
    In Theorem \ref{theorem:globally_idempotent_definitions}, we prove that globally idempotent networks are trapping and we provide the alternate definitions of globally idempotent networks in Table \ref{table:characterisation}.

    \item Fourthly, we characterise Lille networks as particular trapping or commutative networks in Theorem \ref{theorem:Lille_classification}.
    In particular, a network is fixable if for any configuration $x$, there is a path in the asynchronous graph from $x$ to a fixed point.
    We show that Lille networks are exactly the fixable commutative networks.

\end{itemize}

\begin{table}
\centering
\begin{tabular}{c|c|c}
    & $\forall x \in \B^n, y \in [x, f(x) ]$ & $\forall S, T \subseteq [n]$ \\
    \hline
    \textcolor{magenta}{Trapping}          &  $\textcolor{magenta}{ [y, f(y)] } \subseteq [ x, f(x) ]$ & $\textcolor{magenta}{ f^{(S,T)} } \sqsubseteq f^{( S \cup T )}$ \\
    \textcolor{blue}{Commutative}   & \;\; $[y, f(x)] \subseteq \textcolor{blue}{ [y, f(y)] } \subseteq [ x, f(x) ]$ \;\; & \;\; $f^{( S \Delta T )} \sqsubseteq \textcolor{blue}{ f^{(S,T)} } \sqsubseteq f^{( S \cup T )}$ \;\;\\
    \textcolor{cyan}{Marseille}     &  $\textcolor{cyan}{ [y, f(y)] } = [ x, f(x) ]$ & $\textcolor{cyan}{ f^{(S,T)} } = f^{( S \Delta T )}$ \\
    \textcolor{red}{Lille}           &  $\textcolor{red}{ [ y, f(y) ] } = [ y, f(x) ]$ & $\textcolor{red}{ f^{(S,T)} } = f^{( S \cup T )}$\\
    \;\; \textcolor{orange}{Globally Idempotent} \;\; & $\textcolor{orange}{ [y, f(y)] } \subseteq [ y, f(x) ]$ & $f^{( S \cap T )} \sqsubseteq \textcolor{orange}{  f^{(S,T)} }$   
\end{tabular}
\caption{Characterisation of the five main classes of Boolean networks in this paper.} \label{table:characterisation}
\end{table}

\section{Preliminaries} \label{section:preliminaries}

\paragraph{Boolean configurations and subcubes}

We denote the Boolean set by $\B = \{0,1\}$ and for any positive integer $n$, we denote $[n] = \{1, \dots, n \}$. A \Define{configuration} is $x = (x_1, \dots, x_n) \in \B^n$. For any $S \subseteq [n]$, we denote $x_S = (x_s : s \in S)$ and $x_{-S} = ( x_t : t \notin S )$, and we use the notation $x = (x_S, x_{-S})$. We shall identify an element $i \in [n]$ with the corresponding singleton $\{ i \}$, so that $x = (x_i, x_{-i})$ for instance. For any two configurations $x, y \in \B^n$, we denote the set of positions where they differ by $\Delta( x, y ) = \{ i \in [n] : x_i \ne y_i \}$ and their Hamming distance by $\HammingDistance( x, y ) = | \Delta(x, y) |$. For any Boolean variable $a \in \B$, we denote its negation by $\neg a = 1 - a$; we extend this notation to configurations of any length by componentwise negation: $\neg x = ( \neg x_1, \dots, \neg x_n )$.

A \Define{subcube} of $\B^n$ is any $X \subseteq \B^n$ such that there exist two disjoint sets of positions $S, T \subseteq [n]$ with $X = \{ x \in \B^n : x_S = 0, x_T = 1 \}$. For any set $A \subseteq \B^n$, the principal subcube of $A$, denoted by $[A]$, is the smallest subcube containing $A$. If $A = \{a_1, \dots, a_k\}$, we also denote $[A] = [a_1, \dots, a_k]$. If $X$ is a subcube and $x \in X$, then there is a unique $y \in X$ such that $X = [x, y]$; we refer to $y$ as the \Define{opposite} of $x$ in $X$, and we denote it by $y = X - x$. We denote the set of subcubes of $\B^n$ by $\Subcubes(n)$ and the set of all collections of subcubes of $\B^n$ by $\Collections(n) = 2^{ \Subcubes(n) }$.

\paragraph{Boolean networks}

A \Define{Boolean network} (or simply, network) of dimension $n$ is a mapping $f : \B^n \to \B^n$. We denote the set
of networks of dimension $n$ as $\Functions(n)$.
For any $x \in \B^n$, we refer to the subcube $[x, f(x)]$ as the \Define{interval} of $x$ with respect to $f$.
Any network $f \in \Functions(n)$ can be viewed as $f = (f_1, \dots, f_n)$ where $f_i : \B^n \to \B$ is given by $f_i(x) = f(x)_i$ for
all $i \in [n]$. For any $S \subseteq [n]$ and any $f \in \Functions(n)$, the update of $S$
according to $f$ is represented by the network $f^{(S)} \in \Functions(n)$ where
\[
    f^{ (S) }( x ) = ( f_S( x ), x_{-S} ).
\]
In particular, $f^{ ([n]) } = f$ and $f^{ (\emptyset) } = \id$. We note the distinction between the update $f^{ (i) }$ (given by $f^{ (i) }(x) = ( f_i( x ), x_{-i} )$) and the power $f^i = f \circ f \circ \dots \circ f$ ($i$ terms). We can then compose those updates, so that if $S_1, \dots, S_k \subseteq [n]$, we obtain
\[
    f^{( S_1, \dots, S_k )} = f^{(S_k)} \circ f^{(S_{k-1})} \circ \dots \circ f^{(S_1)}.
\]

\paragraph{Asynchronous graph and general asynchronous graph}

A (directed) \Define{graph} is $\Gamma  = (V, E)$, where $V$ is the set of vertices and $E \subseteq V^2$ is the set of edges. A graph $\Gamma$ is \Define{reflexive} if for all $v \in V$, $(v,v) \in E$; $\Gamma$ is \Define{symmetric} if $(u,v) \in E$ implies $(v,u) \in E$ for all $u,v \in V$; and $\Gamma$ is \Define{transitive} if $(u,v), (v,w) \in E$ implies $(u,w) \in E$ for all $u,v,w \in V$. The \Define{out-neighbourhood} of a vertex $v$ is $N^{out}( \Gamma ; v) = \{ u \in V : (v,u) \in E \}$.

The \Define{asynchronous graph} of a network $f \in \Functions(n)$ is the graph $\Asynchronous(f) = (V, E)$ where $V = \B^n$ and 
\[
    E = \{ (x, f^{ (i) }(x) ) : x \in \B^n, i \in [n] \}. 
\]
We remark that a Boolean network is fully characterised by its asynchronous graph.
Note that in most literature, one removes the loops $(x, x)$ from the asynchronous graph, that occur every time $f_i(x) = x_i$, yet we shall keep those loops instead in our definition. However, when drawing the asynchronous graph, we shall not display the loops and instead draw the underlying hypercube with thin black lines and the arcs of the graph with thick \textcolor{blue}{blue} arrows. See below for an example of a network, for which we give the asynchronous graph.

\begin{example} \label{example:network}
Consider the Boolean network $f \in \Functions( 3 )$.\\
~\\
\begin{tabular}{c|c}
     $x$ & $f(x)$ \\
     \hline
     $000$ & $110$ \\ 
     $001$ & $100$ \\
     $010$ & $000$ \\
     $011$ & $110$ \\
     $100$ & $100$ \\
     $101$ & $101$ \\
     $110$ & $110$ \\
     $111$ & $110$ 
\end{tabular}
~\\
The asynchronous graph of $f$ is given as follows.

\begin{center}
\begin{tikzpicture}[scale=1.5]

    \node (000) at (0,0) {$000$};
    \node (001) at (1,1) {$001$};
    \node (010) at (0,2) {$010$};
    \node (011) at (1,3) {$011$};
    \node (100) at (2,0) {$100$};
    \node (101) at (3,1) {$101$};
    \node (110) at (2,2) {$110$};
    \node (111) at (3,3) {$111$};

    \path[draw] (000) -- (001) -- (011) -- (111)
    (000) -- (010) -- (110) -- (111)
    (000) -- (100) -- (101) -- (111)
    (001) -- (101)
    (010) -- (011)
    (100) -- (110);
    
    \draw[very thick,-latex, blue] (000) -- (100);
    \draw[very thick,-latex, blue] (000) -- (010);

    \draw[very thick,-latex, blue] (001) -- (101);
    \draw[very thick,-latex, blue] (001) -- (000);

    \draw[very thick,-latex, blue] (010) -- (000);

    \draw[very thick,-latex, blue] (011) -- (111);
    \draw[very thick,-latex, blue] (011) -- (010);

    \draw[very thick,-latex, blue] (111) -- (110);
\end{tikzpicture}
\end{center}

\end{example}

The \Define{general asynchronous graph} of a network $f \in \Functions(n)$ is the graph $\GeneralAsynchronous(f) = (V, E)$ where $V = \B^n$ and 
\[
    E = \{ (x, f^{ (S) }(x) ) : x \in \B^n, S \subseteq [n] \}.
\]
Equivalently, the out-neighbourhood of a configuration in the general asynchronous graph is given by its interval: $N^{out}(\GeneralAsynchronous(f); x) = [x, f(x)]$ for all $x \in \B^n$. It is clear that the mapping $\GeneralAsynchronous : f \mapsto \GeneralAsynchronous(f)$ is injective. We first characterise the general asynchronous graphs.

\begin{proposition} \label{proposition:general_asynchronous_graph}
Let $\Gamma$ be a graph on $\B^n$. Then $\Gamma = \GeneralAsynchronous(f)$ for some $f \in \Functions(n)$ if and only if $\Gamma$ is reflexive and all out-neighbourhoods are subcubes.
\end{proposition}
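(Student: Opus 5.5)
Both directions rest on the identity already noted in the paper, $N^{out}(\GeneralAsynchronous(f); x) = [x, f(x)]$, together with the uniqueness of the opposite of a point in a subcube.

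For the forward direction, let $\Gamma = \GeneralAsynchronous(f)$. Since $f^{(\emptyset)} = \id$, the loop $(x,x)$ is an edge for every $x$, so $\Gamma$ is reflexive. The out-neighbourhood of $x$ is $[x, f(x)]$, which is a subcube by definition. To justify the identity itself, observe that $f^{(S)}(x)$ agrees with $x$ off $S$ and with $f(x)$ on $S$. As $S$ ranges over all subsets of $[n]$, these configurations are exactly the $y$ with $y_i = x_i$ for $i \notin \Delta(x, f(x))$ and $y_i$ arbitrary for $i \in \Delta(x,f(x))$. Choosing $S \subseteq \Delta(x,f(x))$ already produces all of them, and this set is precisely the subcube $[x, f(x)]$.

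For the converse, suppose $\Gamma$ is reflexive and every out-neighbourhood $X_x = N^{out}(\Gamma; x)$ is a subcube. By reflexivity $x \in X_x$, so the opposite $X_x - x$ is well defined and $X_x = [x, X_x - x]$. Define $f(x) = X_x - x$. The identity above then gives $N^{out}(\GeneralAsynchronous(f); x) = [x, f(x)] = X_x = N^{out}(\Gamma; x)$ for every $x$. A graph is determined by its out-neighbourhoods, so $\Gamma = \GeneralAsynchronous(f)$.

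There is no real obstacle here. The only points needing care are the identification of $\{f^{(S)}(x) : S \subseteq [n]\}$ with the subcube $[x,f(x)]$, and the role of reflexivity: it guarantees $x \in X_x$, which is what makes the opposite, and hence $f(x)$, well defined.
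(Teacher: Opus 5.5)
Your proof is correct and follows the paper's argument: the forward direction comes from $N^{out}(\GeneralAsynchronous(f); x) = [x, f(x)]$, and the converse defines $f(x) = N^{out}(\Gamma; x) - x$, where reflexivity makes the opposite well defined. You also spell out why $\{f^{(S)}(x) : S \subseteq [n]\} = [x, f(x)]$, which the paper uses without proof, and that justification is accurate.
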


\begin{proof}
We have $N^{out}(\GeneralAsynchronous(f); x) = [x, f(x)]$ for all $x$, i.e. $\GeneralAsynchronous(f)$ is reflexive and all out-neighbourhoods are subcubes. Conversely, if $\Gamma$ is reflexive and all out-neighbourhoods are subcubes, then $\Gamma = \GeneralAsynchronous(f)$, where $f(x) = N^{out}( \Gamma; x ) - x$ for all $x$.
\end{proof}

\begin{example} \label{example:GA}
Let $f \in \Functions(3)$ be the network in Example \ref{example:network}. The general asynchronous graph of $f$ is given as follows. The \textcolor{blue}{blue} arrows come from $\Asynchronous(f)$ while the \textcolor{magenta}{magenta} arrows are additional transitions in $\GeneralAsynchronous(f)$; once again we omit the loops on all the vertices.

\begin{center}
\begin{tikzpicture}[scale=1.5]
    \node (000) at (0,0) {$000$};
    \node (001) at (1,1) {$001$};
    \node (010) at (0,2) {$010$};
    \node (011) at (1,3) {$011$};
    \node (100) at (2,0) {$100$};
    \node (101) at (3,1) {$101$};
    \node (110) at (2,2) {$110$};
    \node (111) at (3,3) {$111$};

    \path[draw] (000) -- (001) -- (011) -- (111)
    (000) -- (010) -- (110) -- (111)
    (000) -- (100) -- (101) -- (111)
    (001) -- (101)
    (010) -- (011)
    (100) -- (110);
    
    \draw[very thick,-latex, blue] (000) -- (100);
    \draw[very thick,-latex, blue] (000) -- (010);

    \draw[very thick,-latex, blue] (001) -- (101);
    \draw[very thick,-latex, blue] (001) -- (000);

    \draw[very thick,-latex, blue] (010) -- (000);

    \draw[very thick,-latex, blue] (011) -- (111);
    \draw[very thick,-latex, blue] (011) -- (010);

    \draw[very thick,-latex, blue] (111) -- (110);

    \draw[very thick,-latex, magenta] (000) to [bend right]  (110);
    \draw[very thick,-latex, magenta] (001) -- (100);
    \draw[very thick,-latex, magenta] (011) -- (110);
    
\end{tikzpicture}
\end{center}
\end{example}

\paragraph{The lattice of Boolean networks}

The networks in $\Functions( n )$ have a natural partial order in terms of their transitions. For any two graphs $\Gamma_1 = (V, E_1)$, $\Gamma_2 = (V, E_2)$ on the same vertex $V$, we write $\Gamma_1 \subseteq \Gamma_2$ as a shorthand for $E_1 \subseteq E_2$.

Consider the relation $f \sqsubseteq g$ on $\Functions(n)$ given by the four equivalent conditions:
\begin{itemize}
    \item 
    $\Delta(x, f(x)) \subseteq \Delta(x, g(x))$ for all $x \in \B^n$;

    \item 
    $[ x, f(x) ] \subseteq [ x, g(x) ]$ for all $x \in \B^n$;

    \item 
    $\GeneralAsynchronous(f) \subseteq \GeneralAsynchronous(g)$;

    \item 
    $\Asynchronous( f ) \subseteq \Asynchronous( g )$.
\end{itemize}
The $\sqsubseteq$ partial order induces a lattice $\Lattice{F}_n = ( \Functions(n), \sqcup, \sqcap, \zero, \one)$ (a Boolean algebra isomorphic to $2^E$, where $E = \{ (u,v) : u,v \in \B^n, \HammingDistance(u,v) = 1 \}$ is the set of arcs of the hypercube) with 
\begin{align*}
    \GeneralAsynchronous( f \sqcup g ) &= \GeneralAsynchronous(f) \cup \GeneralAsynchronous(g)\\
    \GeneralAsynchronous( f \sqcap g ) &= \GeneralAsynchronous(f) \cap \GeneralAsynchronous(g)\\
    \zero &= \id \quad (\text{the identity network: } f(x) = x)\\
    \one &= \neg \quad (\text{the negation network: } f(x) = \neg x).
\end{align*}

We note that the ordering behaves well when considering updates of subsets: $f^{(S)} \sqsubseteq f^{(S \cup T)}$ for all $S, T \subseteq [n]$ and $f \sqsubseteq g \implies f^{(S)} \sqsubseteq g^{(S)}$ for all $S \subseteq [n]$.

\paragraph{Trapspaces of Boolean networks}

A \Define{trapspace} of $f \in \Functions(n)$ is a subcube $X \subseteq \B^n$ that satisfies the following three equivalent conditions \cite{Klarner15-TrapSpaces}:
\begin{itemize}
    \item 
    $f(X) \subseteq X$;

    \item 
    $f^{(i)}( X ) \subseteq X$ for all $i \in [n]$;

    \item 
    $f^{(S)}( X ) \subseteq X$ for all $S \subseteq [n]$.
\end{itemize}

The collection of all trapspaces of $f$, denoted by $\Trapspaces(f)$, is closed under intersection. Then for any $x \in \B^n$, there is a smallest trapspace of $f$ that contains $x$, which we shall refer to as the \Define{principal trapspace} of $x$ (with respect to $f$). For the sake of simplicity, we denote it by $T_f(x)$. The principal trapspace $T_f(x)$ can be recursively computed as follows: let $T_0 = \{ x \}$ and $T_i = [T_{i-1} \cup f( T_{i-1} )]$, then $T_n = T_f(x)$. In particular, the principal trapspace of $x$ contains the interval of $x$: $[x, f(x)] \subseteq T_f(x)$. The collection of all principal trapspaces of $f$ is denoted by $\PrincipalTrapspaces(f)$. A trapspace $T$ is \Define{minimal} if there is no trapspace $T'$ with $T' \subset T$. Clearly, any minimal trapspace is principal, but the converse does not necessarily hold. The collection of minimal trapspaces of $f$ is denoted by $\MinimalTrapspaces(f)$. 

\begin{example} \label{example:trapspaces}
Let $f \in \Functions(3)$ be the network from Examples \ref{example:network} and \ref{example:GA}. The principal trapspaces of $f$ are given as follows:
\begin{align*}  
    T_f(000) &= [000,110] = \{ x : x_3 = 0 \} \\
    T_f(001) &= [001,110] = \{ x \} \\
    T_f(010) &= [010,100] = \{ x : x_3 = 0 \} \\
    T_f(011) &= [011,100] = \{ x \} \\
    T_f(100) &= [100,100] = \{ x : x_1 = 1, x_{23} = 00 \} \\
    T_f(101) &= [101,101] = \{ x : x_{13} = 11, x_2 = 0 \} \\
    T_f(110) &= [110,110] = \{ x : x_{12} = 11, x_3 = 0 \} \\
    T_f(111) &= [111,110] = \{ x : x_{12} = 11 \}.
\end{align*}
It has three other trapspaces, namely $\{ x : x_1 = 1, x_2 = 0 \}$, $\{ x : x_1 = 1, x_3 = 0 \}$, and $\{ x : x_1 = 1 \}$.
\end{example}

\section{Trapping networks} \label{section:trapping_networks}

\subsection{Trapping graph and trapping closure} \label{subsection:trapping_graph}

The \Define{trapping graph} of a network $f \in \Functions(n)$ is the graph $\TrappingGraph(f) = (V, E)$ where $V = \B^n$ and
\[
    E = \{ (x,y) : x \in \B^n, y \in T_f( x ) \}.
\]
If $y \in T_f(x)$, then $T_f(x)$ is a trapspace of $f$ containing $y$, thus $T_f(y) \subseteq T_f(x)$; in other words, the trapping graph is transitive. In $\TrappingGraph(f)$, the out-neighbourhood of $x$ is a subcube containing $x$; therefore, the trapping graph of $f$ is the general asynchronous graph of another network. More concretely, let $\Trapping{f} \in \Functions(n)$ be the network that maps $x$ to its opposite in that subcube, i.e. 
\[
    \Trapping{f}(x) = T_f(x) - x,
\]
so that $T_f(x) = [x, \Trapping{f}(x)]$ and $\GeneralAsynchronous( \Trapping{f} ) = \TrappingGraph( f )$. We refer to $\Trapping{f}$ as the \Define{trapping closure} of $f$.

\begin{example} \label{example:trapping_graph}
Let $f \in \Functions(3)$ be the network from Examples \ref{example:network}, \ref{example:GA} and \ref{example:trapspaces}. The trapping graph of $f$ is given as follows.  The \textcolor{blue}{blue} arrows come from $\Asynchronous(f)$, the \textcolor{magenta}{magenta} arrows are additional transitions in $\GeneralAsynchronous(f)$, while the \textcolor{orange}{orange} arrows are additional transitions in $\TrappingGraph(f)$; once again we omit the loops on all the vertices.

\begin{center}
\begin{tikzpicture}[scale=1.5]
    \node (000) at (0,0) {$000$};
    \node (001) at (1,1) {$001$};
    \node (010) at (0,2) {$010$};
    \node (011) at (1,3) {$011$};
    \node (100) at (2,0) {$100$};
    \node (101) at (3,1) {$101$};
    \node (110) at (2,2) {$110$};
    \node (111) at (3,3) {$111$};

    \path[draw] (000) -- (001) -- (011) -- (111)
    (000) -- (010) -- (110) -- (111)
    (000) -- (100) -- (101) -- (111)
    (001) -- (101)
    (010) -- (011)
    (100) -- (110);
    
    \draw[very thick,-latex, blue] (000) -- (100);
    \draw[very thick,-latex, blue] (000) -- (010);

    \draw[very thick,-latex, blue] (001) -- (101);
    \draw[very thick,-latex, blue] (001) -- (000);

    \draw[very thick,-latex, blue] (010) -- (000);

    \draw[very thick,-latex, blue] (011) -- (111);
    \draw[very thick,-latex, blue] (011) -- (010);

    \draw[very thick,-latex, blue] (111) -- (110);

    \draw[very thick,-latex, magenta] (000) to [bend right]  (110);
    \draw[very thick,-latex, magenta] (001) --  (100);
    \draw[very thick,-latex, magenta] (011) --  (110);

    \draw[very thick,-latex, orange] (010) -- (110);
    \draw[very thick,-latex, orange] (010) to [bend right]  (100);



    \draw[very thick,-latex, orange] (001) -- (010);
    \draw[very thick,-latex, orange] (001) --  (011);
    \draw[very thick,-latex, orange] (001) -- (110);
    \draw[very thick,-latex, orange] (001) to [bend right]  (111);

    \draw[very thick,-latex, orange] (011) --  (000);
    \draw[very thick,-latex, orange] (011) --  (001);
    \draw[very thick,-latex, orange] (011) --  (100);
    \draw[very thick,-latex, orange] (011) to [bend left]  (101);

\end{tikzpicture}
\end{center}

The asynchronous graph of $\Trapping{f}$ is given as follows, where the \textcolor{blue}{blue} arrows come from $\Asynchronous(f)$, while the additional transitions in $\Asynchronous( \Trapping{f} )$ are highlighted in \textcolor{violet}{violet}.

\begin{center}
\begin{tikzpicture}[scale=1.5]
    \node (000) at (0,0) {$000$};
    \node (001) at (1,1) {$001$};
    \node (010) at (0,2) {$010$};
    \node (011) at (1,3) {$011$};
    \node (100) at (2,0) {$100$};
    \node (101) at (3,1) {$101$};
    \node (110) at (2,2) {$110$};
    \node (111) at (3,3) {$111$};

    \path[draw] (000) -- (001) -- (011) -- (111)
    (000) -- (010) -- (110) -- (111)
    (000) -- (100) -- (101) -- (111)
    (001) -- (101)
    (010) -- (011)
    (100) -- (110);
    
    \draw[very thick,-latex, blue] (000) -- (100);
    \draw[very thick,-latex, blue] (000) -- (010);

    \draw[very thick,-latex, blue] (001) -- (101);
    \draw[very thick,-latex, violet] (001) -- (011);
    \draw[very thick,-latex, blue] (001) -- (000);

    \draw[very thick,-latex, violet] (010) -- (110);
    \draw[very thick,-latex, blue] (010) -- (000);

    \draw[very thick,-latex, blue] (011) -- (111);
    \draw[very thick,-latex, violet] (011) -- (001);
    \draw[very thick,-latex, blue] (011) -- (010);

    \draw[very thick,-latex, blue] (111) -- (110);
\end{tikzpicture} 
\end{center}
\end{example}

\begin{proposition} \label{proposition:trapping_closure}
The operator $f \mapsto \Trapping{f}$ is a closure operator on the lattice $\Lattice{F}_n$: for all $f, g \in \Functions(n)$ we have
\begin{align}
    \label{equation:trapping1}
    f &\sqsubseteq \Trapping{f}\\
    \label{equation:trapping2}
    f \sqsubseteq g &\implies \Trapping{f} \sqsubseteq \Trapping{g}\\
    \label{equation:trapping3}
    \Trapping{ (\Trapping{f}) } &= \Trapping{f}.
\end{align}
\end{proposition}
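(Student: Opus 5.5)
The plan is to reduce all three properties to statements about principal trapspaces. The key reformulation is that, by definition of the trapping closure, $[x, \Trapping{f}(x)] = T_f(x)$ for every $x \in \B^n$. Since $f \sqsubseteq g$ is equivalent to $[x,f(x)] \subseteq [x,g(x)]$ for all $x$, each of \eqref{equation:trapping1}--\eqref{equation:trapping3} becomes an inclusion or equality between subcubes attached to a single configuration $x$.

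For \eqref{equation:trapping1}, I would invoke the observation already made in the preliminaries that $[x, f(x)] \subseteq T_f(x)$: the principal trapspace contains $x$ and is invariant, so it contains $f(x)$, and being a subcube it contains $[x,f(x)]$. This gives $[x,f(x)] \subseteq [x,\Trapping{f}(x)]$, i.e. $f \sqsubseteq \Trapping{f}$.

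For \eqref{equation:trapping2}, I would first show that $f \sqsubseteq g$ implies $\Trapspaces(g) \subseteq \Trapspaces(f)$. Take $X$ a trapspace of $g$ and $x \in X$. Then $[x,f(x)] \subseteq [x,g(x)] \subseteq X$, the last inclusion because $X$ is a subcube containing both $x$ and $g(x)$. Hence $f(x) \in X$ and $X$ is a trapspace of $f$. Consequently $T_g(x)$ is a trapspace of $f$ containing $x$, so $T_f(x) \subseteq T_g(x)$ by minimality, which is exactly $\Trapping{f} \sqsubseteq \Trapping{g}$.

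For \eqref{equation:trapping3}, one inequality follows from \eqref{equation:trapping1} applied to $\Trapping{f}$. For the reverse, it suffices to show $T_{\Trapping{f}}(x) \subseteq T_f(x)$, which I would do by proving that $T_f(x)$ is itself a trapspace of $\Trapping{f}$. For $y \in T_f(x)$, transitivity of the trapping graph (noted when the trapping graph was introduced) gives $T_f(y) \subseteq T_f(x)$. Hence $\Trapping{f}(y) \in [y,\Trapping{f}(y)] = T_f(y) \subseteq T_f(x)$, so $\Trapping{f}(T_f(x)) \subseteq T_f(x)$. Minimality of $T_{\Trapping{f}}(x)$ then gives the inclusion, so $\Trapping{(\Trapping{f})} \sqsubseteq \Trapping{f}$, and antisymmetry yields equality.

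No step seems a real obstacle. The only point needing care is \eqref{equation:trapping2}, where the monotonicity of the closure must go through the reversed inclusion of trapspace collections.
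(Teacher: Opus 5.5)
Your proof is correct and follows essentially the same route as the paper: each property is reduced to comparing $T_f(x)$ with $T_g(x)$, monotonicity comes from showing $T_g(x)$ is a trapspace of $f$, and idempotence comes from showing $T_f(x)$ is a trapspace of $\Trapping{f}$ via transitivity of the trapping graph. The only differences are cosmetic. You first prove $\Trapspaces(g) \subseteq \Trapspaces(f)$ in general before specialising to $T_g(x)$, and you obtain the easy inclusion in \eqref{equation:trapping3} from \eqref{equation:trapping1} applied to $\Trapping{f}$, whereas the paper uses monotonicity.
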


\begin{proof}
\eqref{equation:trapping1}. Since $f(x) \in T_f(x)$ for all $x$, we obtain $f \sqsubseteq \Trapping{f}$.

\eqref{equation:trapping2}. Suppose $f \sqsubseteq g$, then we need to prove that $T_f(x) \subseteq T_g(x)$ for all $x \in \B^n$. For all $y \in T_g(x)$, we have $[y, g(y)] \subseteq T_g(x)$ hence $[y, f(y)] \subseteq T_g(x)$. Thus $T_g(x)$ is a trapspace of $f$ containing $x$ whence $T_f(x) \subseteq T_g(x)$. 

\eqref{equation:trapping3}. Let $g = \Trapping{f}$. We prove that $T_g(x) = T_f(x)$ for all $x \in \B^n$. Firstly, $T_f(x) \subseteq T_g(x)$ because $f \sqsubseteq g$. Conversely, for any $y \in T_f(x)$, we have $T_f(y) = [y, g(y)] \subseteq T_f(x)$, hence $g(y) \in T_f(x)$. Therefore $T_f(x)$ is a trapspace of $g$ containing $x$, thus $T_g(x) \subseteq T_f(x)$.
\end{proof}

We can now give an algebraic characterisation of $\Trapping{f}$.

\begin{corollary} \label{corollary:trapping_closure}
For all $f \in \Functions(n)$ we have
\[
    \Trapping{f} = \bigsqcap \{ \Trapping{g} : f \sqsubseteq g\}.
\]
\end{corollary}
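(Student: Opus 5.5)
The identity follows directly from Proposition \ref{proposition:trapping_closure}, by proving the two inequalities in the lattice $\Lattice{F}_n$ separately. Write $h = \bigsqcap \{ \Trapping{g} : f \sqsubseteq g\}$. The indexing set is nonempty, since it contains $g = f$, and it is finite, so the meet is well defined in the finite lattice $\Lattice{F}_n$.

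For $h \sqsubseteq \Trapping{f}$: since $f \sqsubseteq f$, the network $\Trapping{f}$ itself belongs to the collection. Hence the meet is below it.

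For $\Trapping{f} \sqsubseteq h$: take any $g$ with $f \sqsubseteq g$. By the monotonicity property \eqref{equation:trapping2}, $\Trapping{f} \sqsubseteq \Trapping{g}$. So $\Trapping{f}$ is a lower bound of every element of the collection, and therefore lies below their greatest lower bound $h$.

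Antisymmetry of $\sqsubseteq$ then gives equality. There is no real obstacle here; the only care needed is to recall that meets in $\Lattice{F}_n$ correspond to intersections of general asynchronous graphs. This makes "lower bound of each implies below the meet" immediate: $\GeneralAsynchronous(\Trapping{f}) \subseteq \GeneralAsynchronous(\Trapping{g})$ for all such $g$ yields inclusion in their intersection. As a remark, one could equally index over trapping networks $g = \Trapping{g}$ above $f$, using \eqref{equation:trapping1} and \eqref{equation:trapping3}, but the statement as given requires only \eqref{equation:trapping2} and reflexivity.
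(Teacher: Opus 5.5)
Your proof is correct, but it takes a different route from the paper's. The paper simply invokes a general lattice-theoretic result of Cohn applied to $\Lattice{F}_n$: the closed elements of a closure operator form a closure system, and the closure of $f$ is the meet of all closed elements above $f$. This implicitly identifies $\{ \Trapping{g} : f \sqsubseteq g \}$ with the set of trapping networks above $f$, and that identification uses extensivity and idempotence.

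You instead observe directly that $\Trapping{f}$ belongs to the indexed family, and that by monotonicity it lies below every member. So it is the minimum of the family, and a minimum is trivially the meet. This is more elementary and self-contained. It shows the identity holds for any monotone operator $\varphi$ on any poset, since $\varphi(f) = \min\{ \varphi(g) : f \sqsubseteq g \}$.

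What the citation buys is the structural picture: the trapping networks form a closure system (Moore family) in $\Lattice{F}_n$, and $\Trapping{f}$ is the least trapping network above $f$. This is the informative reading of the corollary, and it is exactly your closing remark about reindexing over trapping networks via extensivity and idempotence.

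Your aside about meets being intersections of general asynchronous graphs is harmless but unnecessary. ``Below every element implies below the meet'' is just the definition of a greatest lower bound.
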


\begin{proof}
Immediately follows from \cite[Theorem 1.1]{Coh65} applied to $\Lattice{F}_n$.
\end{proof}

\subsection{Trapping networks} \label{subsection:trapping_networks}

We say a network is \Define{trapping} if its general asynchronous graph is transitive. Denote the set of all trapping networks in $\Functions(n)$ by $\Trapping{\Functions}(n)$. We now provide a list of equivalent definitions of trapping networks.

Many proofs of our results will be broken down into smaller proofs of the form ``Property $A$ $\implies$ Property $B$''. For all such proofs, we omit the introductory sentence: ``Let $f \in \Functions( n )$ satisfy $A$.''

\begin{theorem}[Alternate definitions of trapping networks] \label{theorem:trapping_networks}
For all $f \in \Functions(n)$, the following are equivalent:
\begin{enumerate}
	\item \label{item:g_trapping}
    $f$ is trapping, i.e. $\GeneralAsynchronous( f )$ is transitive,

    \item \label{item:delta(y,g(y))}
    $[ y, f(y) ] \subseteq [ x, f(x) ]$ for all $x \in \B^n$ and $y \in [x, f(x) ]$,
    
    \item \label{item:trapping_interval=Tf}
    $[x, f(x)] = T_f( x )$ for all $x \in \B^n$,
    
    \item \label{item:g=gt}
    $f = \Trapping{f}$,
	
	\item \label{item:g=ft}
    $f = \Trapping{g}$ for some $g \in \Functions( n )$,

    \item \label{item:TG(g)=GA(g)}
    $\TrappingGraph( f ) = \GeneralAsynchronous( f )$.

    \item \label{item:trapping_ST}
    $f^{(S,T)} \sqsubseteq f^{(S \cup T)}$ for all $S, T \subseteq [n]$.

\end{enumerate}
\end{theorem}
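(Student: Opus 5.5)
I would prove the equivalences along the chain \ref{item:g_trapping} $\Leftrightarrow$ \ref{item:delta(y,g(y))} $\Rightarrow$ \ref{item:trapping_interval=Tf} $\Rightarrow$ \ref{item:g=gt} $\Rightarrow$ \ref{item:g=ft} $\Rightarrow$ \ref{item:g_trapping}. I would then attach \ref{item:TG(g)=GA(g)} and \ref{item:trapping_ST} to this chain separately.

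The equivalence \ref{item:g_trapping} $\Leftrightarrow$ \ref{item:delta(y,g(y))} is immediate from $N^{out}(\GeneralAsynchronous(f);x) = [x,f(x)]$. Transitivity says exactly that every out-neighbour $y$ of $x$ has its out-neighbourhood $[y,f(y)]$ inside that of $x$.

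For \ref{item:delta(y,g(y))} $\Rightarrow$ \ref{item:trapping_interval=Tf}, I would argue as follows.
\begin{itemize}
\item The interval $[x,f(x)]$ contains $x$.
\item It is a trapspace, since for every $y$ in it, $f(y) \in [y,f(y)] \subseteq [x,f(x)]$.
\item Hence $T_f(x) \subseteq [x,f(x)]$.
\item The reverse inclusion $[x,f(x)] \subseteq T_f(x)$ was noted in the preliminaries.
\end{itemize}

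The remaining steps of the chain are short.
\begin{itemize}
\item \ref{item:trapping_interval=Tf} $\Rightarrow$ \ref{item:g=gt}: by definition $T_f(x) = [x,\Trapping{f}(x)]$, and a subcube together with one of its points determines the opposite point, so $f(x) = \Trapping{f}(x)$.
\item \ref{item:g=gt} $\Rightarrow$ \ref{item:g=ft}: take $g = f$.
\item \ref{item:g=ft} $\Rightarrow$ \ref{item:g_trapping}: we have $\GeneralAsynchronous(f) = \GeneralAsynchronous(\Trapping{g}) = \TrappingGraph(g)$, which was shown to be transitive.
\end{itemize}

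For \ref{item:TG(g)=GA(g)}, note that it is equivalent to \ref{item:trapping_interval=Tf}, because the out-neighbourhoods of $\TrappingGraph(f)$ and $\GeneralAsynchronous(f)$ are $T_f(x)$ and $[x,f(x)]$ respectively.

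The step needing genuine work is \ref{item:trapping_ST}, which I would link to \ref{item:delta(y,g(y))} in both directions.

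For \ref{item:delta(y,g(y))} $\Rightarrow$ \ref{item:trapping_ST}, fix $x$ and let $y = f^{(S)}(x) \in [x,f(x)]$ and $z = f^{(S,T)}(x) = f^{(T)}(y) \in [y,f(y)] \subseteq [x,f(x)]$. We need $\Delta(x,z) \subseteq \Delta(x, f^{(S\cup T)}(x)) = \Delta(x,f(x)) \cap (S\cup T)$. Since $z \in [x,f(x)]$, we get $\Delta(x,z) \subseteq \Delta(x,f(x))$. Moreover $z$ agrees with $x$ outside $S \cup T$. This gives the inclusion.

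For \ref{item:trapping_ST} $\Rightarrow$ \ref{item:delta(y,g(y))}, take $y \in [x,f(x)]$ and set $S = \Delta(x,y)$, so that $y = f^{(S)}(x)$. Take $T = [n]$. Then $f^{(S,[n])}(x) = f(y)$ and $f^{(S\cup[n])} = f$, so \ref{item:trapping_ST} gives $f(y) \in [x,f(x)]$. Since both $y$ and $f(y)$ lie in the subcube $[x,f(x)]$, we get $[y,f(y)] \subseteq [x,f(x)]$.

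I expect the main care to go into this last equivalence, in particular choosing $T=[n]$ and tracking the positions in $S\cup T$; everything else is unpacking definitions.
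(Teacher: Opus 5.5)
Your proof is correct and follows essentially the same route as the paper. The differences are minor. You close the cycle with $5 \Rightarrow 1$, using transitivity of $\TrappingGraph(g) = \GeneralAsynchronous(\Trapping{g})$, whereas the paper proves $5 \Rightarrow 4$ via idempotence of the trapping closure (Proposition~\ref{proposition:trapping_closure}). You also tie item 7 to item 2 in both directions, where the paper proves $3 \Rightarrow 7 \Rightarrow 1$. Your reverse direction uses only $T = [n]$, so it shows in passing that the weaker condition $f \circ f^{(S)} \sqsubseteq f$ for all $S \subseteq [n]$ already characterises trapping networks.
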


\begin{proof}
$\ref{item:g_trapping} \iff \ref{item:delta(y,g(y))}$. By definition.

$\ref{item:delta(y,g(y))} \implies \ref{item:trapping_interval=Tf}$. If $f(y) \in [x, f(x)]$ for all $y \in [x, f(x)]$, then $[x, f(x)]$ is a trapspace. Since $T_f( x )$ is the smallest trapspace that contains $x$, and $[x, f(x)] \subseteq T_f( x )$, we obtain $[x, f(x)] = T_f( x )$.

$\ref{item:trapping_interval=Tf} \implies \ref{item:delta(y,g(y))}$. If $[x, f(x)]$ is a trapspace, then $f(y) \in [x, f(x)]$ for all $y \in [x, f(x)]$.

$\ref{item:trapping_interval=Tf} \iff \ref{item:g=gt}$. By definition.

$\ref{item:g=gt} \implies \ref{item:g=ft}$. Trivial.

$\ref{item:g=ft} \implies \ref{item:g=gt}$. Follows directly from \eqref{equation:trapping3} in Proposition \ref{proposition:trapping_closure}. 

$\ref{item:g=gt} \implies \ref{item:TG(g)=GA(g)}$. If $f = \Trapping{f}$, then $\GeneralAsynchronous( f ) = \GeneralAsynchronous( \Trapping{f} ) = \TrappingGraph( f )$.

$\ref{item:TG(g)=GA(g)} \implies \ref{item:g_trapping}$. If $\GeneralAsynchronous( f ) = \TrappingGraph( f )$, then $\GeneralAsynchronous( f )$ is indeed transitive.

$\ref{item:trapping_interval=Tf} \implies \ref{item:trapping_ST}$. Suppose $x \to z$ in $\GeneralAsynchronous( f^{(S,T)} )$. Denoting $s = f^{(S)}( x )$ and $t = f^{(T)}(s)$, we have $s \in [x, f(x)] = T_f( x )$, $t \in T_f( s ) \subseteq T_f( x )$, and $z \in [x, t] \subseteq T_f( x ) = N^{out}( \GeneralAsynchronous(f) ; x )$. Since $\Delta( x, z ) \subseteq \Delta(x, t) \subseteq S \cup T$, we obtain $z \in N^{out}( \GeneralAsynchronous( f^{(S \cup T)} ) ; x )$. Thus, $f^{( S,T )} \sqsubseteq f^{( S \cup T )}$.

$\ref{item:trapping_ST} \implies \ref{item:g_trapping}$. Let $x \to y \to z$ in $\GeneralAsynchronous( f )$, so that $y = f^{(S)}( x )$ and $z = f^{(T)} (y) = f^{(S, T)} (x)$ for some $S, T \subseteq [n]$. We obtain $x \to z$ in $\GeneralAsynchronous( f^{(S,T)} ) \subseteq \GeneralAsynchronous( f^{(S \cup T)} ) \subseteq \GeneralAsynchronous( f )$. Therefore, $\GeneralAsynchronous( f )$ is transitive.
\end{proof}

Theorem \ref{theorem:trapping_networks} yields the following corollary.

\begin{corollary} \label{corollary:TG(f)=GA(fP)}
For all $f \in \Functions(n)$,
\[
    \TrappingGraph(f) = \GeneralAsynchronous( \Trapping{f} ) = \TrappingGraph( \Trapping{f} ).
\]
\end{corollary}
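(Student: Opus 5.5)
The plan is to obtain both equalities directly from the definitions together with Proposition \ref{proposition:trapping_closure} and Theorem \ref{theorem:trapping_networks}, with essentially no new computation.

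\emph{First equality.} $\TrappingGraph(f) = \GeneralAsynchronous(\Trapping{f})$ holds by construction. For every $x$, the out-neighbourhood of $x$ in $\TrappingGraph(f)$ is $T_f(x)$. By the definition $\Trapping{f}(x) = T_f(x) - x$, we have $T_f(x) = [x, \Trapping{f}(x)]$, and this is the out-neighbourhood of $x$ in $\GeneralAsynchronous(\Trapping{f})$. Two graphs on $\B^n$ with identical out-neighbourhoods are equal, which settles this equality; it was already noted when the trapping closure was introduced.

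\emph{Second equality.} Set $g = \Trapping{f}$. By \eqref{equation:trapping3} in Proposition \ref{proposition:trapping_closure}, $g = \Trapping{g}$, so $g$ satisfies item \ref{item:g=gt} of Theorem \ref{theorem:trapping_networks}. The theorem then gives item \ref{item:TG(g)=GA(g)} for $g$, namely $\TrappingGraph(g) = \GeneralAsynchronous(g)$. Equivalently, the first equality applied to $g$ gives $\TrappingGraph(g) = \GeneralAsynchronous(\Trapping{g}) = \GeneralAsynchronous(g)$. Chaining with the first equality yields
\[
\TrappingGraph(f) = \GeneralAsynchronous(\Trapping{f}) = \TrappingGraph(\Trapping{f}).
\]

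There is no real obstacle. The only point needing care is the idempotence $\Trapping{(\Trapping{f})} = \Trapping{f}$, and this is supplied by Proposition \ref{proposition:trapping_closure}. Alternatively, one could observe that its proof shows $T_{\Trapping{f}}(x) = T_f(x)$ for all $x$, which gives the second equality at once.
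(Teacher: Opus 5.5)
Your proof is correct and matches the paper's intended argument. The first equality is the defining property $T_f(x) = [x, \Trapping{f}(x)]$ of the trapping closure. The second follows from Theorem \ref{theorem:trapping_networks} applied to $\Trapping{f}$, which is trapping by idempotence of the closure. The paper states the corollary without proof as a consequence of that theorem, and your write-up supplies exactly those two steps.
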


We can also classify the trapping graphs as the transitive general asynchronous graphs. 

\begin{corollary} \label{corollary:classification_trapping_graphs}
Let $\Gamma$ be a graph on $\B^n$. Then $\Gamma = \TrappingGraph(f)$ for some $f \in \Functions(n)$ if and only if $\Gamma$ is reflexive transitive and all out-neighbourhoods are subcubes.
\end{corollary}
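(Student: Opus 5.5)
The plan is to prove both directions by combining Corollary \ref{corollary:TG(f)=GA(fP)} with Proposition \ref{proposition:general_asynchronous_graph} and Theorem \ref{theorem:trapping_networks}.

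\emph{Necessity.} Suppose $\Gamma = \TrappingGraph(f)$. By Corollary \ref{corollary:TG(f)=GA(fP)}, $\Gamma = \GeneralAsynchronous(\Trapping{f})$. Proposition \ref{proposition:general_asynchronous_graph} then shows that $\Gamma$ is reflexive and that every out-neighbourhood is a subcube. Transitivity was already noted right after the definition of the trapping graph: if $y \in T_f(x)$, then $T_f(y) \subseteq T_f(x)$. Alternatively, $\Trapping{f}$ is trapping by item \ref{item:g=ft} of Theorem \ref{theorem:trapping_networks}, so its general asynchronous graph is transitive.

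\emph{Sufficiency.} Suppose $\Gamma$ is reflexive and transitive, and all its out-neighbourhoods are subcubes. By Proposition \ref{proposition:general_asynchronous_graph}, $\Gamma = \GeneralAsynchronous(g)$ for the network $g(x) = N^{out}(\Gamma;x) - x$. Since $\Gamma$ is transitive, $g$ is trapping in the sense of item \ref{item:g_trapping} of Theorem \ref{theorem:trapping_networks}. Hence item \ref{item:TG(g)=GA(g)} of that theorem gives $\TrappingGraph(g) = \GeneralAsynchronous(g) = \Gamma$, and we may take $f = g$.

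There is no real obstacle; the argument is just a matter of chaining the earlier results. The only point needing care is to use the full equivalence in Theorem \ref{theorem:trapping_networks}, going from transitivity to $\TrappingGraph = \GeneralAsynchronous$, rather than proving directly that each interval $[x,g(x)]$ is a trapspace. That direct route is also easy, since transitivity shows $[x,g(x)]$ is closed under $g$.
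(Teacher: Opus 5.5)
Your proof is correct and essentially matches the paper's. Both arguments use Proposition~\ref{proposition:general_asynchronous_graph} to identify reflexive graphs with subcube out-neighbourhoods as general asynchronous graphs, then show that trapping graphs are exactly the transitive ones. The paper gets the latter from Corollary~\ref{corollary:TG(f)=GA(fP)}; you get it from that corollary plus item~\ref{item:TG(g)=GA(g)} of Theorem~\ref{theorem:trapping_networks}, which amounts to the same fact.
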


\begin{proof}
By Proposition \ref{proposition:general_asynchronous_graph}, $\Gamma$ is reflexive transitive and all out-neighbourhoods are subcubes if and only if it is a transitive general asynchronous graph, i.e. $\Gamma = \GeneralAsynchronous(g)$ for some trapping network $g$. By Corollary \ref{corollary:TG(f)=GA(fP)}, this is equivalent to $\Gamma = \TrappingGraph(f)$ for some network $f$.
\end{proof}

Trapping graphs form a rich class of graphs. For instance, any $X \subseteq \B^n$ can appear as an initial strong component of some trapping graph (namely, for $f(x) = \neg x$ if $x \in X$ and $f(x) = x$ otherwise). Note, however, that if $\TrappingGraph(f)$ has two distinct strong components $S, T$ with $S \to T$, then $[T] \subset [S]$.

\section{Collections of (principal) trapspaces} \label{section:classification_trapspaces}

\subsection{Boolean networks with the same collection of (principal) trapspaces}

We begin this section with a characterisation of networks that have the same collection of trapspaces. In particular, Theorem \ref{theorem:T(f)=T(g)} below shows that trapping networks are a canonical form for networks when studying their trapspaces. Recall that the collection of all trapspaces of $f$ is denoted by $\Trapspaces(f)$, while the collection of all principal trapspaces of $f$ is denoted by $\PrincipalTrapspaces(f)$. 


\begin{theorem}[Trapspace equivalent networks] \label{theorem:T(f)=T(g)}
Let $f, g \in \Functions(n)$. The following are equivalent:
\begin{enumerate}
    \item \label{item:T(f)=T(g)}
    $f$ and $g$ have the same collection of principal trapspaces, i.e. 
    $\PrincipalTrapspaces(f) = \PrincipalTrapspaces(g)$;

    \item \label{item:tau(f)=tau(g)}
    $f$ and $g$ have the same collection of trapspaces, i.e. 
    $\Trapspaces(f) = \Trapspaces(g)$;

    \item \label{item:Tf(x)=Tg(x)}
    $f$ and $g$ have the same principal trapspaces pointwise, i.e. 
    $T_f(x) = T_g(x)$ for all $x \in \B^n$;

    \item \label{item:TG(f)=TG(g)}
    $f$ and $g$ have the same trapping graph, i.e. 
    $\TrappingGraph(f) = \TrappingGraph(g)$;

    \item \label{item:fT=gT}
    $f$ and $g$ have the same trapping closure, i.e. 
    $\Trapping{f} = \Trapping{g}$.
\end{enumerate}
\end{theorem}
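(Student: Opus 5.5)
We will organise the proof as a cycle of equivalences, putting the easy ones first. Items \ref{item:Tf(x)=Tg(x)}, \ref{item:TG(f)=TG(g)} and \ref{item:fT=gT} are equivalent by definition. The out-neighbourhood of $x$ in $\TrappingGraph(f)$ is exactly $T_f(x)$, so equal trapping graphs means $T_f(x)=T_g(x)$ pointwise. Likewise $\Trapping{f}(x) = T_f(x) - x$ and $T_f(x) = [x,\Trapping{f}(x)]$, so the trapping closure and the map $x \mapsto T_f(x)$ determine each other. It therefore remains to connect these three items with \ref{item:T(f)=T(g)} and \ref{item:tau(f)=tau(g)}.

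For \ref{item:tau(f)=tau(g)} $\implies$ \ref{item:Tf(x)=Tg(x)}, we use that $T_f(x)$ is the smallest member of $\Trapspaces(f)$ containing $x$. This depends only on the collection $\Trapspaces(f)$, so equal collections give equal principal trapspaces pointwise. For \ref{item:Tf(x)=Tg(x)} $\implies$ \ref{item:tau(f)=tau(g)}, we show that a subcube $X$ is a trapspace of $f$ if and only if $T_f(x) \subseteq X$ for all $x \in X$. If $X$ is a trapspace, this holds by minimality of $T_f(x)$. Conversely, if $T_f(x) \subseteq X$ for all $x \in X$, then $f(x) \in [x,f(x)] \subseteq T_f(x) \subseteq X$, so $f(X) \subseteq X$. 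In particular, $\Trapspaces(f)$ is exactly the set of unions of principal trapspaces that happen to be subcubes. This is determined by the map $x \mapsto T_f(x)$, and in fact by the collection $\PrincipalTrapspaces(f)$ alone. Finally, \ref{item:Tf(x)=Tg(x)} $\implies$ \ref{item:T(f)=T(g)} is immediate.

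The remaining implication, \ref{item:T(f)=T(g)} $\implies$ \ref{item:tau(f)=tau(g)}, is the main obstacle. The difficulty is that $\PrincipalTrapspaces(f)=\PrincipalTrapspaces(g)$ only says the sets of principal subcubes agree, not that each $x$ is assigned the same one. We plan to use the observation above: $X$ is a trapspace of $f$ iff $X$ is a subcube that is a union of members of $\PrincipalTrapspaces(f)$. The backward direction holds because each $P \in \PrincipalTrapspaces(f)$ is a trapspace, and so $f(X) = \bigcup f(P) \subseteq X$. The forward direction holds because $X = \bigcup_{x\in X} T_f(x)$. Hence $\Trapspaces(f)$ is a function of $\PrincipalTrapspaces(f)$, and equal principal collections give equal trapspace collections.

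Then \ref{item:tau(f)=tau(g)} $\implies$ \ref{item:Tf(x)=Tg(x)} closes the cycle. The only point needing care is this step: we must check that "union of principal trapspaces which is a subcube" characterises trapspaces without reference to the particular point-to-trapspace assignment, which the invariance argument above does.
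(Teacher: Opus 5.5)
Your proof is correct. Its overall structure matches the paper's:
\begin{itemize}
\item items 3, 4, 5 are equivalent by definition;
\item item 3 $\implies$ item 2 rests on the same observation the paper uses, namely that a subcube $X$ is a trapspace of $f$ iff $X = \bigcup_{x \in X} T_f(x)$.
\end{itemize}

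The genuine difference is how you get out of item 1.

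The paper goes directly from item 1 to item 3 by minimality. Since $T_g(x) \in \PrincipalTrapspaces(g) = \PrincipalTrapspaces(f)$, it is a trapspace of $f$ containing $x$, so $T_f(x) \subseteq T_g(x)$. The reverse inclusion follows by symmetry. This settles in one line the obstacle you flagged, that equal collections might assign principal trapspaces to points differently. The assignment is forced, because each $T_g(x)$ competes in the minimisation that defines $T_f(x)$, and vice versa.

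You instead prove item 1 $\implies$ item 2 by showing that $\Trapspaces(f)$ is exactly the set of subcubes that are unions of members of $\PrincipalTrapspaces(f)$, i.e.\ $\Trapspaces(f) = \lambda(\PrincipalTrapspaces(f))$. You then recover item 3 from item 2 via minimality.

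Both routes rest on the same fact: principal trapspaces of $f$ are trapspaces of $f$. The paper's route is shorter. Yours proves the stronger structural identity $\Trapspaces(f) = \lambda(\PrincipalTrapspaces(f))$ along the way. The paper only uses that identity later, without separate proof, when it shows $\lambda_{\PrincipalTrapspaces}(\mathcal{Q}) = \Trapspaces(F_{\PrincipalTrapspaces}(\mathcal{Q}))$ in the three-way equivalence.
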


\begin{proof}
$\ref{item:T(f)=T(g)} \implies \ref{item:Tf(x)=Tg(x)}$. Let $x \in \B^n$. On the one hand, since $T_f(x)$ and $T_g(x)$ are trapspaces of $f$ containing $x$, we have $T_f(x) \subseteq T_g(x)$. We similarly obtain $T_g(x) \subseteq T_f(x)$, and hence $T_f(x) = T_g(x)$.

$\ref{item:Tf(x)=Tg(x)} \implies \ref{item:tau(f)=tau(g)}$. We have
\[
    A \in \Trapspaces(f) \iff A = \bigcup \{ T_f(x) : x \in A \} \iff A = \bigcup \{ T_g(x) : x \in A \} \iff A \in \Trapspaces(g).
\]

$\ref{item:tau(f)=tau(g)} \implies \ref{item:T(f)=T(g)}$. Trivial.

$\ref{item:Tf(x)=Tg(x)} \iff \ref{item:TG(f)=TG(g)} \iff \ref{item:fT=gT}$. Immediate from the definitions of $\TrappingGraph(f)$ and $\Trapping{f}$.
\end{proof}

\begin{corollary} \label{corollary:equal_collections_of_trapspaces}
For any network $f$, $\PrincipalTrapspaces(f) = \PrincipalTrapspaces( \Trapping{f} )$ and $\Trapspaces( f ) = \Trapspaces( \Trapping{f} )$.
\end{corollary}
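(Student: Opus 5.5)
The plan is to obtain the corollary by applying Theorem \ref{theorem:T(f)=T(g)} to the pair of networks $f$ and $g = \Trapping{f}$. That theorem states that two networks have the same collection of principal trapspaces and the same collection of trapspaces precisely when they have the same trapping closure, i.e. item \ref{item:fT=gT}. So the only thing to check is that $\Trapping{f}$ and $f$ satisfy this condition, namely $\Trapping{(\Trapping{f})} = \Trapping{f}$.

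This identity is exactly the idempotence property \eqref{equation:trapping3} of the closure operator in Proposition \ref{proposition:trapping_closure}. Once it is invoked, item \ref{item:fT=gT} holds for the pair $(f, \Trapping{f})$. The implications \ref{item:fT=gT} $\implies$ \ref{item:Tf(x)=Tg(x)} $\implies$ \ref{item:tau(f)=tau(g)} and \ref{item:tau(f)=tau(g)} $\implies$ \ref{item:T(f)=T(g)} then give $\Trapspaces(f) = \Trapspaces(\Trapping{f})$ and $\PrincipalTrapspaces(f) = \PrincipalTrapspaces(\Trapping{f})$, as required.

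There is no real obstacle here. The only point needing care is to cite the idempotence property rather than re-derive it; its proof already showed $T_{\Trapping{f}}(x) = T_f(x)$ pointwise. As an alternative, one could quote that pointwise equality directly as item \ref{item:Tf(x)=Tg(x)} and skip item \ref{item:fT=gT} altogether.
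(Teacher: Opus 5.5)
Your proposal is correct and takes the same route as the paper, which states the corollary without a separate proof as an immediate consequence of Theorem \ref{theorem:T(f)=T(g)} applied to the pair $(f, \Trapping{f})$, with the equality of trapping closures supplied by the idempotence property \eqref{equation:trapping3}. Your alternative of quoting the pointwise equality $T_{\Trapping{f}}(x) = T_f(x)$, established in the proof of \eqref{equation:trapping3}, is equally valid.
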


\subsection{Classification of collections of (principal) trapspaces} \label{subsection:classification_theorem}

\paragraph{Intuition for the classification}

In this section, we shall classify the collections of trapspaces and the collections of principal trapspaces of networks. We shall moreover illustrate a three-way equivalence amongst collections of principal trapspaces, collections of trapspaces, and trapping closures. This equivalence is similar to the situation for pre-orders on sets.

A \Define{pre-order} on a set $\Omega$ is a reflexive transitive binary relation on $\Omega$. If $R$ is a pre-order on $\Omega$, then any set of the form $S^\downarrow = \{ y \in \Omega : \exists s \in S, (s,y) \in R \}$ for some $S \subseteq \Omega$ is an \Define{ideal} of $R$; a \Define{principal ideal} of $R$ is any set of the form $x^\downarrow = \{ y \in \Omega, (x,y) \in R \}$ for some $x \in \Omega$. Since $R = \bigcup_{x \in \Omega, y \in x^\downarrow} (x,y)$, we see that $R$ can be reconstructed from its collection of principal ideals; the same can be said for its collection of ideals. It is well known that a collection of subsets of $\Omega$ is the collection of ideals of a pre-order if and only if it is closed under arbitrary unions and intersections; similarly one can classify the collections of principal ideals of pre-orders. 
Therefore, for the set $\Omega$, there is a three-way equivalence between a pre-order $R$ on $\Omega$, its collection of ideals, and its collection of principal ideals.

In this paper, we are interested in $\Omega = \B^n$, but we do not consider any possible (principal) ideal. Let $f \in \Functions(n)$ be a network. The reachability relation $R$ given by $R = \{ (x,y) :  x \to_{ \GeneralAsynchronous(f) } \dots \to_{ \GeneralAsynchronous(f) } y \}$ is a pre-order on $\B^n$. A subcube is an ideal of $R$ if and only if it is a trapspace of $f$; it is a principal ideal of $R$ if and only if it is a principal trapspace of $f$. The relation $R'$ given by $y \in T_f(x)$ is also a pre-order, described by the trapping graph ($(x,y) \in R'$ if and only if $(x,y)$ is an edge of $\TrappingGraph(f)$). This is the smallest pre-order such that all its principal ideals are principal trapspaces of $f$. 
Therefore, the main result is a three-way equivalence between a trapping network, its collection of trapspaces, and its collection of principal trapspaces.


\paragraph{The classification theorem}

Recall that $\Collections(n)$ denotes the set of all collections of subcubes of $\B^n$. Say a collection $\mathcal{J} \in \Collections( n )$ of subcubes is \Define{ideal} if it is the collection of trapspaces of a network. We denote the set of all ideal collections of subcubes of $\B^n$ by $\Collections^\Trapspaces(n)$. Accordingly, say a collection $\mathcal{Q}$ of subcubes is \Define{principal} if it is the collection of principal trapspaces of a network and denote the set of all principal collections of subcubes of $\B^n$ by $\Collections^\PrincipalTrapspaces(n)$. We shall give combinatorial descriptions of ideal and principal collections of subcubes in the sequel.


Define the mapping $F : \Collections(n) \to \Functions(n)$ as follows. Let $\mathcal{A} \in \Collections(n)$ be a collection of subcubes of $\B^n$. For any $x \in \B^n$, denote the intersection of all the subcubes in $\mathcal{A}$ that contain $x$ by
\[
    \mathcal{A}(x) := \bigcap \{ A \in \mathcal{A} : x \in A \},
\]
where $\mathcal{A}(x) = \B^n$ if the intersection is empty.  Then let $F( \mathcal{A} )$ be the network defined by
\[
    F( \mathcal{A} )(x) = \mathcal{A}(x) - x,
\]
or equivalently $\mathcal{A}(x) = [ x, F( \mathcal{A} )(x) ]$, for all $x \in \B^n$. Let $F_\PrincipalTrapspaces$ be the restriction of $F$ to $\Collections^\PrincipalTrapspaces(n)$ and $F_\Trapspaces$ be the restriction of $F$ to $\Collections^\Trapspaces(n)$.

Moreover, define the mappings $\lambda, \mu : \Collections( n ) \to \Collections( n )$ given by 
\[
    \lambda( \mathcal{A} ) = \left\{ \bigcup R, R \subseteq \mathcal{A} : \bigcup R \in \Subcubes( n ) \right\}
\]
and 
\[
    \mu( \mathcal{A} ) = \{ \mathcal{A}( x ) : x \in \B^n \}.
\]
Then let $\lambda_\PrincipalTrapspaces$ be the restriction of $\lambda$ to $\Collections^\PrincipalTrapspaces( n )$ and $\mu_\Trapspaces$ be the restriction of $\mu$ to $\Collections^\Trapspaces( n )$.

\begin{theorem}[Three-way equivalence for collections of trapspaces] \label{theorem:three-way_equivalence}
The diagram on Figure \ref{figure:three-way_equivalence} commutes. More concretely, the following hold.

\begin{enumerate}
    \item \label{item:Q-g_equivalence}
    For all $\mathcal{Q} \in \Collections^\PrincipalTrapspaces(n)$ and all $g \in \Trapping{\Functions}(n)$, we have
    \[
        \PrincipalTrapspaces( F_\PrincipalTrapspaces( \mathcal{Q} ) ) = \mathcal{Q}, \qquad F_\PrincipalTrapspaces( \PrincipalTrapspaces(g) ) = g.
    \]
    
    \item \label{item:J-g_equivalence}
    For all $\mathcal{J} \in \Collections^\Trapspaces(n)$ and all $g \in \Trapping{\Functions}(n)$, we have
    \[
        \Trapspaces( F_\Trapspaces( \mathcal{J} ) ) = \mathcal{J}, \qquad F_\Trapspaces( \Trapspaces(g) ) = g.
    \]
    
    \item \label{item:composition} 
    For all $\mathcal{Q} \in \Collections^\PrincipalTrapspaces( n )$ and all $\mathcal{J} \in \Collections^\Trapspaces( n )$, we have 
    \[
        \lambda_\PrincipalTrapspaces( \mathcal{Q} ) = \Trapspaces( F_\PrincipalTrapspaces( \mathcal{Q} ) ), \qquad \mu_\Trapspaces( \mathcal{J} ) = \PrincipalTrapspaces( F_\Trapspaces( \mathcal{J} ) ).
    \]

    \item \label{item:Q-J_equivalence}
    For all $\mathcal{Q} \in \Collections^\PrincipalTrapspaces( n )$ and all $\mathcal{J} \in \Collections^\Trapspaces( n )$, we have 
    \[
        \mu_\Trapspaces( \lambda_\PrincipalTrapspaces ( \mathcal{Q} ) ) = \mathcal{Q}, \qquad \lambda_\PrincipalTrapspaces( \mu_\Trapspaces( \mathcal{J} ) ) = \mathcal{J}.
    \]
\end{enumerate}

\end{theorem}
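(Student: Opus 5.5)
The plan is to reduce everything to one pointwise fact: for any network $h$ and any collection $\mathcal{A}$ with $\PrincipalTrapspaces(h) \subseteq \mathcal{A} \subseteq \Trapspaces(h)$, we have $\mathcal{A}(x) = T_h(x)$ for all $x \in \B^n$. Indeed, $T_h(x) \in \mathcal{A}$ contains $x$, so $\mathcal{A}(x) \subseteq T_h(x)$. Conversely, every $A \in \mathcal{A}$ containing $x$ is a trapspace of $h$ containing $x$, so $T_h(x) \subseteq A$. Consequently $F(\mathcal{A})(x) = T_h(x) - x = \Trapping{h}(x)$, that is, $F(\mathcal{A}) = \Trapping{h}$. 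I will apply this with $\mathcal{A} = \PrincipalTrapspaces(h)$ and with $\mathcal{A} = \Trapspaces(h)$.

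\textbf{Item \ref{item:Q-g_equivalence}.} Given $\mathcal{Q} \in \Collections^\PrincipalTrapspaces(n)$, choose $h$ with $\mathcal{Q} = \PrincipalTrapspaces(h)$. Then $F_\PrincipalTrapspaces(\mathcal{Q}) = \Trapping{h}$. Corollary \ref{corollary:equal_collections_of_trapspaces} gives $\PrincipalTrapspaces(\Trapping{h}) = \PrincipalTrapspaces(h) = \mathcal{Q}$. For trapping $g$, the pointwise fact gives $F(\PrincipalTrapspaces(g)) = \Trapping{g}$, and Theorem \ref{theorem:trapping_networks} (item \ref{item:g=gt}) gives $\Trapping{g} = g$.

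\textbf{Item \ref{item:J-g_equivalence}.} The argument is the same, using $\mathcal{J} = \Trapspaces(h)$ and the second part of Corollary \ref{corollary:equal_collections_of_trapspaces}. As a by-product, $F_\PrincipalTrapspaces$ and $F_\Trapspaces$ land in $\Trapping{\Functions}(n)$ by Theorem \ref{theorem:trapping_networks} (item \ref{item:g=ft}).

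\textbf{Item \ref{item:composition}, first identity.} Let $\mathcal{Q} = \PrincipalTrapspaces(h)$, so $F_\PrincipalTrapspaces(\mathcal{Q}) = \Trapping{h}$ and $\Trapspaces(\Trapping{h}) = \Trapspaces(h)$. I must show $\lambda(\PrincipalTrapspaces(h)) = \Trapspaces(h)$.
\begin{itemize}
\item If $R \subseteq \PrincipalTrapspaces(h)$ and $\bigcup R$ is a subcube, then $\bigcup R$ is a union of $h$-invariant sets, hence invariant, hence a trapspace.
\item Conversely, a trapspace $A$ equals $\bigcup\{T_h(x) : x \in A\}$, as in the proof of Theorem \ref{theorem:T(f)=T(g)}. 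This union is a subcube, so $A \in \lambda(\PrincipalTrapspaces(h))$.
\end{itemize}
The empty set is the only subtle case. The empty union is not a subcube under the paper's definition with disjoint $S,T$, and it is not a trapspace either, so there is no discrepancy.

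\textbf{Item \ref{item:composition}, second identity.} Let $\mathcal{J} = \Trapspaces(h)$. By the pointwise fact, $\mathcal{J}(x) = T_h(x)$, so $\mu(\mathcal{J}) = \{T_h(x) : x \in \B^n\} = \PrincipalTrapspaces(h)$. Then Corollary \ref{corollary:equal_collections_of_trapspaces} gives $\PrincipalTrapspaces(h) = \PrincipalTrapspaces(F_\Trapspaces(\mathcal{J}))$, since $F_\Trapspaces(\mathcal{J}) = \Trapping{h}$.

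\textbf{Item \ref{item:Q-J_equivalence}.} This follows by composing the previous items. By item \ref{item:composition}, $\lambda_\PrincipalTrapspaces(\mathcal{Q}) = \Trapspaces(F_\PrincipalTrapspaces(\mathcal{Q}))$, which is ideal, so $\mu_\Trapspaces$ applies to it. Then
\[
\mu_\Trapspaces(\lambda_\PrincipalTrapspaces(\mathcal{Q})) = \PrincipalTrapspaces\bigl(F_\Trapspaces(\Trapspaces(g))\bigr) = \PrincipalTrapspaces(g) = \mathcal{Q},
\]
where $g = F_\PrincipalTrapspaces(\mathcal{Q})$ is trapping, and the last two equalities use items \ref{item:J-g_equivalence} and \ref{item:Q-g_equivalence}. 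The other composite is symmetric.

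\textbf{Expected difficulty.} There is no real obstacle; the work is bookkeeping about domains. The one substantive point is the identification $\lambda(\PrincipalTrapspaces(h)) = \Trapspaces(h)$, together with the edge-case check that unions of principal trapspaces which happen to be subcubes are exactly the trapspaces.
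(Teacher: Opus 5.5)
Your proof is correct, but it takes a different and shorter route than the paper.

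The paper first proves intrinsic combinatorial characterisations: principal collections are exactly the pre-principal ones, i.e. those with $\mathcal{Q} = \mu(\mathcal{Q})$, via Lemma \ref{lemma:trapping_collection_of_subcubes}; ideal collections are exactly the pre-ideal ones. It then proves items \ref{item:Q-g_equivalence} and \ref{item:J-g_equivalence} for an arbitrary pre-principal or pre-ideal collection. It checks directly that $F(\mathcal{Q})$ has a transitive general asynchronous graph, reads off $T_g(x) = \mathcal{Q}(x)$ from this, and recovers $\mathcal{Q}$ using $\mathcal{Q} = \mu(\mathcal{Q})$.

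You never leave the world of networks. You pick a witness $h$ with $\mathcal{Q} = \PrincipalTrapspaces(h)$ or $\mathcal{J} = \Trapspaces(h)$. Your sandwich observation then gives $F(\mathcal{A}) = \Trapping{h}$, and Corollary \ref{corollary:equal_collections_of_trapspaces} with Theorem \ref{theorem:trapping_networks} does the rest. Items \ref{item:composition} and \ref{item:Q-J_equivalence} reduce to $\lambda(\PrincipalTrapspaces(h)) = \Trapspaces(h)$ and $\mu(\Trapspaces(h)) = \PrincipalTrapspaces(h)$. You verify both correctly, including the empty-union edge case.

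Your route buys economy: a single unifying observation carries every item, and the identity $F(\mathcal{A}) = \Trapping{h}$ explains why all the maps land in trapping networks.

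It gives up the description of $\Collections^\PrincipalTrapspaces(n)$ and $\Collections^\Trapspaces(n)$ without reference to any network. These are the pre-principal and pre-ideal conditions labelling the boxes in Figure \ref{figure:three-way_equivalence}. That description is what makes the paper's result a genuine classification of collections of (principal) trapspaces, rather than only a statement that the maps between the three sets are mutually inverse. It also makes membership in those sets checkable directly from the collection itself.
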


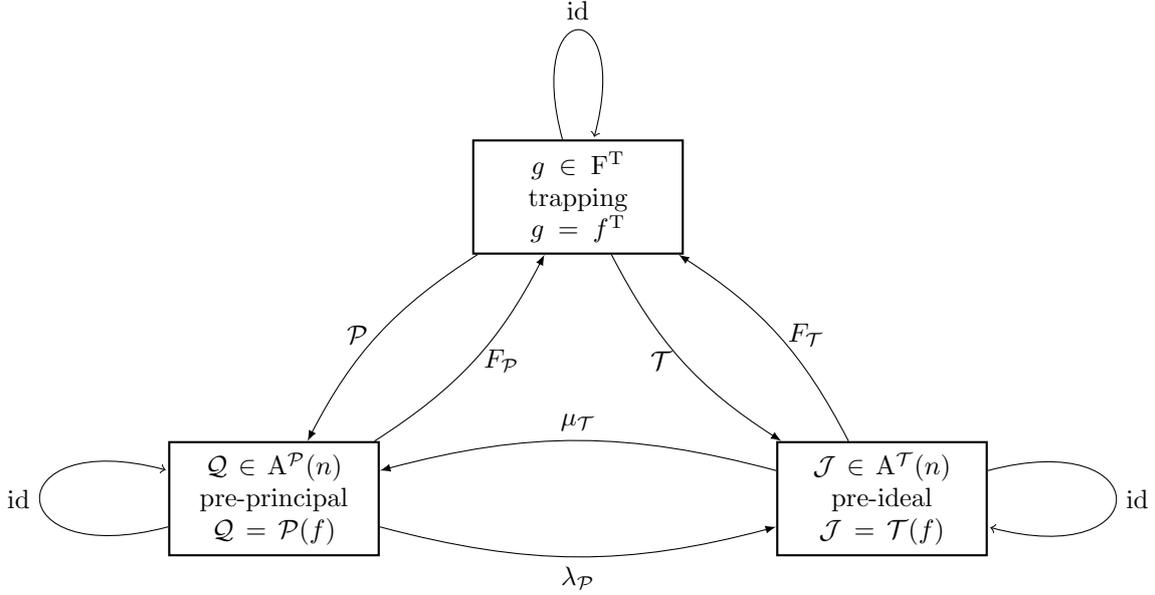
\begin{figure}
    \centering
\begin{tikzpicture}[scale = 4, font=\small]
    \node[draw,thick,minimum width=2cm,minimum height=1.5cm,  text width = 2.5cm, align=center] (g) at (1,1) {$g \in \Trapping{\Functions}$ \\ trapping\\ $g = \Trapping{f}$ };
    \node[draw,thick,minimum width=2cm,minimum height=1.5cm,  text width = 2.5cm, align=center] (Q) at (0,0) {$\mathcal{Q} \in \Collections^\PrincipalTrapspaces( n )$ pre-principal\\ $\mathcal{Q} = \PrincipalTrapspaces(f)$};
    \node[draw,thick,minimum width=2cm,minimum height=1.5cm,  text width = 2.5cm, align=center] (J) at (2,0) {$\mathcal{J} \in \Collections^\Trapspaces( n )$ pre-ideal\\ $\mathcal{J} = \Trapspaces(f)$};

    \draw[-latex] (g) to [bend right=15] node[left] {$\PrincipalTrapspaces$} (Q);
    \draw[-latex] (Q) to [bend right=15] node[right] {$F_\PrincipalTrapspaces$} (g);

    \draw[-latex] (g) to [bend right=15] node[left] {$\Trapspaces$} (J);
    \draw[-latex] (J) to [bend right=15] node[right] {$F_\Trapspaces$} (g);

    \draw[-latex] (Q) to [bend right=15] node[below] {$\lambda_\PrincipalTrapspaces$} (J);
    \draw[-latex] (J) to [bend right=15] node[above] {$\mu_\Trapspaces$} (Q);

    \path[-latex] (g) edge [loop above] node {$\id$} (g);
    \path[-latex] (Q) edge [loop left] node {$\id$} (Q);
    \path[-latex] (J) edge [loop right] node {$\id$} (J);

    
\end{tikzpicture}
    \caption{Three-way equivalence amongst collections of principal trapspaces, collections of trapspaces, and trapping closures.}
    \label{figure:three-way_equivalence}
\end{figure}

The rest of this subsection is devoted to the proof of Theorem \ref{theorem:three-way_equivalence}. We first prove item \ref{item:Q-g_equivalence}, by characterising the principal collections of subcubes.

Let $\mathcal{Q} \in \Collections(n)$ be a collection of subcubes of $\B^n$. We say $\mathcal{Q}$ is \Define{pre-principal} if 
\[
    \mathcal{Q} = \mu( \mathcal{Q} ) = \{ \mathcal{Q}(x) : x \in \B^n \}.
\]
Intuitively, $\mathcal{Q}$ is pre-principal if for any configuration $x$, there exists a smallest subcube in $\mathcal{Q}$ that contains $x$, and conversely for any subcube $A \in \mathcal{Q}$, there is a configuration $x$ for which $A$ is the smallest subcube that contains $x$.

\begin{lemma} \label{lemma:trapping_collection_of_subcubes}
A collection $\mathcal{Q}$ of subcubes of $\B^n$ is pre-principal if and only if the following hold:
\begin{enumerate}
    \item \label{item:bigcup(X)=Bn}
    $\bigcup \mathcal{Q} = \B^n$;

    \item \label{item:bigcup(R)=XcapY}
    for all $A, B \in \mathcal{Q}$, there exists $\mathcal{C} \subseteq \mathcal{Q}$ such that $\bigcup \mathcal{C} = A \cap B$;

    \item \label{item:bigcup(R)neX}
    for all $A \in \mathcal{Q}$ and $\mathcal{C} \subseteq \mathcal{Q}$, $\bigcup \mathcal{C} = A$ implies $A \in \mathcal{C}$.
\end{enumerate}
\end{lemma}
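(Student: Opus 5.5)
We prove both directions directly from the definition $\mathcal{Q} = \mu(\mathcal{Q}) = \{\mathcal{Q}(x) : x \in \B^n\}$, where $\mathcal{Q}(x)$ is the intersection of all members of $\mathcal{Q}$ containing $x$. The convention $\mathcal{Q}(x) = \B^n$ when no member contains $x$ is what ties item \ref{item:bigcup(X)=Bn} into the argument.

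\emph{Necessity.} Assume $\mathcal{Q} = \mu(\mathcal{Q})$.

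First, for each $x$, $\mathcal{Q}(x) \in \mathcal{Q}$. If no member contained $x$, then $\mathcal{Q}(x) = \B^n \in \mathcal{Q}$, which contains $x$, a contradiction. So $x \in \mathcal{Q}(x)$, and $\mathcal{Q}(x)$ is the smallest member of $\mathcal{Q}$ containing $x$. This gives item \ref{item:bigcup(X)=Bn}.

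For item \ref{item:bigcup(R)=XcapY}, take $\mathcal{C} = \{\mathcal{Q}(x) : x \in A \cap B\}$. Each $\mathcal{Q}(x)$ with $x \in A \cap B$ lies in $A \cap B$ by minimality and contains $x$. Hence $\bigcup \mathcal{C} = A \cap B$, and this also covers the empty case $\mathcal{C} = \emptyset$.

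For item \ref{item:bigcup(R)neX}, write $A = \mathcal{Q}(x)$ for some $x$, which is possible since $\mathcal{Q} = \mu(\mathcal{Q})$. If $\bigcup \mathcal{C} = A$, some $C \in \mathcal{C}$ contains $x$. Then $A = \mathcal{Q}(x) \subseteq C \subseteq A$, so $A = C \in \mathcal{C}$.

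\emph{Sufficiency.} Assume items \ref{item:bigcup(X)=Bn}--\ref{item:bigcup(R)neX}. We must show (a) $\mathcal{Q}(x) \in \mathcal{Q}$ for every $x$, and (b) every $A \in \mathcal{Q}$ equals some $\mathcal{Q}(x)$.

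For (a), item \ref{item:bigcup(X)=Bn} ensures some member contains $x$. We claim the family of members containing $x$ has a least element. Take two minimal-by-inclusion members $A, B$ containing $x$. By item \ref{item:bigcup(R)=XcapY}, $A \cap B$ is a union of members of $\mathcal{Q}$, so one of them, $C$, contains $x$ with $C \subseteq A \cap B$. Minimality of $A$ and $B$ forces $A = C = B$. So there is a unique minimal member containing $x$. Every member containing $x$ contains a minimal one (finiteness), hence contains this unique one. Therefore $\mathcal{Q}(x)$ equals it and lies in $\mathcal{Q}$.

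For (b), this is the step I expect to be the crux. Let $A \in \mathcal{Q}$ and set $\mathcal{C} = \{\mathcal{Q}(x) : x \in A\}$. By (a), $x \in \mathcal{Q}(x) \subseteq A$, so $\bigcup \mathcal{C} = A$. Item \ref{item:bigcup(R)neX} then gives $A \in \mathcal{C}$, i.e.\ $A = \mathcal{Q}(x)$ for some $x \in A$.

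Combining (a) and (b) yields $\mathcal{Q} = \mu(\mathcal{Q})$. The only subtle points are using finiteness to get minimal elements in (a) and handling the empty-intersection convention.
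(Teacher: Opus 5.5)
Your proof is correct and follows essentially the same route as the paper. It uses the same witness collections $\{\mathcal{Q}(x) : x \in A \cap B\}$ and $\{\mathcal{Q}(x) : x \in A\}$, and the same uniqueness-of-minimal-members argument via the second property. The only differences are that you argue directly where the paper argues by contradiction, and you make explicit the finiteness step (every member containing $x$ contains a minimal one), which the paper leaves implicit when it concludes $\mathcal{S} = \{\mathcal{Q}(x)\}$.
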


\begin{proof}
Suppose $\mathcal{Q}$ is pre-principal. We prove that it satisfies all three properties.
\begin{enumerate}
    \item 
    We have $x \in \mathcal{Q}(x)$ for all $x \in \B^n$, hence $\bigcup \mathcal{Q} = \B^n$.

    \item
    For all $A, B \in \mathcal{Q}$, $\bigcup\{ \mathcal{Q}(x) : x \in A \cap B \} = A \cap B$.

    \item
    Suppose $\mathcal{C} \subseteq \mathcal{Q}$ with $\bigcup \mathcal{C} = A \in \mathcal{Q}$ while $A \notin \mathcal{C}$. Then for all $x \in A$, there exists $C \in \mathcal{C}$ such that $\mathcal{Q}(x) \subseteq C \subset A$. Therefore, $A \notin \{ \mathcal{Q}(x) : x \in \B^n \}$, which contradicts the fact that $\mathcal{Q}$ is pre-principal.
\end{enumerate}

Conversely, let $\mathcal{Q}$ satisfy all three properties. We first prove that $\mathcal{Q}(x) \in \mathcal{Q}$ for all $x \in \B^n$. Let $x \in \B^n$ and consider the collection 
\[
    \mathcal{S} = \{ A \in \mathcal{Q} : x \in A;  \forall B \subset A, B \in \mathcal{Q}, x \notin B  \}
\]
of minimal subcubes in $\mathcal{Q}$ that contain $x$. By Property \ref{item:bigcup(X)=Bn}, $|\mathcal{S}| \ge 1$. If $|\mathcal{S}| \ge 2$, let $A,B \in \mathcal{S}$, then there exists $\mathcal{C} \subseteq \mathcal{Q}$ such that $\bigcup \mathcal{C} = A \cap B$. As such, there exists $C \in \mathcal{C}$ such that $x \in C$ while $C \subseteq A \cap B \subset A$, which contradicts the fact that $A \in \mathcal{S}$. Therefore, $|\mathcal{S}| = 1$, hence $\mathcal{S} = \{ \mathcal{Q}(x) \}$. 

We now prove that $A \in \{ \mathcal{Q}(x) : x \in \B^n \}$ for all $A  \in \mathcal{Q}$. Let $A \in \mathcal{Q}$, and suppose that $\mathcal{Q}(x) \subset A$ for all $x \in A$. Then $\mathcal{C} := \{ \mathcal{Q}(x) : x \in A \} \subset \mathcal{Q}$ satisfies $\bigcup \mathcal{C} = A$ while $A \notin \mathcal{C}$, which contradicts the third property.
\end{proof}

\begin{lemma} \label{lemma:principal_pre-principal}
A collection of subcubes is principal if and only if it is pre-principal. For all $\mathcal{Q} \in \Collections^\PrincipalTrapspaces(n)$ and all $g \in \Trapping{\Functions}(n)$, we have
\[
    \PrincipalTrapspaces( F_\PrincipalTrapspaces( \mathcal{Q} ) ) = \mathcal{Q}, \qquad F_\PrincipalTrapspaces( \PrincipalTrapspaces(g) ) = g.
\]
\end{lemma}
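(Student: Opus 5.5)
We prove both directions of ``principal $\iff$ pre-principal'' together with the two identities, via the map $F$ and Theorem \ref{theorem:trapping_networks}.

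\textbf{Principal implies pre-principal.} Let $\mathcal{Q} = \PrincipalTrapspaces(f)$ for some $f \in \Functions(n)$. For every $x$, the subcube $T_f(x)$ is a trapspace containing $x$. Any principal trapspace $T_f(y)$ containing $x$ is a trapspace containing $x$, so $T_f(x) \subseteq T_f(y)$. Hence $\mathcal{Q}(x) = T_f(x)$, and therefore
\[
    \mu(\mathcal{Q}) = \{ T_f(x) : x \in \B^n \} = \mathcal{Q}.
\]
This also shows $F(\PrincipalTrapspaces(f))(x) = T_f(x) - x = \Trapping{f}(x)$, i.e. $F(\PrincipalTrapspaces(f)) = \Trapping{f}$. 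If $g$ is trapping, then $\Trapping{g} = g$ by Theorem \ref{theorem:trapping_networks}, which gives $F_\PrincipalTrapspaces(\PrincipalTrapspaces(g)) = g$.

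\textbf{Pre-principal implies principal.} Let $\mathcal{Q}$ be pre-principal and set $h = F(\mathcal{Q})$, so that $[x, h(x)] = \mathcal{Q}(x) \in \mathcal{Q}$ for every $x$. We first show that $h$ is trapping using item \ref{item:delta(y,g(y))} of Theorem \ref{theorem:trapping_networks}. If $y \in \mathcal{Q}(x)$, then $\mathcal{Q}(x)$ is a member of $\mathcal{Q}$ containing $y$, so $\mathcal{Q}(y) \subseteq \mathcal{Q}(x)$; that is, $[y, h(y)] \subseteq [x, h(x)]$. Hence $h$ is trapping, and by item \ref{item:trapping_interval=Tf} we get $T_h(x) = [x, h(x)] = \mathcal{Q}(x)$. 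It follows that
\[
    \PrincipalTrapspaces(h) = \{ \mathcal{Q}(x) : x \in \B^n \} = \mu(\mathcal{Q}) = \mathcal{Q}.
\]
So $\mathcal{Q}$ is principal, and $\PrincipalTrapspaces(F_\PrincipalTrapspaces(\mathcal{Q})) = \mathcal{Q}$.

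\textbf{Main obstacle.} The only delicate point is the identification $\mathcal{Q}(x) = T_f(x)$ in the first direction. It requires that the intersection defining $\mathcal{Q}(x)$ ranges over a nonempty family, which holds because $x \in T_f(x)$. It also requires that this intersection equals its smallest member, which is the minimality argument above. Lemma \ref{lemma:trapping_collection_of_subcubes} is not needed here, since everything can be phrased directly in terms of the condition $\mu(\mathcal{Q}) = \mathcal{Q}$.
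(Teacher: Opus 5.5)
Your proof is correct. Three of its components match the paper's proof essentially step by step:
\begin{itemize}
\item trapping-ness of $F(\mathcal{Q})$ via $\mathcal{Q}(y) \subseteq \mathcal{Q}(x)$;
\item the identification $T_h(x) = \mathcal{Q}(x)$, which gives $\PrincipalTrapspaces(F(\mathcal{Q})) = \mu(\mathcal{Q}) = \mathcal{Q}$;
\item $F(\PrincipalTrapspaces(g)) = g$ for trapping $g$.
\end{itemize}

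The genuine difference is in the direction ``principal $\Rightarrow$ pre-principal''. The paper first reduces to a trapping network $g$, using $\PrincipalTrapspaces(f) = \PrincipalTrapspaces(\Trapping{f})$. It then checks the three combinatorial conditions of Lemma \ref{lemma:trapping_collection_of_subcubes} (covering, $A \cap B$ is a union of members, no member is a union of other members) and invokes that lemma. You instead observe directly, for an arbitrary $f$, that $T_f(x)$ is the least member of $\PrincipalTrapspaces(f)$ containing $x$, so $\mathcal{Q}(x) = T_f(x)$. This gives $\mu(\mathcal{Q}) = \mathcal{Q}$ at once. It needs neither the reduction to trapping networks nor Lemma \ref{lemma:trapping_collection_of_subcubes}. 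It also yields $F(\PrincipalTrapspaces(f)) = \Trapping{f}$ for every $f$, which the paper only states later, at the start of the subsection on minimal trapspaces.

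The paper's own final step uses exactly this identity $T_g(x) = \bigcap\{A \in \PrincipalTrapspaces(g) : x \in A\}$. Your route therefore avoids proving the same fact twice in different guises, and is the more economical one. What the paper's route buys is an explicit link between principal collections and the intrinsic three-property description of Lemma \ref{lemma:trapping_collection_of_subcubes}, which does not refer to the intersections $\mathcal{Q}(x)$. That link is a consequence of that lemma alone and is not needed for the statement at hand.
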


\begin{proof}
Firstly, we prove that $F_\PrincipalTrapspaces( \mathcal{Q} )$ is trapping. Placing ourselves in the graph $\Gamma = \GeneralAsynchronous( F_\PrincipalTrapspaces( \mathcal{Q} ) )$, if $y \in N^{out} ( \Gamma; x ) = \mathcal{Q}(x)$, then $N^{out}( \Gamma; y ) = \mathcal{Q}(y) \subseteq \mathcal{Q}(x) = N^{out}( \Gamma; x )$, and hence $\Gamma$ is transitive.

Secondly, we prove that $\PrincipalTrapspaces(g)$ is pre-principal by verifying that it satisfies the three properties of Lemma \ref{lemma:trapping_collection_of_subcubes}. First, since $x \in T_g(x)$ for all $x \in \B^n$, we have $\bigcup \PrincipalTrapspaces(g)  = \B^n$. Second, for all $A, B \in \PrincipalTrapspaces(g)$, the collection $\mathcal{C} = \{ T_g(x) : x \in A \cap B \}$ satisfies $\mathcal{C} \subseteq \PrincipalTrapspaces(g)$ and $\bigcup \mathcal{C} = A \cap B$. Third, if there exists $x \in \B^n$ and $\mathcal{C} \subseteq \PrincipalTrapspaces(g) \setminus \{ T_g(x) \}$ such that $\bigcup \mathcal{C} = T_g(x)$, then there exists $C \in \mathcal{C}$ such that $x \in C \subset T_g(x)$ and hence $T_g(x) \subseteq C \subset T_g(x)$, which is the desired contradiction.

Since any principal collection of subcubes is of the form $\PrincipalTrapspaces(g)$ for some trapping network $g$, we have just shown that any principal collection of subcubes is pre-principal.

Thirdly, we prove that $\PrincipalTrapspaces( F_\PrincipalTrapspaces( \mathcal{Q} ) ) = \mathcal{Q}$ for any pre-principal collection $\mathcal{Q}$ of subcubes of $\B^n$. Let $g = F( \mathcal{Q} )$. Since $g$ is trapping, we have for all $x \in \B^n$
\[
    T_g(x) = N^{out}( \GeneralAsynchronous(g); x ) = \mathcal{Q}(x).
\]
Hence $\PrincipalTrapspaces(g) = \{ \mathcal{Q}(x) : x \in \B^n \} = \mathcal{Q}$ since $\mathcal{Q}$ is pre-principal.

We have just shown that any pre-principal collection of subcubes is principal. Together with the previous item, we have proved that a collection of subcubes is principal if and only if it is pre-principal.

Fourthly, we prove that $F_\PrincipalTrapspaces( \PrincipalTrapspaces(g) ) = g$ for all $g \in \Trapping{\Functions}(n)$. Let $\mathcal{Q} = \PrincipalTrapspaces(g)$. For all $x \in \B^n$, we have
\[
    T_g(x) = \bigcap\{ A \in \PrincipalTrapspaces(g) : x \in A \} = \mathcal{Q}(x)
\]
hence $\mathcal{Q}(x) = [x, g(x)]$ (since $g$ is trapping). On the other hand, by definition $\mathcal{Q}(x) = [ x, F_\PrincipalTrapspaces( \mathcal{Q} ) (x) ]$, thus $g = F_\PrincipalTrapspaces( \mathcal{Q} )$.
\end{proof}

We now prove item \ref{item:J-g_equivalence}, by characterising ideal collections of subcubes. We say that a collection $\mathcal{J}$ of subcubes is \Define{pre-ideal} if it satisfies the following three properties:
\begin{enumerate}
    \item \label{item:pre-ideal1}
    $\B^n \in \mathcal{J}$;

    \item \label{item:pre-ideal2}
    $\mathcal{J}$ is closed under intersection, i.e. if $A, B \in \mathcal{J}$ and $A \cap B \ne \emptyset$, then $A \cap B \in \mathcal{J}$;

    \item \label{item:pre-ideal3}
    for any subcollection $\mathcal{R} \subseteq \mathcal{J}$, if $R = \bigcup \mathcal{R} \in \Subcubes(n)$, then $R \in \mathcal{J}$.
\end{enumerate}
Intuitively, a pre-ideal collection of subcubes is closed under arbitrary unions and intersections, so long as those unions and intersections are actual subcubes. We note that property \ref{item:pre-ideal3} above is equivalent to $\mathcal{J} = \lambda( \mathcal{J} )$.

\begin{lemma} \label{lemma:ideal_pre-ideal}
A collection of subcubes is ideal if and only if it is pre-ideal. For all $\mathcal{J} \in \Collections^\Trapspaces(n)$ and all $g \in \Trapping{\Functions}(n)$, we have
\[
    \Trapspaces( F_\Trapspaces( \mathcal{J} ) ) = \mathcal{J}, \qquad F_\Trapspaces( \Trapspaces(g) ) = g.
\]
\end{lemma}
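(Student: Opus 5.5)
The proof will mirror that of Lemma~\ref{lemma:principal_pre-principal}. There are four steps: show that every ideal collection is pre-ideal, show that $F(\mathcal{J})$ is trapping, show that $\Trapspaces(F(\mathcal{J})) = \mathcal{J}$ for pre-ideal $\mathcal{J}$ (which also gives that pre-ideal implies ideal), and finally show that $F(\Trapspaces(g)) = g$ for trapping $g$.

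\textbf{Step 1: ideal implies pre-ideal.} Let $\mathcal{J} = \Trapspaces(f)$. Then $\B^n$ is clearly a trapspace. If $A, B$ are trapspaces with $A \cap B \neq \emptyset$, then $A \cap B$ is a subcube and $f(A\cap B) \subseteq A \cap B$. If $\mathcal{R}$ is a family of trapspaces whose union $R$ is a subcube, then for any $x \in R$ we have $x \in A$ for some $A \in \mathcal{R}$, so $f(x) \in A \subseteq R$. Hence $R$ is a trapspace.

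\textbf{Step 2: $F(\mathcal{J})$ is trapping.} Let $\mathcal{J}$ be pre-ideal and $h = F(\mathcal{J})$. By properties 1 and 2, $\mathcal{J}(x) \in \mathcal{J}$ for every $x$; it is the smallest member of $\mathcal{J}$ containing $x$. If $y \in \mathcal{J}(x)$, then $\mathcal{J}(y) \subseteq \mathcal{J}(x)$. So $\GeneralAsynchronous(h)$ is transitive, exactly as in the first part of the proof of Lemma~\ref{lemma:principal_pre-principal}. By Theorem~\ref{theorem:trapping_networks} it follows that $T_h(x) = [x, h(x)] = \mathcal{J}(x)$.

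\textbf{Step 3: $\Trapspaces(h) = \mathcal{J}$.} A subcube $A$ is a trapspace of $h$ if and only if
\[
A = \bigcup\{T_h(x) : x \in A\} = \bigcup\{\mathcal{J}(x) : x \in A\}.
\]
If $A \in \mathcal{J}$, then $\mathcal{J}(x) \subseteq A$ for every $x \in A$, so the union equals $A$ and $A$ is a trapspace. Conversely, if $A$ is a trapspace, it is a union of members of $\mathcal{J}$ and is itself a subcube, so property 3 gives $A \in \mathcal{J}$. This shows every pre-ideal collection is ideal, and together with Step 1, ideal and pre-ideal coincide.

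\textbf{Step 4: $F(\Trapspaces(g)) = g$ for trapping $g$.} Since trapspaces are closed under intersection, $\Trapspaces(g)(x) = T_g(x)$. Because $g$ is trapping, $T_g(x) = [x, g(x)]$ by Theorem~\ref{theorem:trapping_networks}. Therefore $F(\Trapspaces(g))(x) = g(x)$ for all $x$.

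The only delicate point is Step 2: we need $\mathcal{J}(x) \in \mathcal{J}$, which requires a nonempty intersection over a family that is itself nonempty. Property 1 ensures the family is nonempty (so the convention $\mathcal{J}(x) = \B^n$ is never needed), and property 2 applies because all the intersected subcubes contain $x$.
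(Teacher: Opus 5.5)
Your proposal is correct and follows essentially the same four-step structure as the paper's proof: $F(\mathcal{J})$ is trapping, $\Trapspaces(g)$ is pre-ideal, $\Trapspaces(F(\mathcal{J}))=\mathcal{J}$ via property 3, and $F(\Trapspaces(g))=g$ via $T_g(x)=[x,g(x)]$. Your explicit check that $\mathcal{J}(x)\in\mathcal{J}$, using properties 1 and 2, fills in a point the paper only uses implicitly.
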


\begin{proof}
The proof uses a similar structure to that of Lemma \ref{lemma:principal_pre-principal}.

Firstly, $F_\Trapspaces( \mathcal{J} )$ is trapping. (The proof is the same as its counterpart for $F_\PrincipalTrapspaces( \mathcal{Q} )$.)

Secondly, we prove that $\Trapspaces( g )$ is pre-ideal. First, $\B^n$ is a trapspace of $g$, hence $\B^n \in \mathcal{J}$. Second, if $A$ and $B$ are trapspaces with non-empty intersection, let $x \in A \cap B$, then $g( x ) \in A \cap B$, hence $A \cap B$ is also a trapspace. Third, if $R = \bigcup \mathcal{R}$ for some $\mathcal{R} \subseteq \mathcal{J}$, then for any $x \in R$, there exists $A \in \mathcal{J}$ such that $x \in A$ and hence $g( x ) \in A \subseteq R$, thus $R$ is also a trapspace.

Since any ideal collection of subcubes is of the form $\Trapspaces(g)$ for some trapping network $g$, we have just shown that any ideal collection of subcubes is pre-ideal.

Thirdly, we prove that $\Trapspaces( F_\Trapspaces( \mathcal{J} ) ) = \mathcal{J}$ for any pre-ideal collection $\mathcal{J}$ of subcubes of $\B^n$. Let $g = F_\Trapspaces( \mathcal{J} )$ so that $[ x, g(x) ] = \mathcal{J}(x) \in \mathcal{J}$ for all $x \in \B^n$. Suppose $A \in \mathcal{J}$, then for all $x \in A$, $g(x) \in \mathcal{J}(x) \subseteq A$, hence $A \in \Trapspaces( g )$. Conversely, suppose $B \in \Trapspaces( g )$, then $B = \bigcup \{ \mathcal{J}(x) : x \in B \}$, whence $B \in \mathcal{J}$ since $\mathcal{J}$ is pre-ideal.

We have just shown that any pre-ideal collection of subcubes is ideal. Together with the previous item, we have proved that a collection of subcubes is ideal if and only if it is pre-ideal.

Fourthly, we prove that $F_\Trapspaces( \Trapspaces( g ) ) = g$ for all $g \in \Trapping{\Functions}(n)$. Let $\mathcal{Q} = \Trapspaces( g )$, so that $\mathcal{Q}(x) = [ x, g(x) ]$ for all $x \in \B^n$ since $g$ is trapping. Thus $g = F_\Trapspaces( \mathcal{Q} )$ by definition of $F$.
\end{proof}

We now prove item \ref{item:composition}.

\begin{lemma}
For all $\mathcal{Q} \in \Collections^\PrincipalTrapspaces( n )$ and all $\mathcal{J} \in \Collections^\Trapspaces( n )$, we have 
\[
    \lambda_\PrincipalTrapspaces( \mathcal{Q} ) = \Trapspaces( F_\PrincipalTrapspaces( \mathcal{Q} ) ), \qquad \mu_\Trapspaces( \mathcal{J} ) = \PrincipalTrapspaces( F_\Trapspaces( \mathcal{J} ) ).
\]
\end{lemma}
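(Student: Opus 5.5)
The plan is to prove both identities by comparing the relevant collections directly, setting $g = F_\PrincipalTrapspaces(\mathcal{Q})$ for the first and $h = F_\Trapspaces(\mathcal{J})$ for the second. We rely on Lemmas \ref{lemma:principal_pre-principal} and \ref{lemma:ideal_pre-ideal}. These show that $g$ and $h$ are trapping, that $\PrincipalTrapspaces(g) = \mathcal{Q}$, and that $\Trapspaces(h) = \mathcal{J}$. In particular $T_g(x) = [x, g(x)] = \mathcal{Q}(x)$ and $T_h(x) = [x, h(x)] = \mathcal{J}(x)$ for all $x$, by Theorem \ref{theorem:trapping_networks}.

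\emph{First identity.} We show $\lambda(\mathcal{Q}) = \Trapspaces(g)$ by double inclusion.

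If $A \in \Trapspaces(g)$, then $A = \bigcup \{ T_g(x) : x \in A \}$, because every $T_g(x)$ with $x \in A$ is contained in $A$. This writes $A$ as a union of members of $\PrincipalTrapspaces(g) = \mathcal{Q}$. Since $A$ is a subcube, it follows that $A \in \lambda(\mathcal{Q})$.

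Conversely, let $A = \bigcup \mathcal{R}$ with $\mathcal{R} \subseteq \mathcal{Q}$ and $A \in \Subcubes(n)$. Each member of $\mathcal{Q}$ is a principal trapspace of $g$, hence a trapspace. For $x \in A$, pick $R \in \mathcal{R}$ with $x \in R$; then $g(x) \in R \subseteq A$. So $A$ is a trapspace of $g$. (This is just property 3 of pre-ideal collections applied to $\Trapspaces(g)$.)

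\emph{Second identity.} We show $\mu(\mathcal{J}) = \PrincipalTrapspaces(h)$. By definition $\mu(\mathcal{J}) = \{ \mathcal{J}(x) : x \in \B^n \}$. We also have $\PrincipalTrapspaces(h) = \{ T_h(x) : x \in \B^n \}$, with $T_h(x) = \mathcal{J}(x)$ as noted above. The two collections therefore coincide termwise. An alternative route is to note directly that $T_h(x)$ is the intersection of all trapspaces of $h$ containing $x$, that is, $\Trapspaces(h)(x) = \mathcal{J}(x)$.

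The only point needing care is the well-definedness of the restrictions: $F_\PrincipalTrapspaces(\mathcal{Q})$ and $F_\Trapspaces(\mathcal{J})$ must actually be trapping with the stated principal trapspaces. This is supplied by the earlier lemmas, so no step should pose a genuine obstacle. The mildest subtlety is the inclusion $\Trapspaces(g) \subseteq \lambda(\mathcal{Q})$, which requires that the principal trapspaces inside a trapspace cover it. This holds because $x \in T_g(x)$.
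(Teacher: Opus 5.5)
Your proof is correct and takes essentially the same route as the paper. Both use Lemmas \ref{lemma:principal_pre-principal} and \ref{lemma:ideal_pre-ideal} to identify $F_\PrincipalTrapspaces(\mathcal{Q})$ and $F_\Trapspaces(\mathcal{J})$ as networks with $\PrincipalTrapspaces = \mathcal{Q}$ and $\Trapspaces = \mathcal{J}$ respectively; you then spell out the double inclusion (trapspaces are exactly the subcube unions of principal trapspaces, and principal trapspaces are the intersections $\mathcal{J}(x)$) that the paper asserts in one line each.
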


\begin{proof}
We first prove that $\lambda_\PrincipalTrapspaces( \mathcal{Q} ) = \Trapspaces( F_\PrincipalTrapspaces( \mathcal{Q} ) )$. Let $g = F_\PrincipalTrapspaces( \mathcal{Q} )$. By Lemma \ref{lemma:principal_pre-principal}, $g$ is trapping with collection of principal trapspaces $\PrincipalTrapspaces( g ) = \mathcal{Q}$. Therefore its collection of trapspaces is given by any union of elements of $\mathcal{Q}$ that form a subcube, i.e. $\Trapspaces( g ) = \lambda_\PrincipalTrapspaces( \mathcal{Q})$.

We now prove that $\mu_\Trapspaces( \mathcal{J} ) = \PrincipalTrapspaces( F_\Trapspaces( \mathcal{J} ) )$. Again, let $g = F_\Trapspaces( \mathcal{J} )$; then $g$ is trapping with collection of trapspaces $\Trapspaces( g ) = \mathcal{J}$. Therefore, $\PrincipalTrapspaces( g ) = \{ \bigcap_{x \in A, A \in \mathcal{J}} A : x \in \B^n \} = \mu_\Trapspaces( \mathcal{J} )$.
\end{proof}

Item \ref{item:Q-J_equivalence} immediately follows.

\begin{corollary}
For all $\mathcal{Q} \in \Collections^\PrincipalTrapspaces( n )$ and all $\mathcal{J} \in \Collections^\Trapspaces( n )$, we have 
\[
    \mu_\Trapspaces( \lambda_\PrincipalTrapspaces ( \mathcal{Q} ) ) = \mathcal{Q}, \qquad \lambda_\PrincipalTrapspaces( \mu_\Trapspaces( \mathcal{J} ) ) = \mathcal{J}.
\]
\end{corollary}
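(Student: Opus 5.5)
The plan is to compose the identities already established in item \ref{item:Q-g_equivalence}, item \ref{item:J-g_equivalence} and item \ref{item:composition} (Lemmas \ref{lemma:principal_pre-principal}, \ref{lemma:ideal_pre-ideal} and the preceding lemma), routing everything through the trapping network in the top corner of Figure \ref{figure:three-way_equivalence}.

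For the first identity, fix $\mathcal{Q} \in \Collections^\PrincipalTrapspaces(n)$ and set $g = F_\PrincipalTrapspaces(\mathcal{Q})$. By Lemma \ref{lemma:principal_pre-principal}, $g$ is trapping and $\PrincipalTrapspaces(g) = \mathcal{Q}$. By the preceding lemma, $\lambda_\PrincipalTrapspaces(\mathcal{Q}) = \Trapspaces(g)$. This collection is ideal by definition, so $\mu_\Trapspaces$ may be applied to it. The preceding lemma then gives $\mu_\Trapspaces(\Trapspaces(g)) = \PrincipalTrapspaces(F_\Trapspaces(\Trapspaces(g)))$. Since $g$ is trapping, Lemma \ref{lemma:ideal_pre-ideal} gives $F_\Trapspaces(\Trapspaces(g)) = g$, so the right-hand side equals $\PrincipalTrapspaces(g) = \mathcal{Q}$.

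The second identity is handled symmetrically. Fix $\mathcal{J} \in \Collections^\Trapspaces(n)$ and set $g = F_\Trapspaces(\mathcal{J})$. By Lemma \ref{lemma:ideal_pre-ideal}, $g$ is trapping and $\Trapspaces(g) = \mathcal{J}$. The preceding lemma gives $\mu_\Trapspaces(\mathcal{J}) = \PrincipalTrapspaces(g)$, which is principal. Applying the preceding lemma again gives $\lambda_\PrincipalTrapspaces(\PrincipalTrapspaces(g)) = \Trapspaces(F_\PrincipalTrapspaces(\PrincipalTrapspaces(g)))$. Since $g$ is trapping, Lemma \ref{lemma:principal_pre-principal} gives $F_\PrincipalTrapspaces(\PrincipalTrapspaces(g)) = g$, so the right-hand side equals $\Trapspaces(g) = \mathcal{J}$.

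There is no substantive obstacle here; the argument is pure bookkeeping. The only point needing care is that each restricted map is applied within its domain: $\lambda_\PrincipalTrapspaces(\mathcal{Q})$ must be ideal and $\mu_\Trapspaces(\mathcal{J})$ must be principal. Both hold because each is exhibited as $\Trapspaces(g)$ or $\PrincipalTrapspaces(g)$ of an actual network $g$. The other point is that the round-trip identities $F_\Trapspaces \circ \Trapspaces = \id$ and $F_\PrincipalTrapspaces \circ \PrincipalTrapspaces = \id$ are only valid on trapping networks. That requirement is met because each $g$ above is produced by $F$, which always outputs a trapping network.
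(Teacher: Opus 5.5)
Your proposal is correct and is the diagram chase the paper intends when it says this item ``immediately follows'': you pass through the trapping network $g = F(\cdot)$, rewrite with the preceding lemma, and cancel using $F_\Trapspaces(\Trapspaces(g)) = g$ or $F_\PrincipalTrapspaces(\PrincipalTrapspaces(g)) = g$. Your explicit check that $\lambda_\PrincipalTrapspaces(\mathcal{Q})$ and $\mu_\Trapspaces(\mathcal{J})$ lie in the domains of the restricted maps is a detail the paper leaves implicit.
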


\subsection{Minimal trapspaces and min-trapping networks} \label{subsection:minimal_trapspaces}

Part of the theory built for principal trapspaces and trapping networks in the prequel of this section can be adapted to study minimal trapspaces instead. 

We first note that the collection $\MinimalTrapspaces(f)$ of minimal trapspaces of $f$ does not determine the collection $\PrincipalTrapspaces(f)$ of principal trapspaces of $f$. For instance, consider the following two networks, given by their respective asynchronous graphs below. They have the same collection of minimal trapspaces, namely the two fixed points, but the line $\{ x_1 = 1 \}$ is a principal trapspace of the first network but not of the second.

\begin{center}
    
\begin{tikzpicture}
    \node (00) at (0,0) {$00$};
    \node (01) at (0,2) {$01$};
    \node (10) at (2,0) {$10$};
    \node (11) at (2,2) {$11$};

    \path[draw] (00) -- (01) -- (11)
    (00) -- (10) -- (11);
    
    \draw[very thick,-latex, blue] (00) -- (10);
    \draw[very thick,-latex, blue] (10) -- (11);

    \begin{scope}[xshift = 5cm]
        
    \node (00) at (0,0) {$00$};
    \node (01) at (0,2) {$01$};
    \node (10) at (2,0) {$10$};
    \node (11) at (2,2) {$11$};

    \path[draw] (00) -- (01) -- (11)
    (00) -- (10) -- (11);
    
    \draw[very thick,latex-latex, blue] (00) -- (10);
    \draw[very thick,-latex, blue] (10) -- (11);
    \end{scope}

\end{tikzpicture}
\end{center}

Since the trapping closure of $f$ satisfies $\Trapping{f} = F( \PrincipalTrapspaces(f) )$, we define the \Define{min-trapping extension} of $f$ by
\[
    \MinimalTrapping{f} = F( \MinimalTrapspaces(f) ).
\]
More explicitly, say $x$ is a \Define{min-trapspace configuration} of $f$ if it belongs to a minimal trapspace of $f$ and denote the set of min-trapspace configurations of $f$ by $M(f)$. Then the min-trapping extension of $f$ is given by
\[
    \MinimalTrapping{f}( x ) = \begin{cases}
        T_f( x ) - x & \text{if } x \in M(f) \\
        \neg x & \text{otherwise}.
    \end{cases}
\]
The min-trapping extension is not a closure operator, as below we display two networks $f$ and $g$ such that $f \sqsubseteq g$ but $\MinimalTrapping{f} \not\sqsubseteq \MinimalTrapping{g}$.

\begin{center}
\begin{tikzpicture}

\begin{scope}[xshift=0cm, yshift=0cm]
    \node (f) at (1, -0.5) {$f$};

    \node (00) at (0,0) {$00$};
    \node (01) at (0,2) {$01$};
    \node (10) at (2,0) {$10$};
    \node (11) at (2,2) {$11$};

    \path[draw] (00) -- (01) -- (11)
    (00) -- (10) -- (11);
    
    
    \draw[very thick, -latex, blue] (00) -- (10);

\end{scope} 

\begin{scope}[xshift=5cm, yshift=0cm]
    \node (fM) at (1, -0.5) {$\MinimalTrapping{f}$};

    \node (00) at (0,0) {$00$};
    \node (01) at (0,2) {$01$};
    \node (10) at (2,0) {$10$};
    \node (11) at (2,2) {$11$};

    \path[draw] (00) -- (01) -- (11)
    (00) -- (10) -- (11);
    
    
    \draw[very thick,-latex, blue] (00) -- (10);
    \draw[very thick,-latex, blue] (00) -- (01);  

\end{scope} 

\begin{scope}[xshift=0cm, yshift=-4cm]
    \node (g) at (1, -0.5) {$g$};

    \node (00) at (0,0) {$00$};
    \node (01) at (0,2) {$01$};
    \node (10) at (2,0) {$10$};
    \node (11) at (2,2) {$11$};

    \path[draw] (00) -- (01) -- (11)
    (00) -- (10) -- (11);
    
    
    \draw[very thick,latex-latex, blue] (00) -- (10);
    
\end{scope} 

\begin{scope}[xshift=5cm, yshift=-4cm]
    \node (gM) at (1, -0.5) {$\MinimalTrapping{g}$};

    \node (00) at (0,0) {$00$};
    \node (01) at (0,2) {$01$};
    \node (10) at (2,0) {$10$};
    \node (11) at (2,2) {$11$};

    \path[draw] (00) -- (01) -- (11)
    (00) -- (10) -- (11);
    
    
    \draw[very thick,latex-latex, blue] (00) -- (10);
    
\end{scope} 

\end{tikzpicture}
\end{center}

The min-trapping extension does satisfy all the other properties that we expect. The proof of Lemma \ref{lemma:properties_M} is straightforward and hence omitted.

\begin{lemma} \label{lemma:properties_M}
For any $f \in \Functions( n )$, the following hold.
\begin{align}
    f &\sqsubseteq \MinimalTrapping{f}, \\
    \MinimalTrapping{( \MinimalTrapping{f} )} &= \MinimalTrapping{f}, \\
    \MinimalTrapspaces( \MinimalTrapping{f} ) &= \MinimalTrapspaces( f ).
\end{align}
\end{lemma}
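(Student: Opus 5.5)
The plan is to work directly from the explicit description of $\MinimalTrapping{f}$. We have $\MinimalTrapping{f}(x) = T_f(x) - x$ for $x \in M(f)$ and $\MinimalTrapping{f}(x) = \neg x$ otherwise, so that the interval of $x$ with respect to $\MinimalTrapping{f}$ is $T_f(x)$ on $M(f)$ and $\B^n$ elsewhere. This formula comes from $F(\MinimalTrapspaces(f))$. The minimal trapspaces of $f$ are pairwise disjoint, since the intersection of two trapspaces meeting each other is a trapspace. So for $x$ in a minimal trapspace $X$ we get $\MinimalTrapspaces(f)(x) = X$. Moreover $X = T_f(x)$, because $T_f(x) \subseteq X$ and $X$ is minimal.

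\textbf{First property.} For $x \in M(f)$ we have $[x,f(x)] \subseteq T_f(x) = [x,\MinimalTrapping{f}(x)]$. For $x \notin M(f)$ we have $[x,f(x)] \subseteq \B^n = [x, \neg x]$. Hence $f \sqsubseteq \MinimalTrapping{f}$.

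\textbf{Third property.} This is the core step, and I will prove it before the second.
\begin{itemize}
\item Every minimal trapspace $X$ of $f$ is a trapspace of $g := \MinimalTrapping{f}$. Indeed, each $x \in X$ lies in $M(f)$, so $g(x) \in T_f(x) = X$.
\item Every trapspace $Y$ of $g$ is a trapspace of $f$. Indeed, $f \sqsubseteq g$ gives $[y,f(y)] \subseteq [y,g(y)] \subseteq Y$ for all $y \in Y$.
\end{itemize}
Now take a minimal trapspace $X$ of $f$ and a trapspace $Y \subseteq X$ of $g$. Then $Y$ is a trapspace of $f$, so minimality gives $Y = X$. Hence $X$ is minimal for $g$.

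Conversely, take a minimal trapspace $Y$ of $g$. It is a trapspace of $f$, so it contains some minimal trapspace $X$ of $f$. By the first bullet $X$ is a trapspace of $g$, and minimality of $Y$ gives $Y = X \in \MinimalTrapspaces(f)$. Thus $\MinimalTrapspaces(g) = \MinimalTrapspaces(f)$.

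\textbf{Second property.} This follows from the third and the definition: $\MinimalTrapping{(\MinimalTrapping{f})} = F(\MinimalTrapspaces(\MinimalTrapping{f})) = F(\MinimalTrapspaces(f)) = \MinimalTrapping{f}$.

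\textbf{Main obstacle.} The only delicate point is the reverse inclusion in the third property. It relies on every trapspace, in particular $Y$, containing a minimal trapspace, which holds by finiteness. It also relies on $f \sqsubseteq g$ to transfer trapspaces from $g$ back to $f$. I would also check the convention $\mathcal{A}(x) = \B^n$ for $x \notin M(f)$, which is what yields $\neg x$ there.
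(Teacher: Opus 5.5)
Your proof is correct, and it is the direct verification the paper has in mind when it omits this proof as straightforward. You derive the explicit formula from the pairwise disjointness of minimal trapspaces, check $f \sqsubseteq \MinimalTrapping{f}$ pointwise, and prove the third identity by transferring trapspaces in both directions, using that every trapspace contains a minimal one and that $f \sqsubseteq \MinimalTrapping{f}$ makes every trapspace of $\MinimalTrapping{f}$ a trapspace of $f$. The second identity then follows formally from the third via $\MinimalTrapping{h} = F(\MinimalTrapspaces(h))$; nothing is missing.
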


Say a network $g$ is \Define{min-trapping} if $g = \MinimalTrapping{g}$ and denote the set of min-trapping networks in $\Functions(n)$ by $\MinimalTrapping{ \Functions }( n )$. The min-trapping extension of $f$ is a trapping network that satisfies $\Trapping{f} \sqsubseteq \MinimalTrapping{f}$, as such we have
\[
    \MinimalTrapping{f} = \Trapping{ ( \MinimalTrapping{f} ) } = \MinimalTrapping{ ( \Trapping{f} ) } = \MinimalTrapping{ ( \MinimalTrapping{f} ) }.
\]

We now give the analogue of Theorem \ref{theorem:T(f)=T(g)} for min-trapping extensions.

\begin{theorem}  \label{theorem:min-trapping_closure}
The following are equivalent for $f, g \in \Functions(n)$:
\begin{enumerate}
    \item \label{item:M(f)=M(g)}
    $\MinimalTrapspaces(f) = \MinimalTrapspaces(g)$;

    \item \label{item:Tf(x)=Tg(x)_min_equal}
    $M(f) = M(g)$ and $T_f( x ) = T_g( x )$ for all $x \in M(f)$;

    \item \label{item:Tf(x)=Tg(x)_min}
   $T_f( x ) = T_g( x )$ for all $x \in M(f) \cup M(g)$;

    \item \label{item:fM=gM}
    $\MinimalTrapping{f} = \MinimalTrapping{g}$.
\end{enumerate}
\end{theorem}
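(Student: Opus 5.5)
The plan is to prove the cycle $\ref{item:M(f)=M(g)} \implies \ref{item:Tf(x)=Tg(x)_min_equal} \implies \ref{item:Tf(x)=Tg(x)_min} \implies \ref{item:M(f)=M(g)}$, and then to show $\ref{item:M(f)=M(g)} \iff \ref{item:fM=gM}$ separately. The basic observation, used throughout, is that if $x$ lies in a minimal trapspace $X$ of $f$, then $T_f(x) \subseteq X$. By minimality this forces $T_f(x) = X$. In other words, for $x \in M(f)$ the principal trapspace $T_f(x)$ is exactly the unique minimal trapspace containing $x$. Uniqueness holds because two distinct minimal trapspaces are disjoint: their intersection, if nonempty, would be a smaller trapspace.

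\textbf{First two implications.} For $\ref{item:M(f)=M(g)} \implies \ref{item:Tf(x)=Tg(x)_min_equal}$, note that $M(f) = \bigcup \MinimalTrapspaces(f)$, so $M(f) = M(g)$. For $x \in M(f)$, the observation gives that $T_f(x)$ and $T_g(x)$ are both the unique member of $\MinimalTrapspaces(f) = \MinimalTrapspaces(g)$ containing $x$, hence they are equal. The step $\ref{item:Tf(x)=Tg(x)_min_equal} \implies \ref{item:Tf(x)=Tg(x)_min}$ is trivial.

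\textbf{The step $\ref{item:Tf(x)=Tg(x)_min} \implies \ref{item:M(f)=M(g)}$, which I expect to be the main subtlety.} Take $X \in \MinimalTrapspaces(f)$ and $x \in X$, so that $X = T_f(x) = T_g(x)$. Hence $X$ is a trapspace of $g$, and it remains to show it is minimal for $g$. Suppose $Y \subset X$ is a trapspace of $g$. Then $Y$ contains a minimal trapspace $Y'$ of $g$; pick $y \in Y' \subseteq M(g)$. By hypothesis $T_f(y) = T_g(y) \subseteq Y' \subset X$. This is a trapspace of $f$ strictly inside $X$, contradicting the minimality of $X$. So $\MinimalTrapspaces(f) \subseteq \MinimalTrapspaces(g)$, and the reverse inclusion follows by symmetry. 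The key point is that the hypothesis covers $M(f) \cup M(g)$, which is exactly what allows us to move from a minimal trapspace of $g$ back to a trapspace of $f$.

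\textbf{Equivalence with $\ref{item:fM=gM}$.} The direction $\ref{item:M(f)=M(g)} \implies \ref{item:fM=gM}$ is immediate from the definition $\MinimalTrapping{f} = F(\MinimalTrapspaces(f))$. Conversely, if $\MinimalTrapping{f} = \MinimalTrapping{g}$, then Lemma~\ref{lemma:properties_M} gives
\[
\MinimalTrapspaces(f) = \MinimalTrapspaces(\MinimalTrapping{f}) = \MinimalTrapspaces(\MinimalTrapping{g}) = \MinimalTrapspaces(g).
\]
One caveat: Lemma~\ref{lemma:properties_M} was stated without proof. If I want the argument self-contained, I would check directly that the minimal trapspaces of $F(\mathcal{M})$ are exactly the members of $\mathcal{M}$ when $\mathcal{M}$ is a family of pairwise disjoint subcubes. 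Indeed, each member is a trapspace, because $\mathcal{M}(x) = M$ for $x \in M$. Any point outside $\bigcup \mathcal{M}$ maps to its negation, so a trapspace containing such a point is all of $\B^n$ and is not minimal unless $\mathcal{M}$ is empty. Finally, a proper subcube of some $M$ fails to be a trapspace, since for $x$ in it the interval $[x, F(\mathcal{M})(x)]$ equals $M$.
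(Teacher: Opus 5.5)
Your proof is correct and follows essentially the paper's argument. Its key step is the same as the paper's: pick a point of a minimal trapspace of $g$ inside the relevant trapspace, and transfer its principal trapspace to $f$ via the hypothesis on $M(g)$. Your (4) $\Rightarrow$ (1) goes through Lemma~\ref{lemma:properties_M} exactly as in the paper. The only difference is bookkeeping: you close the cycle with a direct (3) $\Rightarrow$ (1), while the paper proves (3) $\Rightarrow$ (2) and returns to (1) via (2) $\Rightarrow$ (4) $\Rightarrow$ (1), so in your arrangement the equivalence of (1)--(3) does not depend on the unproved lemma.
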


\begin{proof}

$\ref{item:M(f)=M(g)} \implies \ref{item:Tf(x)=Tg(x)_min_equal}$. We have $M(f) = \{ x \in A : A \in \MinimalTrapspaces(f) \} = \{ x \in A : A \in \MinimalTrapspaces(g) \} = M(g)$. Now, for any $x \in M(f)$, $x$ belongs to a unique minimal trapspace of $f$, namely $T_f(x)$; since $M(f) = M(g)$, $x$ belongs to a unique minimal trapspace of $g$, namely $T_g(x)$. Therefore, $T_f(x) = T_g(x)$.

$\ref{item:Tf(x)=Tg(x)_min_equal} \implies \ref{item:Tf(x)=Tg(x)_min}$. Trivial.

$\ref{item:Tf(x)=Tg(x)_min} \implies \ref{item:Tf(x)=Tg(x)_min_equal}$. For the sake of contradiction, suppose that $x \in M(f) \setminus M(g)$, so that $T_f(x) = T_g(x)$. Let $y \in T_g(x) \cap M(g)$, then $T_f(y) = T_g(y) \subset T_g(x) = T_f(x)$, and hence $x \notin M(f)$, which is the desired contradiction. Thus $M(f) \subseteq M(g)$, and by symmetry we obtain $M(f) = M(g)$.

$\ref{item:Tf(x)=Tg(x)_min_equal} \implies \ref{item:fM=gM}$. By definition of the min-trapping extension.

$\ref{item:fM=gM} \implies \ref{item:M(f)=M(g)}$. We have $\MinimalTrapspaces( f ) = \MinimalTrapspaces( \MinimalTrapping{f} ) = \MinimalTrapspaces( \MinimalTrapping{g} ) = \MinimalTrapspaces( g )$.

\end{proof}

Say a collection of subcubes $\mathcal{N} \in \Collections(n)$ is \Define{min-ideal} if all its elements are disjoint: $A \cap B = \emptyset$ for all $A \ne B \in \mathcal{N}$. We denote the set of all min-ideal collections of subcubes of $\B^n$ by $\Collections^\MinimalTrapspaces(n)$, and the restriction of the mapping $F$ to $\Collections^\MinimalTrapspaces(n)$ as $F_\MinimalTrapspaces$.

\begin{theorem} \label{theorem:two-way_equivalence}
The set of min-trapping networks is in bijection with the set of min-ideal collections of subcubes. More precisely, for all min-ideal collections of subcubes $\mathcal{N} \in \Collections^\MinimalTrapspaces(n)$ and all min-trapping networks $g \in \MinimalTrapping{ \Functions }( n )$, we have
\[
    \MinimalTrapspaces( F_\MinimalTrapspaces( \mathcal{N} ) ) = \mathcal{N}, \qquad F_\MinimalTrapspaces( \MinimalTrapspaces( g ) ) = g.
\]
\end{theorem}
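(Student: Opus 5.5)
We verify the two identities directly from the definition of $F$. The key preliminary observation is that for a min-ideal collection $\mathcal{N}$, the sets $\mathcal{A}(x)$ are easy to describe. If $x \in \bigcup \mathcal{N}$, then $x$ lies in exactly one $N \in \mathcal{N}$ (disjointness), so $\mathcal{N}(x) = N$. Otherwise $\mathcal{N}(x) = \B^n$. Hence $g := F_\MinimalTrapspaces(\mathcal{N})$ satisfies $[x, g(x)] = N$ for $x \in N \in \mathcal{N}$, and $g(x) = \neg x$ for $x \notin \bigcup \mathcal{N}$.

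\textbf{First identity.} We show $\MinimalTrapspaces(g) = \mathcal{N}$.
\begin{enumerate}
    \item Each $N \in \mathcal{N}$ is a trapspace of $g$: for $x \in N$ we have $g(x) \in [x, g(x)] = N$.
    \item Each $N$ is minimal. If $T' \subseteq N$ is a trapspace and $x \in T'$, then $[x, g(x)] \subseteq T'$, because any subcube containing $x$ and $g(x)$ contains their principal subcube. Since $[x, g(x)] = N$, we get $T' = N$.
    \item Conversely, let $T$ be a minimal trapspace of $g$ and $x \in T$. If $x \notin \bigcup \mathcal{N}$, then $[x, g(x)] = [x, \neg x] = \B^n \subseteq T$, so $T = \B^n$. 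In that case $T$ is minimal only if $\mathcal{N}$ contributes no proper trapspace. Since $\B^n$ contains every $N \in \mathcal{N}$, this forces $\mathcal{N} = \emptyset$ or $\mathcal{N} = \{\B^n\}$.
    \item The case $\mathcal{N} = \emptyset$ is the delicate one. Then $g = \neg$ and $\MinimalTrapspaces(g) = \{\B^n\} \ne \emptyset$, so the identity fails as literally stated. I expect this to be the main obstacle: one must either note that $\mathcal{N}$ should be nonempty, since every network has a minimal trapspace, or treat $\emptyset$ as excluded. I would add this caveat rather than hide it.
    \item If $\mathcal{N} \ne \emptyset$, any $x \notin \bigcup \mathcal{N}$ gives a trapspace $T \supseteq \B^n$ that strictly contains a member of $\mathcal{N}$, so $T$ is not minimal. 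Hence $x \in N$ for some $N \in \mathcal{N}$, so $N \subseteq T$, and minimality gives $T = N$.
\end{enumerate}

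\textbf{Second identity.} Let $g$ be min-trapping, so $g = \MinimalTrapping{g} = F(\MinimalTrapspaces(g))$ by definition. Since $\MinimalTrapspaces(g)$ consists of pairwise disjoint subcubes, it is min-ideal. Distinct minimal trapspaces $A, B$ with $A \cap B \ne \emptyset$ would make $A \cap B$ a smaller trapspace, since trapspaces are closed under nonempty intersection. So $F_\MinimalTrapspaces(\MinimalTrapspaces(g)) = g$ holds immediately.

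It remains to check that $F_\MinimalTrapspaces(\mathcal{N})$ is actually min-trapping, so that the two maps are mutually inverse bijections. For nonempty $\mathcal{N}$ this follows from the first identity: $\MinimalTrapping{g} = F(\MinimalTrapspaces(g)) = F(\mathcal{N}) = g$. This completes the bijection, modulo the empty-collection caveat above.
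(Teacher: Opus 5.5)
Your proof is correct and follows the paper's route. You use disjointness to compute $\mathcal{N}(x)$ (either the unique member containing $x$, or $\B^n$) and read off $g = F_{\MinimalTrapspaces}(\mathcal{N})$. You then observe that the second identity is just the definition $\MinimalTrapping{g} = F(\MinimalTrapspaces(g))$, plus disjointness of minimal trapspaces. Your check that $\MinimalTrapspaces(g) = \mathcal{N}$ fills in a step the paper asserts without argument.

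Your caveat is genuine, and the paper overlooks it. The empty collection is vacuously min-ideal, yet $F(\emptyset) = \neg = F(\{\B^n\})$ and $\MinimalTrapspaces(\neg) = \{\B^n\}$. So the first identity fails at $\emptyset$, $F_{\MinimalTrapspaces}$ is not injective, and the bijection holds only for nonempty min-ideal collections.
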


\begin{proof}
We first prove that $\MinimalTrapspaces( F_\MinimalTrapspaces( \mathcal{N} ) ) = \mathcal{N}$ for any min-ideal collection $\mathcal{N}$. Let $N = \bigcup_{A \in \mathcal{N}} A$ denote the content of $\mathcal{N}$. For any $x \in \B^n$, we have
\[
    \mathcal{N}(x) = \begin{cases}
        A & \text{if } x \in A, A \in \mathcal{N} \\
        \B^n & \text{if } x \notin N.
    \end{cases}
\]
Therefore, $g = F_\MinimalTrapspaces( \mathcal{N} )$ is given by
\[
    g( x ) = \begin{cases}
        A - x & \text{if } x \in A, A \in \mathcal{N} \\
        \neg x & \text{if } x \notin N.
    \end{cases}
\]
We obtain that $\MinimalTrapspaces( g ) = \mathcal{N}$.

We now prove that $F_\MinimalTrapspaces( \MinimalTrapspaces( g ) ) = g$ for any min-trapping network $g$. Let $\mathcal{N} = \MinimalTrapspaces( g )$, so that $\mathcal{N} = \{ T_g( x ) : x \in M(g) \}$ is min-ideal. Thus,
\[
    \mathcal{N}(x) = \begin{cases}
        T_g(x) & \text{if } x \in M(g) \\
        \B^n & \text{otherwise}
    \end{cases}
\]
and
\[
    F_\MinimalTrapspaces( \mathcal{N} )(x) = \begin{cases}
        T_g(x) - x & \text{if } x \in M(g) \\
        \neg x & \text{otherwise}
    \end{cases}
     = \MinimalTrapping{g}( x ) = g(x).
\]

\end{proof}

\section{Commutative networks} \label{section:commutative}

\subsection{Review of commutative networks} \label{subsection:review_commutative}

\paragraph{Definitions}

In \cite{BGT20}, Bridoux et al. introduce commutative networks, where asynchronous updates can be performed in any order without altering the result. This subsection is devoted to a review of some of the results in \cite{BGT20}. Foremost, the authors of \cite{BGT20} are interested in possibly infinite networks, and hence distinguish between so-called locally and globally commutative networks. However, these two concepts coincide for the finite Boolean networks we study in this paper. As such, we say a network $f \in \Functions(n)$ is \Define{commutative} if it satisfies the following three equivalent properties:
\begin{enumerate}
    \item 
    for all $i, j \in [n]$, $f^{ (i, j) } = f^{ (j, i) }$;

    \item 
    for all $S, T \subseteq [n]$, $f^{ (S, T) } = f^{ (T, S)}$;

    \item 
    for all $S, T \subseteq [n]$ with $S \cap T = \emptyset$, $f^{ (S, T) } = f^{ (S \cup T) }$.
\end{enumerate}

\paragraph{Properties}

Commutative networks have some strong structural properties. Intuitively, they all stem from the key property that any update can be ``serialised'', i.e. for any $S = \{ s_1, \dots, s_k \} \subseteq [n]$, we have
\[
    f^{(S)} = f^{(s_1, \dots, s_k)} = f^{(s_k)} \circ \dots \circ f^{(s_1)}.
\]
For instance, for any property $P$ of Boolean networks, we say $f$ is \Define{locally $P$} if $f^{(i)}$ satisfies $P$ for all $i \in [n]$ and \Define{globally $P$} if $f^{(S)}$ satisfies $P$ for all $S \subseteq [n]$.
As such, a network $f \in \Functions( n )$ is \Define{bijective} if $f$ is a bijection (i.e., a permutation of $\B^n$); $f$ is \Define{locally bijective} if $f^{(i)}$ is bijective for all $i \in [n]$; $f$ is \Define{globally bijective} if $f^{ (S) }$ is a bijection for all $S \subseteq [n]$. If $f$ is a locally bijective commutative network, then for all $S \subseteq [n]$, $f^{(S)} = f^{(s_k)} \circ \dots \circ f^{(s_1)}$ is also bijective, i.e. $f$ is globally bijective. Bridoux et al. go further, and show that the following are equivalent for a commutative network:
\begin{enumerate}
    \item $f$ is bijective;

    \item $f$ is locally bijective;

    \item $f$ is globally bijective.
\end{enumerate}
Moreover, recall that $f$ is \Define{idempotent} if $f^2 = f$. The following are also equivalent for a commutative network:
\begin{enumerate}
    \item $f$ is idempotent;

    \item $f$ is locally idempotent;

    \item $f$ is globally idempotent.
\end{enumerate}

Commutative networks have heavily constrained dynamics. Any function $\phi : \B \to \B$ has transient length at most $1$ and period at most $2$, and hence $\phi^3 = \phi$.  We then call a network $f$ \Define{dynamically local} if $f^3 = f$. Clearly, for any $f \in \Functions(n)$ and any $i \in [n]$ the update $f^{(i)}$ is dynamically local. In fact, commutative networks are also dynamically local, i.e. they have transient length at most $1$ and period at most $2$. Moreover, a network $f \in \Functions( n )$ is \Define{involutive} if $f$ is an involution, i.e. $f^2 = \id$. Any involutive network is bijective, and as seen above any bijective commutative network is involutive. 

\paragraph{Classification of asynchronous graphs}

The pinnacle of the study of commutative Boolean networks in \cite{BGT20} is a full classification of the asynchronous graphs of commutative networks. We give an overview of the classification below, and guide the reader to \cite{BGT20} for more detail. The classification takes three main steps: arrangement, arrangement network, and union of arrangement networks.

First, a family of subcubes $\mathcal{X}$ is called an \Define{arrangement} if $Y := \bigcap_{X \in \mathcal{X}} X$ is a non-empty subcube. 
We denote the \Define{content} of $X$ by $\hat{\mathcal{X}} := \bigcup_{X \in \mathcal{X}} X$.
We say $i \in [n]$ is a \Define{free dimension} of $\mathcal{X}$ if, for any $x \in \hat{\mathcal{X}}$, $y = (\neg x_i, x_{-i}) \in \hat{\mathcal{X}}$ as well. 
Six examples of arrangements are displayed below.

\begin{center}
\begin{tikzpicture}
\begin{scope}[xshift=0cm]
    \node (000) at (0,0) {$000$};
    \node (001) at (1,1) {$001$};
    \node (010) at (0,2) {$010$};
    \node (011) at (1,3) {$011$};
    \node (100) at (2,0) {$100$};
    \node (101) at (3,1) {$101$};
    \node (110) at (2,2) {$110$};
    \node (111) at (3,3) {$111$};

    \path[draw] (000) -- (001) -- (011) -- (111)
    (000) -- (010) -- (110) -- (111)
    (000) -- (100) -- (101) -- (111)
    (001) -- (101)
    (010) -- (011)
    (100) -- (110);
    
    \path[draw, very thick, blue] (000) -- (100);

\end{scope}

\begin{scope}[yshift=-5cm, xshift=0cm]
    \node (000) at (0,0) {$000$};
    \node (001) at (1,1) {$001$};
    \node (010) at (0,2) {$010$};
    \node (011) at (1,3) {$011$};
    \node (100) at (2,0) {$100$};
    \node (101) at (3,1) {$101$};
    \node (110) at (2,2) {$110$};
    \node (111) at (3,3) {$111$};

    \path[draw] (000) -- (001) -- (011) -- (111)
    (000) -- (010) -- (110) -- (111)
    (000) -- (100) -- (101) -- (111)
    (001) -- (101)
    (010) -- (011)
    (100) -- (110);
    
    \path[draw, very thick, NavyBlue] (000) -- (100) -- (110) -- (010) -- (000);
    \path[draw, very thick, NavyBlue] (100) -- (101);

\end{scope}

\begin{scope}[xshift=5cm]
    \node (000) at (0,0) {$000$};
    \node (001) at (1,1) {$001$};
    \node (010) at (0,2) {$010$};
    \node (011) at (1,3) {$011$};
    \node (100) at (2,0) {$100$};
    \node (101) at (3,1) {$101$};
    \node (110) at (2,2) {$110$};
    \node (111) at (3,3) {$111$};

    \path[draw] (000) -- (001) -- (011) -- (111)
    (000) -- (010) -- (110) -- (111)
    (000) -- (100) -- (101) -- (111)
    (001) -- (101)
    (010) -- (011)
    (100) -- (110);
    
    \path[draw, very thick, Turquoise] (000) -- (100) -- (110);

\end{scope}

\begin{scope}[yshift=-5cm, xshift=5cm]
    
    \node (000) at (0,0) {$000$};
    \node (001) at (1,1) {$001$};
    \node (010) at (0,2) {$010$};
    \node (011) at (1,3) {$011$};
    \node (100) at (2,0) {$100$};
    \node (101) at (3,1) {$101$};
    \node (110) at (2,2) {$110$};
    \node (111) at (3,3) {$111$};

    \path[draw] (000) -- (001) -- (011) -- (111)
    (000) -- (010) -- (110) -- (111)
    (000) -- (100) -- (101) -- (111)
    (001) -- (101)
    (010) -- (011)
    (100) -- (110);
    
    \path[draw, very thick, cyan] (000) -- (100) -- (110) -- (010) -- (000);
    \path[draw, very thick, cyan] (100) -- (101) -- (111) -- (110);
\end{scope}

\begin{scope}[xshift=10cm]
    \node (000) at (0,0) {$000$};
    \node (001) at (1,1) {$001$};
    \node (010) at (0,2) {$010$};
    \node (011) at (1,3) {$011$};
    \node (100) at (2,0) {$100$};
    \node (101) at (3,1) {$101$};
    \node (110) at (2,2) {$110$};
    \node (111) at (3,3) {$111$};

    \path[draw] (000) -- (001) -- (011) -- (111)
    (000) -- (010) -- (110) -- (111)
    (000) -- (100) -- (101) -- (111)
    (001) -- (101)
    (010) -- (011)
    (100) -- (110);
    
    \path[draw, very thick, Orchid] (000) -- (100) -- (110);
    \path[draw, very thick, Orchid] (100) -- (101);
\end{scope}

\begin{scope}[yshift=-5cm, xshift=10cm]
    
    \node (000) at (0,0) {$000$};
    \node (001) at (1,1) {$001$};
    \node (010) at (0,2) {$010$};
    \node (011) at (1,3) {$011$};
    \node (100) at (2,0) {$100$};
    \node (101) at (3,1) {$101$};
    \node (110) at (2,2) {$110$};
    \node (111) at (3,3) {$111$};

    \path[draw] (000) -- (001) -- (011) -- (111)
    (000) -- (010) -- (110) -- (111)
    (000) -- (100) -- (101) -- (111)
    (001) -- (101)
    (010) -- (011)
    (100) -- (110);
    
    \path[draw, very thick, violet] (000) -- (100) -- (110) -- (010) -- (000);
    \path[draw, very thick, violet] (100) -- (101) -- (111) -- (110);
    \path[draw, very thick, violet] (010) -- (011) -- (111);
\end{scope}
\end{tikzpicture}
\end{center}

Second, an \Define{arrangement network} intuitively only works on $\hat{\mathcal{X}}$, converges towards $Y$, and is uniform on any free dimension of $\mathcal{X}$. More formally, an arrangement network satisfies: $f(x) = x$ if $x \notin \hat{\mathcal{X}}$, $f(x) \in Y$ for all $x \in \hat{\mathcal{X}}$, and $f_i(x) = f_i(y)$ for all $x,y \in \hat{\mathcal{X}}$ with $x_i = y_i$. Arrangement networks are clearly commutative.
There are three possible arrangement networks for the arrangement in the bottom row, centre column above. They are displayed below.

\begin{center}
\begin{tikzpicture}

\begin{scope}[xshift=0cm]
    \node (000) at (0,0) {$000$};
    \node (001) at (1,1) {$001$};
    \node (010) at (0,2) {$010$};
    \node (011) at (1,3) {$011$};
    \node (100) at (2,0) {$100$};
    \node (101) at (3,1) {$101$};
    \node (110) at (2,2) {$110$};
    \node (111) at (3,3) {$111$};

    \path[draw] (000) -- (001) -- (011) -- (111)
    (000) -- (010) -- (110) -- (111)
    (000) -- (100) -- (101) -- (111)
    (001) -- (101)
    (010) -- (011)
    (100) -- (110);
    
    \draw[very thick, cyan, -latex] (010) -- (110);
    \draw[very thick, cyan, -latex] (000) -- (100);
    \draw[very thick, cyan, -latex] (111) -- (110);
    \draw[very thick, cyan, -latex] (101) -- (100);

    \draw[very thick, cyan, -latex] (010) -- (000);
    \draw[very thick, cyan, -latex] (110) -- (100);
    \draw[very thick, cyan, -latex] (111) -- (101);
\end{scope}

\begin{scope}[xshift=5cm]
    \node (000) at (0,0) {$000$};
    \node (001) at (1,1) {$001$};
    \node (010) at (0,2) {$010$};
    \node (011) at (1,3) {$011$};
    \node (100) at (2,0) {$100$};
    \node (101) at (3,1) {$101$};
    \node (110) at (2,2) {$110$};
    \node (111) at (3,3) {$111$};

    \path[draw] (000) -- (001) -- (011) -- (111)
    (000) -- (010) -- (110) -- (111)
    (000) -- (100) -- (101) -- (111)
    (001) -- (101)
    (010) -- (011)
    (100) -- (110);
    
    \draw[very thick, cyan, -latex] (010) -- (110);
    \draw[very thick, cyan, -latex] (000) -- (100);
    \draw[very thick, cyan, -latex] (111) -- (110);
    \draw[very thick, cyan, -latex] (101) -- (100);

    \draw[very thick, cyan, latex-latex] (010) -- (000);
    \draw[very thick, cyan, latex-latex] (110) -- (100);
    \draw[very thick, cyan, latex-latex] (111) -- (101);
\end{scope}

\begin{scope}[xshift=10cm]
    \node (000) at (0,0) {$000$};
    \node (001) at (1,1) {$001$};
    \node (010) at (0,2) {$010$};
    \node (011) at (1,3) {$011$};
    \node (100) at (2,0) {$100$};
    \node (101) at (3,1) {$101$};
    \node (110) at (2,2) {$110$};
    \node (111) at (3,3) {$111$};

    \path[draw] (000) -- (001) -- (011) -- (111)
    (000) -- (010) -- (110) -- (111)
    (000) -- (100) -- (101) -- (111)
    (001) -- (101)
    (010) -- (011)
    (100) -- (110);
    
    \draw[very thick, cyan, -latex] (010) -- (110);
    \draw[very thick, cyan, -latex] (000) -- (100);
    \draw[very thick, cyan, -latex] (111) -- (110);
    \draw[very thick, cyan, -latex] (101) -- (100);

    \draw[very thick, cyan, -latex] (000) -- (010);
    \draw[very thick, cyan, -latex] (100) -- (110);
    \draw[very thick, cyan, -latex] (101) -- (111);

\end{scope}

\end{tikzpicture}
\end{center}

Third, one can take the \Define{union of arrangement networks}, provided their respective arrangement contents are disjoint, by taking the union of their asynchronous graphs. Three examples are displayed below.

\begin{center}
    
\begin{tikzpicture}

\begin{scope}[xshift=0cm]
    \node (000) at (0,0) {$000$};
    \node (001) at (1,1) {$001$};
    \node (010) at (0,2) {$010$};
    \node (011) at (1,3) {$011$};
    \node (100) at (2,0) {$100$};
    \node (101) at (3,1) {$101$};
    \node (110) at (2,2) {$110$};
    \node (111) at (3,3) {$111$};

    \path[draw] (000) -- (001) -- (011) -- (111)
    (000) -- (010) -- (110) -- (111)
    (000) -- (100) -- (101) -- (111)
    (001) -- (101)
    (010) -- (011)
    (100) -- (110);
    
    \draw[very thick, blue, -latex] (010) -- (110);
    \draw[very thick, blue, -latex] (000) -- (100);
    \draw[very thick, blue, -latex] (111) -- (110);
    \draw[very thick, blue, -latex] (101) -- (100);

    \draw[very thick, blue, latex-latex] (010) -- (000);
    \draw[very thick, blue, latex-latex] (110) -- (100);
    \draw[very thick, blue, latex-latex] (111) -- (101);

    \draw[very thick, blue, -latex] (001) -- (011);
\end{scope}

\begin{scope}[xshift=5cm]
    \node (000) at (0,0) {$000$};
    \node (001) at (1,1) {$001$};
    \node (010) at (0,2) {$010$};
    \node (011) at (1,3) {$011$};
    \node (100) at (2,0) {$100$};
    \node (101) at (3,1) {$101$};
    \node (110) at (2,2) {$110$};
    \node (111) at (3,3) {$111$};

    \path[draw] (000) -- (001) -- (011) -- (111)
    (000) -- (010) -- (110) -- (111)
    (000) -- (100) -- (101) -- (111)
    (001) -- (101)
    (010) -- (011)
    (100) -- (110);
    
    \draw[very thick, blue, latex-latex] (000) -- (100);
    \draw[very thick, blue, latex-latex] (010) -- (110);
    
    \draw[very thick, blue, -latex] (000) -- (010);
    \draw[very thick, blue, -latex] (100) -- (110);

    \draw[very thick, blue, -latex] (011) -- (111);
    \draw[very thick, blue, -latex] (101) -- (111);
\end{scope}

\begin{scope}[xshift=10cm]
    \node (000) at (0,0) {$000$};
    \node (001) at (1,1) {$001$};
    \node (010) at (0,2) {$010$};
    \node (011) at (1,3) {$011$};
    \node (100) at (2,0) {$100$};
    \node (101) at (3,1) {$101$};
    \node (110) at (2,2) {$110$};
    \node (111) at (3,3) {$111$};

    \path[draw] (000) -- (001) -- (011) -- (111)
    (000) -- (010) -- (110) -- (111)
    (000) -- (100) -- (101) -- (111)
    (001) -- (101)
    (010) -- (011)
    (100) -- (110);
    
    \draw[very thick, blue, -latex] (010) -- (110);
    \draw[very thick, blue, -latex] (000) -- (100);
    \draw[very thick, blue, -latex] (010) -- (000);
    \draw[very thick, blue, -latex] (110) -- (100);
    \draw[very thick, blue, -latex] (101) -- (100);

    \draw[very thick, blue, latex-latex] (001) -- (011);

\end{scope}

\end{tikzpicture}
\end{center}

We can now give the classification theorem.

\begin{theorem}[Classification of asynchronous graphs of commutative networks \cite{BGT20}]
  \label{theo:boolean_cs}
  A Boolean network is commutative if and only if it is a union of arrangement networks.
\end{theorem}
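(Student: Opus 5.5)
The argument has two directions.

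\emph{Unions of arrangement networks are commutative.} First check that a single arrangement network $f$ with arrangement $\mathcal{X}$, content $\hat{\mathcal{X}}$ and core $Y$ satisfies $f^{(i,j)} = f^{(j,i)}$ for all $i \ne j$.

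Take $x \in \hat{\mathcal{X}}$. Since $f(x) \in Y$ and $Y \subseteq$ every $X \in \mathcal{X}$, the configuration $f^{(i)}(x)$ lies in $[x, f(x)]$. I would first show that $[x,f(x)] \subseteq \hat{\mathcal{X}}$: for $X \in \mathcal{X}$ with $x \in X$, both $x$ and $f(x)$ lie in $X$, and $X$ is a subcube, so it contains $[x,f(x)]$. Hence $f^{(i)}(x) \in \hat{\mathcal{X}}$. It also agrees with $x$ on coordinate $j$, so uniformity gives $f_j(f^{(i)}(x)) = f_j(x)$. Therefore $f^{(i,j)}(x) = (f_i(x), f_j(x), x_{-ij}) = f^{(j,i)}(x)$. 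If $x \notin \hat{\mathcal{X}}$, both sides equal $x$.

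For a union of arrangement networks with disjoint contents, the same computation applies. Each content is closed under that network's own updates, so every trajectory stays inside one content, where the union coincides with the single arrangement network.

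\emph{Commutative networks are unions of arrangement networks.} Let $f$ be commutative. I would decompose $\B^n$ by the equivalence relation generated by the arcs of $\Asynchronous(f)$; points not moved by any update become trivial arrangements. For a weak component $C$ I would take $\mathcal{X}_C = \{ [x, f(x)] : x \in C \}$ as the candidate arrangement.

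The key facts to extract from commutativity are the following. First, $f_i(f^{(j)}(x)) = f_i(x)$ whenever $f_j(x) \ne x_j$ and $i \ne j$; this comes from comparing the $i$-th coordinates of $f^{(j,i)}(x)$ and $f^{(i,j)}(x)$. Second, for the coordinate $j$ itself, $f_j(f^{(j)}(x))$ is either $f_j(x)$ or $x_j$, giving the fixed versus oscillating behaviour along edges. From these one shows that $f(y)$ and $f(x)$ agree on all "non-free" coordinates along any arc, and hence across $C$. This makes $Y = \bigcap \mathcal{X}_C$ nonempty, contains all $f(x)$ for $x \in C$, and gives uniformity of $f_i$ on the free dimensions. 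One also needs $\hat{\mathcal{X}}_C = C$; the inclusion $\hat{\mathcal{X}}_C \supseteq C$ is immediate, and the reverse follows because the serialisation $f^{(S)} = f^{(s_k)} \circ \dots \circ f^{(s_1)}$ reaches every point of $[x,f(x)]$ by asynchronous steps.

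I expect this second direction to be the main obstacle, specifically proving that $Y$ is nonempty and contains every $f(x)$ for $x \in C$, i.e.\ controlling coordinates that oscillate. This is where the dynamically local property $f^3 = f$, with period at most $2$, should be used, together with a careful case split on whether $f_j(f^{(j)}(x)) = x_j$. Since the statement is cited from \cite{BGT20}, at worst one defers this step to their detailed argument.
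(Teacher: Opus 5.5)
\textbf{There is a genuine gap in the hard direction.} Your first direction is correct. In the converse, three things are sound: the decomposition into weak components $C$ of $\Asynchronous(f)$, the choice $\mathcal{X}_C=\{[x,f(x)]:x\in C\}$, and the argument for $\hat{\mathcal{X}}_C=C$. The step you defer, however, is the whole substance of the hard direction. The paper gives no argument for this theorem, only the citation to Bridoux et al., so deferring it leaves that direction unproved.

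Your sketch of that step is also circular. Your two key facts are statements about a single arc. Along a $j$-arc $u\to v$ the vector $f$ changes only in coordinate $j$, and it changes there exactly when the arc is oscillating, i.e. $f^{(j)}(v)=u$. So ``$f(x)$ and $f(y)$ agree on the non-free coordinates along any arc, hence across $C$'' presupposes that a coordinate which oscillates somewhere in $C$ oscillates at \emph{every} point of $C$.

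Without that, nothing you wrote excludes a weak component containing both an oscillating $j$-pair $u,v$ and a point $z$ with $f_j(z)=z_j$. In that case one of $f(u),f(v)$ lies outside $[z,f(z)]$, so $f(C)\not\subseteq Y$. Neither of your proposed tools rules this out. The identity $f^3=f$ concerns the parallel map and says nothing about which weak component an oscillating edge lies in. A case split on $f_j(f^{(j)}(x))$ at one arc is again purely local.

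\textbf{The missing idea} is to transport oscillation across commuting squares. Let $O_j$ be the set of $u$ with $f^{(j)}(u)\neq u$ and $f^{(j)}(f^{(j)}(u))=u$. For $k\neq j$, $f^{(k)}$ commutes with $f^{(j)}$, hence with $f^{(j)}\circ f^{(j)}$.
\begin{itemize}
\item Forward arcs: let $u\in O_j$ and $w=f^{(k)}(u)$. Then $f^{(j)}(f^{(j)}(w))=f^{(k)}(f^{(j)}(f^{(j)}(u)))=w$. Also $f_j(w)=f_j(u)\neq u_j=w_j$. So $w\in O_j$.
\item Backward arcs: let $f^{(k)}(w)=u\in O_j$. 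Again $f_j(w)=f_j(u)\neq w_j$. Moreover $f^{(k)}(f^{(j)}(f^{(j)}(w)))=f^{(k)}(w)$. Comparing $j$-th coordinates, which $f^{(k)}$ does not touch, shows that $f^{(j)}(f^{(j)}(w))$ and $w$ agree in coordinate $j$. Since $f^{(j)}$ changes nothing else, they are equal, so $w\in O_j$.
\item Closure under $j$-arcs is immediate.
\end{itemize}
Hence $O_j$ is a union of weak components. On each $C$, one of two cases holds:
\begin{itemize}
\item $C\subseteq O_j$, and then $f_j(x)=\neg x_j$ on all of $C$;
\item $C\cap O_j=\emptyset$, every $j$-arc in $C$ is non-oscillating, and your first key fact makes $f_j$ constant on $C$.
\end{itemize}
This gives uniformity in every coordinate and $f(x)\in[z,f(z)]$ for all $x,z\in C$. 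That is, $\emptyset\neq f(C)\subseteq Y$, which is exactly what your sketch needed to conclude.
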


We shall focus on two special cases of commutative networks. 
On the one hand, a \Define{negation on subcubes} is a union of arrangement networks, where each arrangement $X$ is a single subcube, and $f_i(x) = \neg x_i$ for each $x \in X$. Two examples are displayed below.

\begin{center}
    
\begin{tikzpicture}
\begin{scope}[xshift=0cm]
    \node (000) at (0,0) {$000$};
    \node (001) at (1,1) {$001$};
    \node (010) at (0,2) {$010$};
    \node (011) at (1,3) {$011$};
    \node (100) at (2,0) {$100$};
    \node (101) at (3,1) {$101$};
    \node (110) at (2,2) {$110$};
    \node (111) at (3,3) {$111$};

    \path[draw] (000) -- (001) -- (011) -- (111)
    (000) -- (010) -- (110) -- (111)
    (000) -- (100) -- (101) -- (111)
    (001) -- (101)
    (010) -- (011)
    (100) -- (110);
    
    \draw[very thick, latex-latex, blue] (000) -- (010);
    \draw[very thick, latex-latex, blue] (010) -- (110);
    \draw[very thick, latex-latex, blue] (110) -- (100);
    \draw[very thick, latex-latex, blue] (100) -- (000);

    \path[draw, thick, latex-latex, blue] (011) -- (111);

\end{scope}

\begin{scope}[xshift=5cm]
    
    \node (000) at (0,0) {$000$};
    \node (001) at (1,1) {$001$};
    \node (010) at (0,2) {$010$};
    \node (011) at (1,3) {$011$};
    \node (100) at (2,0) {$100$};
    \node (101) at (3,1) {$101$};
    \node (110) at (2,2) {$110$};
    \node (111) at (3,3) {$111$};

    \path[draw] (000) -- (001) -- (011) -- (111)
    (000) -- (010) -- (110) -- (111)
    (000) -- (100) -- (101) -- (111)
    (001) -- (101)
    (010) -- (011)
    (100) -- (110);
    
    \path[draw, very thick, latex-latex, cyan] (000) -- (010);
    \path[draw, very thick, latex-latex, cyan] (100) -- (110);
    \path[draw, very thick, latex-latex, cyan] (001) -- (101);
    \path[draw, very thick, latex-latex, cyan] (011) -- (111);
\end{scope}

\end{tikzpicture}
\end{center}

On the other hand, a \Define{constant on arrangements} is a union of arrangements, where for each arrangement $X$, $f(x) = f(y)$ for all $x,y \in X$.
Two examples are displayed below.

\begin{center}
    
\begin{tikzpicture}
\begin{scope}[xshift=0cm]
    \node (000) at (0,0) {$000$};
    \node (001) at (1,1) {$001$};
    \node (010) at (0,2) {$010$};
    \node (011) at (1,3) {$011$};
    \node (100) at (2,0) {$100$};
    \node (101) at (3,1) {$101$};
    \node (110) at (2,2) {$110$};
    \node (111) at (3,3) {$111$};

    \path[draw] (000) -- (001) -- (011) -- (111)
    (000) -- (010) -- (110) -- (111)
    (000) -- (100) -- (101) -- (111)
    (001) -- (101)
    (010) -- (011)
    (100) -- (110);
    
    \draw[very thick, -latex, blue] (010) -- (110);
    \draw[very thick, -latex, blue] (010) -- (000);
    \draw[very thick, -latex, blue] (000) -- (100);
    \draw[very thick, -latex, blue] (110) -- (100);
    \draw[very thick, -latex, blue] (101) -- (100);

    \draw[very thick, -latex, blue] (011) -- (111);

\end{scope}

\begin{scope}[xshift=5cm]
    \node (000) at (0,0) {$000$};
    \node (001) at (1,1) {$001$};
    \node (010) at (0,2) {$010$};
    \node (011) at (1,3) {$011$};
    \node (100) at (2,0) {$100$};
    \node (101) at (3,1) {$101$};
    \node (110) at (2,2) {$110$};
    \node (111) at (3,3) {$111$};

    \path[draw] (000) -- (001) -- (011) -- (111)
    (000) -- (010) -- (110) -- (111)
    (000) -- (100) -- (101) -- (111)
    (001) -- (101)
    (010) -- (011)
    (100) -- (110);
    
    \path[draw, latex-, very thick, cyan] (010) -- (110);
    \path[draw, latex-, very thick, cyan] (010) -- (000);
    \path[draw, latex-, very thick, cyan] (010) -- (011);
    
    \path[draw, latex-, very thick, cyan] (101) -- (001);
    \path[draw, latex-, very thick, cyan] (101) -- (111);
    \path[draw, latex-, very thick, cyan] (101) -- (100);
\end{scope}
\end{tikzpicture}
\end{center}

We immediately see that negations on subcube are exactly the commutative networks with symmetric asynchronous graphs, and similarly the constants on arrangements are exactly the commutative networks with oriented asynchronous graphs.

\subsection{Commutative networks are trapping} \label{subsection:commutative_trapping}

This subsection is devoted to comparing commutative networks to trapping networks. Recall (Theorem \ref{theorem:trapping_networks}) that a network is trapping if and only if for any $x \in \B^n$ and any $y \in [x, f(x)]$,
\[
    [y, f(y)] \subseteq [x, f(x)].
\]

\begin{theorem}[Alternate definitions of commutative networks] \label{theorem:commutative_trapping}
Any commutative network is trapping. More precisely, the following are equivalent for $f \in \Functions( n )$:
\begin{enumerate}
    \item \label{item:commutative_definition}
    $f$ is commutative, i.e. $f^{ (i,j) } = f^{ (j,i) }$ for all $i,j \in [n]$.

    \item \label{item:commutative_xy}
    $[y, f(x)] \subseteq [y, f(y)] \subseteq [x, f(x)]$ for all $x \in \B^n$ and any $y \in [x, f(x)]$.

    \item \label{item:commutative_ST}
    $f^{ (S \Delta T) } \sqsubseteq f^{( S,T )} \sqsubseteq f^{( S \cup T )}$ for all $S, T \subseteq [n]$.
\end{enumerate}
\end{theorem}

\begin{proof}
$\ref{item:commutative_definition} \implies \ref{item:commutative_xy}$. Let $f$ be a commutative network, $x \in \B^n$ and $y \in [x, f(x)]$. Denote $y = f^{(S)}(x)$ for some $S = \Delta( x, y ) \subseteq \Delta( x, f(x) )$. We first prove $[y, f(y)] \subseteq [x, f(x)]$. For all $i \notin \Delta( x, f(x) )$, we have
\[
    f_i( y ) = f_i( f^{(S)}( x ) ) = f_i( x ) = x_i = y_i,
\]
hence $i \notin \Delta( y, f(y) )$. We now prove that $[y, f(x)] \subseteq [y, f(y)]$. For all $j \in \Delta( y, f(x) ) = \Delta( x, f(x) ) \setminus S$, we have
\[
    f_j( y ) = f_j( f^{(S)}( x ) ) = f_j( x ) \ne x_j = y_j,
\]
hence $j \in \Delta( y, f(y) )$.

$\ref{item:commutative_xy} \implies \ref{item:commutative_definition}$. Suppose, for the sake of contradiction, that $f$ is not commutative, i.e. there exist $i, j \in [n]$ and $x \in \B^n$ such that $f_i( f^{(j)}( x ) ) \ne f_i( x )$. Denoting $y = f^{(j)}( x )$, we have $y \in [x, f(x)]$. If $i \in \Delta( x, f(x) )$, then $\Delta( y, f(x) ) \not\subseteq \Delta( y, f(y) )$; if $i \notin \Delta( x, f(x) )$, then $\Delta( y, f(y) ) \not\subseteq \Delta( x, f(x) )$. In either case, we obtain a contradiction.

$\ref{item:commutative_definition} \implies \ref{item:commutative_ST}$. Since $f$ is trapping, we have $f^{( S,T )} \sqsubseteq f^{(S \cup T)}$. Moreover, by commutativity,
\[
    f^{(S,T)} = f^{( S \Delta T, S \cap T, S \cap T )},
\]
and hence $f^{(S,T)}_{S \Delta T} = f_{S \Delta T}$, which implies $f^{ (S \Delta T) } \sqsubseteq f^{( S,T )}$.

$\ref{item:commutative_ST} \implies \ref{item:commutative_definition}$. If $S \cap T = \emptyset$, we have $S \Delta T = S \cup T$ and hence $f^{( S,T )} = f^{ (S \cup T) }$. Therefore, $f$ is commutative.
\end{proof}

\subsection{Collections of principal trapspaces of commutative networks}

We now classify the collections of principal trapspaces of commutative networks. Say a collection of subcubes $\mathcal{A} \in \Collections( n )$ is \Define{convex} if the following holds: for all $Q, R \in \mathcal{A}$ and any $S$ with $Q \subseteq S \subseteq R$, $S \in \mathcal{A}$.

\begin{theorem}[Commutative networks and convex collections of principal trapspaces] \label{theorem:convex_principal_trapspaces}
A pre-principal collection of subcubes is the collection of principal trapspaces of a commutative network if and only if it is convex.
\end{theorem}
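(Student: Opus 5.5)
Since commutative networks are trapping (Theorem~\ref{theorem:commutative_trapping}), I would first reduce the statement to a condition on a single network. Let $\mathcal{Q}$ be pre-principal and put $g = F_\PrincipalTrapspaces(\mathcal{Q})$. By Lemma~\ref{lemma:principal_pre-principal}, $g$ is trapping, $\PrincipalTrapspaces(g) = \mathcal{Q}$, and $[x, g(x)] = \mathcal{Q}(x) = T_g(x)$ for all $x$.

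If $f$ is commutative with $\PrincipalTrapspaces(f) = \mathcal{Q}$, then Theorem~\ref{theorem:T(f)=T(g)} gives $\Trapping{f} = \Trapping{g} = g$. Since $f$ is trapping, $f = \Trapping{f}$ by Theorem~\ref{theorem:trapping_networks}, so $f = g$. Hence the theorem is equivalent to: $g$ is commutative if and only if $\mathcal{Q}$ is convex. I will use item~\ref{item:commutative_xy} of Theorem~\ref{theorem:commutative_trapping}. Since $g$ is already trapping, commutativity amounts to
\[
[y, g(x)] \subseteq \mathcal{Q}(y) \quad \text{for all } x \text{ and all } y \in \mathcal{Q}(x).
\]

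\textbf{Convex $\Rightarrow$ commutative.} Let $y \in \mathcal{Q}(x)$. Suppose some $j \in \Delta(y, g(x))$ is not free in $\mathcal{Q}(y)$. Note $\Delta(y,g(x)) \subseteq \Delta(x,g(x))$, so $j$ is free in $\mathcal{Q}(x)$. Consider $W = [\mathcal{Q}(y) \cup \{x\}]$. Since $\mathcal{Q}(y) \subseteq \mathcal{Q}(x)$ (trapping) and $x \in \mathcal{Q}(x)$, we have $\mathcal{Q}(y) \subseteq W \subseteq \mathcal{Q}(x)$, so $W \in \mathcal{Q}$ by convexity. As $W$ contains $x$, minimality gives $W = \mathcal{Q}(x)$. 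But coordinate $j$ is fixed on $\mathcal{Q}(y)$ at $y_j$, and $x_j = y_j$ because $j \notin \Delta(x,y)$. Hence $W$ fixes $j$, while $\mathcal{Q}(x)$ contains $g(x)$ with $g(x)_j \neq x_j$. This is a contradiction.

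\textbf{Commutative $\Rightarrow$ convex.} Let $Q, R \in \mathcal{Q}$ and let $S$ be a subcube with $Q \subseteq S \subseteq R$. Pick $x$ with $\mathcal{Q}(x) = R$ and $q$ with $\mathcal{Q}(q) = Q$; these exist by pre-principality. The argument then runs in four steps.
\begin{enumerate}
    \item Since $q \in R = \mathcal{Q}(x)$, the commutativity condition gives $[q, g(x)] \subseteq \mathcal{Q}(q) = Q$, so $g(x) \in Q \subseteq S$.
    \item Let $z = S - g(x)$ be the opposite of $g(x)$ in $S$. Then $z \in R$ and $[z, g(x)] = S$, so the commutativity condition gives $S \subseteq \mathcal{Q}(z)$.
    \item Now $g(x) \in \mathcal{Q}(z)$, so applying the condition once more (with $z$ in the role of $x$ and $g(x)$ in the role of $y$) gives $[g(x), g(z)] \subseteq \mathcal{Q}(g(x))$. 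Because $g(x) \in Q$ and $g$ is trapping, $\mathcal{Q}(g(x)) \subseteq Q \subseteq S$, so $g(z) \in S$.
    \item Thus $\mathcal{Q}(z) = [z, g(z)] \subseteq S$, hence $\mathcal{Q}(z) = S$ and $S \in \mathcal{Q}$.
\end{enumerate}

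The main obstacle is this last direction. The difficulty is finding a witness $z$ whose principal trapspace is exactly $S$, and getting both inclusions. The trick I plan to rely on is first showing $g(x) \in Q$ and then taking $z$ to be the opposite of $g(x)$ in $S$, so that the commutativity inclusion can be applied twice.
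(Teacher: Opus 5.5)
Your proof is correct. The direction convex $\Rightarrow$ commutative is essentially the paper's argument: your $W = [\mathcal{Q}(y) \cup \{x\}]$ is the paper's $[x, y, f(y)]$, and the contradiction on coordinate $j$ is the same.

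The direction commutative $\Rightarrow$ convex takes a genuinely different route. The paper proceeds in four stages:
\begin{itemize}
\item it chooses an extremal pair of representatives $x, y$ with $T_f(y) \subseteq T_f(x)$ at minimum Hamming distance;
\item it proves the Claim $f(x) = f(y)$ for that pair;
\item it writes an intermediate cube as $S = [z, f(x)]$ with $z \in [x,y]$;
\item it uses Lemma~\ref{lemma:delta} ($\HammingDistance(x,y) \ge \delta_x - \delta_y$, with equality iff $f(x)=f(y)$) in a distance-additivity argument to force $f(z) = f(x)$.
\end{itemize}

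Your witness $z = S - g(x)$ has the same form. You certify $\mathcal{Q}(z) = S$ using only the pointwise inclusion $g(x) \in \mathcal{Q}(y)$ for $y \in \mathcal{Q}(x)$ (item~\ref{item:commutative_xy} of Theorem~\ref{theorem:commutative_trapping}), applied three times:
\begin{itemize}
\item to an arbitrary representative $q$ of $Q$, giving $g(x) \in Q$;
\item to $z$, giving $S \subseteq \mathcal{Q}(z)$;
\item to $g(x) \in \mathcal{Q}(z)$, giving $g(z) \in \mathcal{Q}(g(x)) \subseteq Q$.
\end{itemize}

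This removes the extremal choice, the Claim and Lemma~\ref{lemma:delta} altogether, and it works for any representative $x$ of $R$. As a by-product, taking $S = Q$ shows that $Q - g(x)$ is always a representative of $Q$ with the same image as $x$. So you obtain representatives with equal images, as in the paper's Claim, without any minimality hypothesis. In return, the paper's route produces the quantitative Lemma~\ref{lemma:delta}, relating distance to the drop in trapspace dimension, which has some independent interest.

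Your preliminary reduction to the single network $g = F_\PrincipalTrapspaces(\mathcal{Q})$ is valid, though not strictly needed: the paper simply argues with the given commutative $f$ in that direction.
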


We begin the proof with a technical lemma. For any $f \in \Functions(n)$ and any $x \in \B^n$, denote the dimension of $T_f(x)$ as $\delta_x = \HammingDistance( x, f(x) )$.

\begin{lemma} \label{lemma:delta}
Let $f$ be a commutative network.  For any $y \in [x, f(x)]$, we have 
\[
    \HammingDistance(x, y) \ge \delta_x - \delta_y \ge 0,    
\]
and $\HammingDistance(x, y) = \delta_x - \delta_y$ if and only if $f(y) = f(x)$.
\end{lemma}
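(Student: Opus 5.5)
The plan is to rely entirely on item \ref{item:commutative_xy} of Theorem \ref{theorem:commutative_trapping}. For a commutative $f$, every $x \in \B^n$ and every $y \in [x, f(x)]$ satisfy
\[
    [y, f(x)] \subseteq [y, f(y)] \subseteq [x, f(x)].
\]
By Theorem \ref{theorem:trapping_networks}, $f$ is trapping, so $T_f(x) = [x, f(x)]$ and $\delta_x = |\Delta(x, f(x))|$ is its dimension. We translate the two inclusions into statements about difference sets. Write $D_x = \Delta(x, f(x))$, $D_y = \Delta(y, f(y))$ and $S = \Delta(x,y) \subseteq D_x$. Then:
\begin{itemize}
    \item the right inclusion gives $D_y \subseteq D_x$;
    \item the left inclusion gives $\Delta(y, f(x)) \subseteq D_y$;
    \item since $y$ agrees with $x$ off $S$ and $S \subseteq D_x$, we have $\Delta(y, f(x)) = D_x \setminus S$.
\end{itemize}

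Combining these yields $D_x \setminus S \subseteq D_y \subseteq D_x$. Taking cardinalities gives $\delta_x - \HammingDistance(x,y) \le \delta_y \le \delta_x$, which is exactly $\HammingDistance(x,y) \ge \delta_x - \delta_y \ge 0$.

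For the equality case, equality holds iff $D_y = D_x \setminus S$, i.e.\ iff $D_y \cap S = \emptyset$.

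\emph{If $f(y) = f(x)$:} then $D_y = \Delta(y, f(x)) = D_x \setminus S$, so equality holds.

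\emph{If $D_y = D_x \setminus S$:} we show $f(y) = f(x)$ coordinatewise.
\begin{itemize}
    \item For $i \notin D_x$: $f_i(y) = y_i = x_i = f_i(x)$, since $i \notin D_y$ and $i \notin S$.
    \item For $i \in S$: $i \notin D_y$, so $f_i(y) = y_i = \neg x_i = f_i(x)$, using $S \subseteq D_x$.
    \item For $i \in D_x \setminus S$: $f_i(y) = \neg y_i = \neg x_i = f_i(x)$.
\end{itemize}
Hence $f(y) = f(x)$.

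Nothing here should be a real obstacle. The only point needing care is the bookkeeping identity $\Delta(y, f(x)) = D_x \setminus S$, which requires $y \in [x, f(x)]$, i.e.\ $S \subseteq D_x$. Alternatively, one could argue directly from serialisation, $f_j(f^{(S)}(x)) = f_j(x)$ for $j \notin S$, as in the proof of Theorem \ref{theorem:commutative_trapping}; the set-inclusion route above is cleaner.
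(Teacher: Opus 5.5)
Your proof is correct and follows essentially the same route as the paper: both derive everything from the chain $[y, f(x)] \subseteq [y, f(y)] \subseteq [x, f(x)]$ of Theorem~\ref{theorem:commutative_trapping}. The paper phrases it through Hamming distances, while you phrase it through the difference sets $D_x \setminus S \subseteq D_y \subseteq D_x$. In the equality case your coordinatewise check can be shortened: $D_y = D_x \setminus S = \Delta(y, f(x))$ already says that $f(y)$ and $f(x)$ are both obtained from $y$ by flipping the same coordinates, so $f(y) = f(x)$, which is exactly the paper's argument.
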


\begin{proof}
Let $x \in \B^n$ and $y \in [x, f(x)]$. Firstly, since $f(y) \in [x, f(x)]$, we immediately obtain $\delta_x \ge \delta_y$. Secondly, we have
\[
    \delta_y + \HammingDistance( x, y ) = \HammingDistance( y, f(y) ) + \HammingDistance( x, y ) \ge \HammingDistance( y, f(x) ) + \HammingDistance( x, y ) \ge \HammingDistance( x, f(x) ) = \delta_x,
\]
where we used Theorem \ref{theorem:commutative_trapping} to show $\HammingDistance( y, f(y) ) \ge \HammingDistance( y, f(x) )$. Thirdly, by the above, we have $\HammingDistance(x, y) = \delta_x - \delta_y$ only if $\HammingDistance( y, f(y) ) = \HammingDistance( y, f(x) )$. Since $f(x) \in [y, f(y)]$, this is equivalent to $f(x) = f(y)$. Finally, if $f(x) = f(y)$, then 
\[
    \delta_x = \HammingDistance( x, f(x) ) = \HammingDistance( x, y ) + \HammingDistance( y, f(x) ) = \HammingDistance( x, y ) + \HammingDistance( y, f(y) ) = \HammingDistance( x, y ) + \delta_y.
\]
\end{proof}

\begin{proof}[Proof of Theorem \ref{theorem:convex_principal_trapspaces}]
Let $f$ be a commutative network. Now, suppose that $T_f( y ) \subseteq T_f( x )$ and that $x$ and $y$ are nearest possible, i.e. if $T_f( x' ) = T_f( x )$ and $T_f( y' ) \subseteq T_f( y )$, then $\HammingDistance( x, y ) \le \HammingDistance( x', y' )$.

\begin{claim}
With the conditions above, $f(x) = f(y)$.
\end{claim}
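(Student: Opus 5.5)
The plan is a proof by contradiction: assume $f(x) \ne f(y)$ and build a $y'$ with $T_f(y') \subseteq T_f(y)$ and $\HammingDistance(x, y') < \HammingDistance(x, y)$. Taking $x' = x$, this contradicts the choice of $x$ and $y$ as nearest possible. The tools are Theorem \ref{theorem:commutative_trapping} (commutative networks are trapping and satisfy $[y, f(x)] \subseteq [y, f(y)] \subseteq [x, f(x)]$) and item \ref{item:trapping_interval=Tf} of Theorem \ref{theorem:trapping_networks} ($T_f(z) = [z, f(z)]$ for trapping $f$).

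\emph{Step 1: setup.} Since $f$ is commutative, it is trapping, so $T_f(x) = [x, f(x)]$ and $T_f(y) = [y, f(y)]$. From $T_f(y) \subseteq T_f(x)$ we get $y \in [x, f(x)]$, hence $S := \Delta(x, y) \subseteq \Delta(x, f(x))$. For every $i \in S$ this gives $y_i = f_i(x) \ne x_i$.

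\emph{Step 2: find a coordinate where $f(y)$ and $f(x)$ disagree.} Assume $f(y) \ne f(x)$. Theorem \ref{theorem:commutative_trapping} gives $[y, f(x)] \subseteq [y, f(y)]$, so $f(x) \in [y, f(y)]$ and $\Delta(y, f(x)) \subseteq \Delta(y, f(y))$. If these two sets were equal, then $f(x)$ and $f(y)$ would agree with $y$ off the same set of coordinates and differ from $y$ on it, forcing $f(x) = f(y)$. Hence there exists
\[
    i \in \Delta(y, f(y)) \setminus \Delta(y, f(x)),
\]
so $f_i(y) \ne y_i = f_i(x)$.

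\emph{Step 3: show $i \in S$.} Since $f(y) \in [y, f(y)] \subseteq [x, f(x)]$ and $f_i(y) \ne f_i(x)$, coordinate $i$ must be free in $[x, f(x)]$, i.e. $i \in \Delta(x, f(x))$. Then $y_i = f_i(x) \ne x_i$, which means $i \in \Delta(x, y) = S$.

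\emph{Step 4: the contradiction.} Let $y' = f^{(i)}(y) = (\neg y_i, y_{-i})$. Then $y' \in [y, f(y)] = T_f(y)$, and since $T_f(y)$ is a trapspace, $T_f(y') \subseteq T_f(y)$. Moreover $y'_i = x_i$ because $i \in S$, so $\HammingDistance(x, y') = \HammingDistance(x, y) - 1$. With $x' = x$, the pair $(x', y')$ satisfies $T_f(x') = T_f(x)$ and $T_f(y') \subseteq T_f(y)$ but is strictly closer than $(x, y)$. This contradicts minimality, so $f(x) = f(y)$.

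The only delicate step is Step 3: confirming that the flipped coordinate $i$ lies in $\Delta(x, y)$, so that flipping it moves $y$ towards $x$ rather than away. This rests on using both inclusions of Theorem \ref{theorem:commutative_trapping} together.
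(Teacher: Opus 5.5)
Your proof is correct and is essentially the paper's argument run contrapositively. The paper uses the same flip $y' = y + e_i$ against minimality to show $\Delta(y,f(y)) \cap \Delta(x,y) = \emptyset$, deduces $y \in [x, f(y)]$, and then combines $f(x) \in [y,f(y)]$ with $[y,f(y)] \subseteq [x,f(x)]$ to get $[x,f(y)] = [x,f(x)]$, hence $f(y) = f(x)$. You instead assume $f(x) \neq f(y)$ and use those same two inclusions to exhibit a coordinate in that intersection directly.
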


\begin{proof}
We first prove that $y \in [x, f(y)]$. If there exists $i \in \Delta( y, f(y) ) \cap \Delta( x, y )$, then $y' = y + e_i$ satisfies $T_f( y' ) \subseteq T_f( y )$ and $\HammingDistance( x, y' ) = \HammingDistance( x, y ) - 1$, which contradicts our assumptions. Therefore, $\Delta( y, f(y) ) \cap \Delta( x, y ) = \emptyset$. We obtain that for all $j \in [n]$, 
\[
    y_j \ne x_j \implies y_j = f( y )_j \ne x_j,
\]
which is equivalent to $\Delta( y, x ) \subseteq \Delta( x, f(y) )$. Similarly, we obtain $\Delta( y, f(y) ) \subseteq \Delta( x, f(y) )$. Thus, $y \in [x, f(y)]$.

Now, $f(x) \in [y, f(y)]$ because $f$ is commutative, hence
\[
    [ x, f(y) ] = [ x, y, f(y) ] = [ x, y, f(y), f(x) ] = [x, f(x)],
\]
thus $f(y) = f(x)$.
\end{proof}

Any subcube $S$ satisfying $T_f(y) = [y, f(x)] \subseteq S \subseteq [x, f(x)] = T_f(x)$ can be expressed as $S = [z, f(x)]$ for some $z \in [x, y]$. Therefore, we only need to prove that $f(z) = f(x)$ for all $z \in [x, y]$. Since $z = f^{(\Delta(x, z))}( x )$, we obtain
\[
    f^{( \Delta(z,y) )}( z ) = f^{( \Delta(z,y) )}( f^{( \Delta(x, z) )} ( x ) ) = f^{(\Delta(x,y))}( x ) = y.
\]
Therefore, $y \in [z, f(z)]$. We now repeatedly use Lemma \ref{lemma:delta}. Since $f(y) = f(x)$, we have $\HammingDistance( x, y ) = \delta_x - \delta_y$. We obtain
\[
    \HammingDistance( x, y ) = \delta_x - \delta_y = ( \delta_x - \delta_z ) + ( \delta_z - \delta_y ) \le \HammingDistance( x, z ) + \HammingDistance( z,y ) = \HammingDistance( x, y ),
\]
thus $\HammingDistance( x, z ) = \delta_x - \delta_z$, and hence $f(z) = f(x)$, and we are done.

\medskip

Conversely, suppose $\mathcal{A}$ is a convex pre-principal collection of subcubes and let $f = F_\PrincipalTrapspaces( \mathcal{A} )$; recall that $f$ is trapping. We only need to prove that $f(x) \in [y, f(y)]$ for all $x \in \B^n$, $y \in [x, f(x)]$. Suppose, for the sake of contradiction, that $f(x) \notin [y, f(y)]$, say $f_j(x) \ne y_j = f_j(y)$ for some $j \in \Delta( x, f(x) )$. Then
\[
    [y, f(y)] \subseteq [x, y, f(y)] \subseteq [x, f(x)],
\]
and by convexity, $[x, y, f(y)]$ is a principal trapspace of $f$, that must contain $f(x)$. But $f_j(x) \notin \{ x_j, y_j, f_j(y) \}$, hence $f(x) \notin [x, y, f(y)]$, which is the desired contradiction.
\end{proof}

\section{Marseille networks} \label{section:Marseille_networks}

\subsection{Alternate definitions} \label{subsection:Marseille_definitions}

A network is \Define{Marseille} if it is commutative and bijective. As seen above, this is equivalent to commutative and locally bijective, and equivalent to commutative and globally bijective.
As expected, Marseille networks are exactly the negations on subcubes.

\begin{theorem}[Alternate definitions of Marseille networks] \label{theorem:Marseille_definitions}
The following are equivalent for $f \in \Functions( n )$:
\begin{enumerate}
    \item \label{item:Marseille_definition}
    $f$ is Marseille, i.e. $f$ is commutative and bijective.

    \item \label{item:Marseille_negation_on_subcubes}
    $f$ is a negation on subcubes.

    \item \label{item:Marseille_xy}
    for all $x \in \B^n$ and $y \in [x, f(x)]$, $[y, f(y)] = [x, f(x)]$.

    \item \label{item:Marseille_ST}
    for all $S, T \subseteq [n]$, $f^{ (S \Delta T) } = f^{( S,T )}$.
\end{enumerate}
\end{theorem}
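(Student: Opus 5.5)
I will prove the cycle \ref{item:Marseille_definition} $\implies$ \ref{item:Marseille_xy} $\implies$ \ref{item:Marseille_negation_on_subcubes} $\implies$ \ref{item:Marseille_definition}, and separately \ref{item:Marseille_xy} $\iff$ \ref{item:Marseille_ST}.

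\textbf{\ref{item:Marseille_definition} $\implies$ \ref{item:Marseille_xy}.} Let $f$ be commutative and bijective, $x \in \B^n$ and $y \in [x, f(x)]$. By Theorem \ref{theorem:commutative_trapping} we already have $[y, f(y)] \subseteq [x, f(x)]$, so only the reverse inclusion is needed. By the review, bijective commutative networks are involutive and globally bijective. Write $y = f^{(S)}(x)$ with $S = \Delta(x,y)$. Then $f^{(S)}$ is a bijection, so $f^{(S)}$ is an involution: since $f$ is commutative it is a commutative network itself and it is bijective. Hence $x = f^{(S)}(y)$, so $x \in [y, f(y)]$. Trapping (Theorem \ref{theorem:trapping_networks}) then gives $[x,f(x)] \subseteq [y,f(y)]$.

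\textbf{\ref{item:Marseille_xy} $\implies$ \ref{item:Marseille_negation_on_subcubes}.} Condition \ref{item:Marseille_xy} implies condition \ref{item:delta(y,g(y))} of Theorem \ref{theorem:trapping_networks}, so $f$ is trapping. Every $y$ in $X_x := [x,f(x)]$ has interval exactly $X_x$, so $f(y) = X_x - y$. This means $f$ negates on $X_x$ exactly the coordinates free in $X_x$. Now take two intervals $X_x$ and $X_z$ that share a point $w$. Then $X_x = [w,f(w)] = X_z$. So the distinct intervals partition $\B^n$ into subcubes, and on each subcube $f$ acts by negating the free coordinates. Singleton intervals are fixed points, i.e. trivial subcubes. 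This is precisely a negation on subcubes.

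\textbf{\ref{item:Marseille_negation_on_subcubes} $\implies$ \ref{item:Marseille_definition}.} A negation on subcubes is a union of arrangement networks, so it is commutative by Theorem \ref{theo:boolean_cs}. One can also check this directly, since updates of the free coordinates of a fixed subcube commute. It is an involution, hence bijective.

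\textbf{\ref{item:Marseille_xy} $\iff$ \ref{item:Marseille_ST}.} This equivalence is the main obstacle. For the forward direction, $f$ is a negation on subcubes. The point $s=f^{(S)}(x)$ stays in the subcube $X$ of $x$. On $X$, applying $f^{(S)}$ then $f^{(T)}$ flips each free coordinate $i$ once for each of $S$, $T$ that contains $i$. That is, it flips exactly the free coordinates in $S\Delta T$, which equals $f^{(S\Delta T)}(x)$.

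For the converse, taking $S=T$ gives $f^{(S,S)}=\id$. Also, disjoint $S,T$ give $f^{(S,T)}=f^{(S\cup T)}$, so $f$ is commutative by the third equivalent definition. Since $f^{([n],[n])}=\id$, $f$ is an involution, hence bijective, which gives \ref{item:Marseille_definition}. The delicate points are checking that $f^{(S)}$ preserves each subcube and how coordinates flip under the $\Delta$ bookkeeping; I expect these to be routine.
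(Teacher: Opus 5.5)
Your proof is correct, but it is organised differently from the paper's. The paper's route is:
\begin{itemize}
\item $1 \Rightarrow 3$ by a pigeonhole argument: if $x \notin [y,f(y)]$ with $y = f^{(S)}(x)$, the bijection $f^{(S)}$ would map $T_f(y) \cup \{x\}$ into $T_f(y)$.
\item $1 \Leftrightarrow 2$ taken wholesale from Theorem~\ref{theo:boolean_cs}.
\item $3 \Rightarrow 1$ proved separately: commutativity comes from Theorem~\ref{theorem:commutative_trapping}, and bijectivity by contradiction using $T_f(f(x)) \subseteq T_f(x) \cap T_f(y) \subsetneq T_f(x)$.
\item $1 \Rightarrow 4$ algebraically, via $f^{(S,T)} = f^{(S \Delta T, S\cap T, S\cap T)}$.
\end{itemize}

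You instead close the cycle $1 \Rightarrow 3 \Rightarrow 2 \Rightarrow 1$ and add $3 \Rightarrow 4 \Rightarrow 1$.

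\begin{itemize}
\item Your $1 \Rightarrow 3$ uses that $f^{(S)}$ is itself a bijective commutative network, hence an involution by the reviewed results, so $x = f^{(S)}(y) \in [y,f(y)]$. The one line you leave implicit is why $f^{(S)}$ is commutative: $(f^{(S)})^{(i)}$ equals $f^{(i)}$ for $i \in S$ and $\id$ otherwise.
\item Your $3 \Rightarrow 2$ partition argument is the genuinely new ingredient. Intervals sharing a point coincide, and $f(y) = [x,f(x)] - y$ on each interval. This derives the negation-on-subcubes structure directly from the interval condition. You then need only the easy direction of Theorem~\ref{theo:boolean_cs}, or your direct check, for $2 \Rightarrow 1$, and item 4 follows by transparent coordinate bookkeeping.
\end{itemize}

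Your route gives a self-contained structural proof that commutative bijective networks are negations on subcubes. It does not rely on the hard direction of the classification of commutative networks, beyond the reviewed facts that such networks are globally bijective and involutive. The paper's route is shorter on the page because it leans on that classification. Its $1 \Rightarrow 3$ also uses slightly weaker input: global bijectivity alone, without involutivity.
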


\begin{proof}
$\ref{item:Marseille_definition} \implies \ref{item:Marseille_xy}$. Since $f$ is commutative, we have $[ y, f(y) ] \subseteq [ x, f(x) ]$. Suppose $[ y, f(y) ] \subset [ x, f(x) ]$ for some $y \in [x, f(x)]$; then $x \notin [y, f(y)]$. We have $y = f^{(S)}( x )$ for some $S \subseteq [n]$; therefore $f^{(S)}( T_f(y) \cup \{ x \} ) \subseteq T_f(y)$, which contradicts the fact that $f^{(S)}$ is bijective.

$\ref{item:Marseille_definition} \iff \ref{item:Marseille_negation_on_subcubes}$. This easily follows from Theorem \ref{theo:boolean_cs}.

$\ref{item:Marseille_xy} \implies \ref{item:Marseille_definition}$. By Theorem \ref{theorem:commutative_trapping}, $f$ is commutative. 

For the sake of contradiction, suppose that $f$ is not bijective, i.e. there exist distinct $x, y \in \B^n$ such that $f( x ) = f( y )$. If $y \in [x, f(x)]$, we have $[y, f(x)] = [y, f(y)] = [x, f(x)]$ and hence $x = y$. Therefore, $y \notin T_f(x)$ and $x \notin T_f(y)$. But then, $z = f(x)$ satisfies $z \in [x, f(x)]$ and $f(z) \in T_f(z) \subseteq T_f( x ) \cap T_f( y ) \subset T_f(x)$, hence $[z, f(z)] \ne [x, f(x)]$, which is the desired contradiction.

$\ref{item:Marseille_definition} \implies \ref{item:Marseille_ST}$. We have
\[
    f^{(S, T)} = f^{( S \Delta T, S \cap T, S \cap T )} = f^{(S \Delta T)}.
\]

$\ref{item:Marseille_ST} \implies \ref{item:Marseille_definition}$. $f$ is clearly commutative, and $f^{(S, S)} = \id$ for all $S \subseteq [n]$, i.e. $f$ is globally bijective. 
\end{proof}

\subsection{Relations amongst the three graphs: the symmetric case}

Recall that a graph $\Gamma$ is symmetric if $(u,v) \in E$ implies $(v,u) \in E$. We now investigate networks with symmetric (asynchronous, general asynchronous, trapping) graphs.

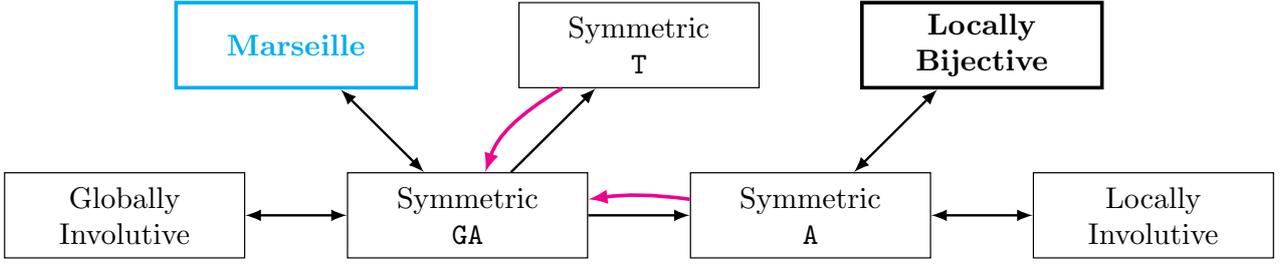
\begin{figure}
\centering
\resizebox{\textwidth}{!}{
\begin{tikzpicture}[yscale=2, xscale=4, font=\small]

\begin{scope}[yshift=0cm]

    \node[draw, minimum height=1cm, minimum width=2.8cm, align=center] (GI) at (-0.5,0) {Globally\\ Involutive};
    \node[draw, minimum height=1cm, minimum width=2.8cm, align=center] (SGA) at (0.5,0) {Symmetric\\ $\GeneralAsynchronous$};
    \node[draw, very thick, minimum height=1cm, minimum width=2.8cm, align=center, cyan] (M) at (0,1) {\textbf{Marseille}};
    
    \node[draw, minimum height=1cm, minimum width=2.8cm, align=center] (ST) at (1,1) {Symmetric\\ $\TrappingGraph$};

    \node[draw, minimum height=1cm, minimum width=2.8cm, align=center] (SA) at (1.5,0) {Symmetric \\ $\Asynchronous$};
    \node[draw, minimum height=1cm, minimum width=2.8cm, align=center] (LI) at (2.5,0) {Locally\\ Involutive};
    \node[draw, very thick, minimum height=1cm, minimum width=2.8cm, align=center] (LB) at (2,1) {\textbf{Locally}\\ \textbf{Bijective}};

    \draw[thick, latex-latex] (SGA) -- (M);
    \draw[thick, latex-latex] (SGA) -- (GI);
    
    \draw[thick, -latex] (SGA) -- (ST);
    \draw[-latex, very thick, magenta] (ST) to [bend right=15] (SGA);
    
    \draw[thick, -latex] (SGA) -- (SA);
    \draw[-latex, very thick, magenta] (SA) to [bend right=15] (SGA);
    
    \draw[thick, latex-latex] (SA) -- (LB);
    \draw[thick, latex-latex] (SA) -- (LI);
    
\end{scope}
\end{tikzpicture}
}
\caption{Relationships amongst the three main graphs: the symmetric case.} \label{figure:graphs_symmetric}
\end{figure}

We shall represent the implications amongst different properties related to symmetric graphs in the diagram in Figure \ref{figure:graphs_symmetric}. We shall make use of such diagrams in Figures \ref{figure:Marseille}, \ref{figure:graphs_triangular} and \ref{figure:Lille} as well. In such diagrams, black arrows represent implications that hold for all Boolean networks, \textcolor{magenta}{magenta} arrows represent implications that hold for all trapping networks, and \textcolor{blue}{blue} arrows represent implications that hold for all commutative networks. 

We say that such a diagram is \Define{correct} if all the implications depicted indeed hold; we say that it is \Define{complete} if any other implication (that might hold for all networks, or all trapping networks, or all commutative networks) between the different properties of the diagram can be inferred from the diagram. For instance, in Figure \ref{figure:Marseille}, the implication ``Commutative and Bijective $\implies$ Marseille'' can be inferred from the graph by following two arrows; on the other hand, there exists a counterexample to the implication ``Trapping and Bijective $\implies$ Locally Bijective.''

When proving the correctness of a diagram, we prove a sufficient subset of implications, such that any other implication follows from one in that subset. Similarly, when proving the completeness of a diagram, we exhibit a counterexample for each implication in a sufficient subset of incorrect implications.

The diagram in Figure \ref{figure:graphs_symmetric} not only gives relations amongst symmetric asynchronous, general asynchronous, and trapping graphs. It also provides several alternate definitions of Marseille networks.

\begin{theorem}[Graphs defined from networks: Symmetric graphs] \label{theorem:graphs_symmetric}
The diagram in Figure \ref{figure:graphs_symmetric} is correct and complete.
\end{theorem}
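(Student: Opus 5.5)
The plan is to prove the black, then the magenta implications of Figure \ref{figure:graphs_symmetric}, and finally to give small counterexamples for completeness. Two easy facts come first, both valid for all networks.

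\emph{Symmetric $\Asynchronous \iff$ locally involutive $\iff$ locally bijective.} The map $f^{(i)}$ preserves every pair $\{x,(\neg x_i,x_{-i})\}$ and on each such pair is the identity, a constant, or the swap. It is bijective iff no pair is collapsed, i.e.\ on every pair it is the identity or the swap. This holds iff $f^{(i)}$ is an involution, and also iff every arc $x\to(\neg x_i,x_{-i})$ of $\Asynchronous(f)$ has its reverse.

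\emph{Symmetric $\GeneralAsynchronous \implies$ symmetric $\Asynchronous$.} The arcs of $\Asynchronous(f)$ are exactly the arcs of $\GeneralAsynchronous(f)$ between configurations at Hamming distance at most $1$. The reverse of such an arc in a symmetric $\GeneralAsynchronous(f)$ is again at distance $1$, hence lies in $\Asynchronous(f)$.

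For the central equivalences I would go around the cycle Marseille $\implies$ globally involutive $\implies$ symmetric $\GeneralAsynchronous$ $\implies$ Marseille.
\begin{itemize}
\item Marseille $\implies$ globally involutive: Theorem \ref{theorem:Marseille_definitions}\ref{item:Marseille_ST} with $T=S$ gives $f^{(S,S)}=f^{(\emptyset)}=\id$.
\item Globally involutive $\implies$ symmetric $\GeneralAsynchronous$: if $y=f^{(S)}(x)$ then $f^{(S)}(y)=x$.
\item Symmetric $\GeneralAsynchronous$ $\implies$ Marseille: I would establish condition \ref{item:Marseille_xy} of Theorem \ref{theorem:Marseille_definitions}, namely $[y,f(y)]=[x,f(x)]$ for all $y\in[x,f(x)]$.
\end{itemize}
Symmetry gives $x\in[y,f(y)]$, and reading this coordinatewise yields $f_j(y)=x_j\ne y_j$ for every $j\in\Delta(x,y)$. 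So $\Delta(x,y)\subseteq\Delta(y,f(y))$, and on $\Delta(x,y)$ the vector $f(y)$ agrees with $f(x)$. What remains is to control coordinates $i\notin\Delta(x,y)$. For these I would show $f_i(y)=f_i(x)$ by induction on $\HammingDistance(x,y)$, stepping along a geodesic from $x$ to $y$ inside $[x,f(x)]$.

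For a single step $y=(\neg x_j,x_{-j})$, local involutivity (already available) gives $f_j(y)=x_j$. For $i\ne j$, suppose $f_i(y)\ne f_i(x)$. I would build a general asynchronous transition from $x$ or from $y$ whose target $z$ differs in coordinate $i$, for instance $z=f^{(\{i,j\})}(y)$ or $f^{(\{i,j\})}(x)$. Applying symmetry to the arc $(\cdot,z)$ then forces $f_i(z)$ and $f_j(z)$ to take the values of the source. Combined with local involutivity in coordinate $i$, this should lead to a contradiction. The same per-coordinate argument shows both $\Delta(y,f(y))\subseteq\Delta(x,f(x))$ and the reverse inclusion, giving equality of the intervals.

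\textbf{Main obstacle.} This single-step case analysis is where I expect the difficulty, namely choosing the right auxiliary transition $z$ in each sub-case ($i\in\Delta(x,f(x))$ or not, and $f_i(y)$ equal to $x_i$ or not).

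Two more black implications are then immediate. Symmetric $\GeneralAsynchronous\implies$ symmetric $\TrappingGraph$, because a Marseille network is commutative, hence trapping by Theorem \ref{theorem:commutative_trapping}, so $\TrappingGraph(f)=\GeneralAsynchronous(f)$ by Theorem \ref{theorem:trapping_networks}. Symmetric $\GeneralAsynchronous\implies$ symmetric $\Asynchronous$ was shown above.

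For the magenta arrows, assume $f$ is trapping. Then $\TrappingGraph(f)=\GeneralAsynchronous(f)$, so symmetric $\TrappingGraph$ immediately gives symmetric $\GeneralAsynchronous$. For symmetric $\Asynchronous\implies$ symmetric $\GeneralAsynchronous$, take $y\in[x,f(x)]$ and walk from $y$ to $x$ by single flips. Each reverse single-flip arc exists by symmetry of $\Asynchronous(f)$, and $\GeneralAsynchronous(f)$ is transitive and contains $\Asynchronous(f)$, so $(y,x)$ is an arc.

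For completeness I would exhibit $n=2$ examples.
\begin{itemize}
\item A non-trapping locally bijective network with non-symmetric $\GeneralAsynchronous$: take $f(00)=11$ and $f(x)=x$ elsewhere, which also makes $\Asynchronous$ non-symmetric. If that fails, I would search for a locally involutive network on $\B^2$ whose $\GeneralAsynchronous$ is not symmetric.
\item A non-trapping network with symmetric $\TrappingGraph$ but non-symmetric $\GeneralAsynchronous$: take $f(00)=10$ and $f(x)=\neg x$ elsewhere. Then $\TrappingGraph(f)$ is complete, hence symmetric, while $00\to10$ but $10\not\to00$ is impossible since $f(10)=01$; I would adjust the values so that some arc is genuinely irreversible.
\end{itemize}
These examples separate $\TrappingGraph$-symmetry from $\Asynchronous$-symmetry in both directions, and every other non-arrow follows from these.
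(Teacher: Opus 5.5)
\textbf{Gap 1: the key implication is not proved.} You do not establish Symmetric $\GeneralAsynchronous \implies$ Marseille. You stop at the single-step case analysis and only say it ``should'' give a contradiction. Since you derive Symmetric $\GeneralAsynchronous \implies$ Symmetric $\TrappingGraph$ from Marseille, that arrow is open too.

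The paper avoids any case analysis by reversing the arc $x \to f(x)$ rather than working coordinate by coordinate. Let $\Gamma = \GeneralAsynchronous(f)$ be symmetric. Then $f(x) \to_\Gamma x$, so the out-neighbourhood $[f(x), f(f(x))]$ is a subcube containing $f(x)$ and $x$, hence containing $y$. So $f(x) \to_\Gamma y$, and by symmetry $f(x) \in [y, f(y)]$. Together with $x \in [y,f(y)]$ this gives $[x,f(x)] \subseteq [y,f(y)]$. Because $x \in [y,f(y)]$, the same argument with $x$ and $y$ exchanged gives equality, which is condition \ref{item:Marseille_xy} of Theorem \ref{theorem:Marseille_definitions}.

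Your local route can also be closed with the transitions you name. Take $y = (\neg x_j, x_{-j})$ and $i \neq j$.
\begin{itemize}
\item Case $f_i(x) = \neg x_i$ but $f_i(y) = x_i$. Reversing $x \to z := f^{(\{i,j\})}(x)$ gives $f_i(z) = x_i \neq z_i$. So $z \to y$ in $\Asynchronous(f)$, and symmetry forces $f_i(y) = \neg x_i$, a contradiction.
\item Case $f_i(x) = x_i$ but $f_i(y) = \neg x_i$. Then $z := f^{(\{i,j\})}(y) = (\neg x_i, x_{-i})$. Reversing $y \to z$ gives $f_i(z) = x_i$, so $z \to x$ in $\Asynchronous(f)$, and symmetry forces $f_i(x) = \neg x_i$, a contradiction.
\end{itemize}
As submitted, though, this is a plan, not a proof.

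\textbf{Gap 2: the magenta step Trapping and Symmetric $\Asynchronous \implies$ Symmetric $\GeneralAsynchronous$.} Reversing single flips presupposes that the forward flips from $x$ towards $y$ are arcs of $\Asynchronous(f)$. Trapping alone does not give this: $f(00)=11$ with $f=\id$ elsewhere is trapping, yet $11$ is unreachable from $00$ in $\Asynchronous(f)$. You need the paper's induction along a geodesic. If $[w,f(w)] = [x,f(x)]$ and $w'$ is the next point, then $w \to w'$ in $\Asynchronous(f)$. Its reverse gives $w \in T_f(w')$, so $T_f(w) \subseteq T_f(w') \subseteq T_f(w)$, i.e.\ $[w',f(w')] = [x,f(x)]$.

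\textbf{Gap 3: completeness.} The two non-implications you must witness are Symmetric $\Asynchronous \not\Rightarrow$ Symmetric $\TrappingGraph$ and Symmetric $\TrappingGraph \not\Rightarrow$ Symmetric $\Asynchronous$.
\begin{itemize}
\item Your first example fails, since its asynchronous graph is not symmetric. Also, a locally bijective network with merely non-symmetric $\GeneralAsynchronous$ would not refute Symmetric $\Asynchronous \Rightarrow$ Symmetric $\TrappingGraph$ anyway. Use instead $f(00)=11$, $f(01)=f(10)=00$, $f(11)=11$. Its asynchronous graph is $00 \leftrightarrow 10$, $00 \leftrightarrow 01$, which is symmetric, while $T_f(00)=\B^2$ and $T_f(11)=\{11\}$.
\item Your second example already works unchanged. $\TrappingGraph(f)$ is complete, but $f(01)=10$ gives $01 \to 00$ in $\Asynchronous(f)$, whereas $f_2(00)=0$ means $00 \not\to 01$. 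You only checked the arc $00 \to 10$, which happens to be reversible.
\end{itemize}
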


\begin{proof}
Firstly, we prove the triple equivalence Marseille $\iff$ Globally Involutive $\iff$ Symmetric $\GeneralAsynchronous$.

\begin{enumerate}
    \item Marseille $\implies$ Globally Involutive. Follows from the results on commutative networks reviewed in Section \ref{subsection:review_commutative}.

    \item Globally Involutive $\implies$ Symmetric $\GeneralAsynchronous$. If $x \to_{\GeneralAsynchronous(f)} x$ but $y \not\to_{\GeneralAsynchronous(f)} x$, we have $y = f^{(S)}( x )$ for $S = \Delta( x, y )$ but $f^{(S)}( y ) \ne x$, hence $f^{(S)}$ is not involutive.

    \item Symmetric $\GeneralAsynchronous$ $\implies$ Marseille. Let $y \in [x, f(x)]$ and $\Gamma = \GeneralAsynchronous( f )$. Let us prove that $[x, f(x)] \subseteq [y, f(y)]$. We have $x \to_\Gamma y$, hence by symmetry $y \to_\Gamma x$ (or in other words, $x \in [y, f(y)]$). Moreover, we have $x \to_\Gamma f(x)$, hence $f(x) \to_\Gamma x$ and $f(x) \to_\Gamma y$; and by symmetry $y \to_\Gamma f(x)$ (or in other words, $f(x) \in [y, f(y)]$). We obtain $[x, f(x)] \subseteq [y, f(y)]$, as desired. Now, since $x \in [y, f(y)]$, we obtain $[y, f(y)] \subseteq [x, f(x)]$; therefore $[x, f(x)] = [y, f(y)]$ and $f$ is Marseille.
\end{enumerate}

Secondly, we prove the other black implications and equivalences in Figure \ref{figure:graphs_symmetric}.
\begin{enumerate}[resume]
    \item Symmetric $\GeneralAsynchronous$ $\implies$ Symmetric $\TrappingGraph$. $f$ is Marseille, hence it is trapping and $\TrappingGraph(f) = \GeneralAsynchronous(f)$.

    \item Symmetric $\GeneralAsynchronous$ $\implies$ Symmetric $\Asynchronous$. For any distinct $x, y \in \B^n$, we have 
    \[
        x \to_{\Asynchronous(f)} y \implies x \to_{\GeneralAsynchronous(f)} y \text{ and } \HammingDistance(x,y) = 1 \implies y \to_{\GeneralAsynchronous(f)} x \text{ and } \HammingDistance(x,y) = 1 \implies y \to_{\Asynchronous(f)} x.
    \]

    \item Locally Bijective $\iff$ Locally Involutive. $f^{(i)}$ is dynamically local, hence it is bijective if and only if it is involutive.

    \item Locally Bijective $\iff$ Symmetric $\Asynchronous$. Let $f \in \Functions(n)$ and $i \in [n]$. The function $f^{ (i) }$ can be decomposed into $2^{n-1}$ functions, one for each value of $x_{-i}$. More formally, for any $z \in \B^{n-1}$, let $g^z : \B \to \B$ be defined by $g^z( a ) = f_i(a, z)$ for all $a \in \B$. For all $z \in \B^{n-1}$, let $\Gamma^z$ be the subgraph of $\Asynchronous(f)$ induced by $\{ x = (x_i = 0, x_{-i} = z), y = (y_i = 1, y_{-i} = z) \}$. We note that $g^z$ is bijective if and only if $\Gamma^z$ is symmetric (either $g^z$ is the identity, in which case $\Gamma^z$ has two loops, or $g^z$ is the transposition $(0, 1)$, in which case $\Gamma^z$ is complete). We then have $f^{ (i) }(x) = ( g^{ x_{-i} }( x_i ), x_{-i} )$, so that $f^{ (i) }$ is bijective if and only if all $g^z$ functions are bijective. Thus, $f$ is locally bijective if and only if $\Asynchronous(f)$ is symmetric. 
\end{enumerate}

Thirdly, we prove the magenta implications in Figure \ref{figure:graphs_symmetric}, that only hold for trapping networks.
\begin{enumerate}[resume]
    \item Trapping and Symmetric $\TrappingGraph$ $\implies$ Symmetric $\GeneralAsynchronous$. We have $\TrappingGraph(f) = \GeneralAsynchronous(f)$.

    \item Trapping and Symmetric $\Asynchronous$ $\implies$ Marseille. We prove that for all $y \in [x, f(x)]$, $[y, f(y)] = [x, f(x)]$. The proof is by induction on the distance $\HammingDistance( x, y )$, and is clear for distance $0$. Suppose it holds for distance $d$, and suppose $\HammingDistance( x, y ) = d + 1$. Let $z \in [x, y] \subseteq [x, f(x)]$ such that $\HammingDistance( x, z ) = d$ and $\HammingDistance( z, y ) = 1$. By induction hypothesis, $[z, f(z)] = [x, f(x)]$ and hence $z \to_{\Asynchronous(f)} y$; by symmetry, $y \to_{\Asynchronous(f)} z$, hence $[x, f(x)] = [z, f(z)] = T_f(z) \subseteq T_f(y) = [y, f(y)]$.

\end{enumerate}

Finally, we exhibit counterexamples to implications that are not displayed in Figure \ref{figure:graphs_symmetric}.
\begin{enumerate}[(a)]
    \item \label{item:counterexample_a} Symmetric $\TrappingGraph$ $\centernot\implies$ Symmetric $\Asynchronous$.

    \item \label{item:counterexample_b} Symmetric $\Asynchronous$ $\centernot\implies$ Symmetric $\TrappingGraph$.
\end{enumerate}

\begin{center}
    
\begin{tikzpicture}
    \begin{scope}[xshift=0cm, yshift=0cm]
    \node (a) at (1,-0.5) {\ref{item:counterexample_a}};
    \node (00) at (0,0) {$00$};
    \node (01) at (0,2) {$01$};
    \node (10) at (2,0) {$10$};
    \node (11) at (2,2) {$11$};

    \path[draw] (00) -- (01) -- (11)
    (00) -- (10) -- (11);
    
    
    \draw[very thick,-latex, blue] (00) -- (10);
    \draw[very thick,-latex, blue] (10) -- (11);  
    \draw[very thick,-latex, blue] (11) -- (01);
    \draw[very thick,-latex, blue] (01) -- (00);
    \end{scope}

    \begin{scope}[xshift=5cm, yshift=0cm]
    \node (b) at (1,-0.5) {\ref{item:counterexample_b}};
    \node (00) at (0,0) {$00$};
    \node (01) at (0,2) {$01$};
    \node (10) at (2,0) {$10$};
    \node (11) at (2,2) {$11$};

    \path[draw] (00) -- (01) -- (11)
    (00) -- (10) -- (11);
    
    
    \draw[very thick,latex-latex, blue] (00) -- (10);
    \draw[very thick,latex-latex, blue] (00) -- (01);
    \end{scope}
    
\end{tikzpicture}   
\end{center}

\end{proof}

\subsection{Classification of Marseille networks}



\begin{figure}
\centering
\resizebox{\textwidth}{!}{
\begin{tikzpicture}[yscale=2, xscale=4, font=\small]

\begin{scope}[yshift=0cm]

    \node[draw, very thick, cyan, minimum height=1cm, minimum width=2.8cm, align=center] (M) at (0,1)  {Marseille}; 

    \node[draw, minimum height=1cm, minimum width=2.8cm, align=center] (I) at (0.5,0) {Involutive};
    
    \node[draw, minimum height=1cm, minimum width=2.8cm, align=center] (B3) at (1,1) {Globally\\ Bijective};
    \node[draw, minimum height=1cm, minimum width=2.8cm, align=center] (B1) at (2,1) {Locally\\ Bijective};
    \node[draw, minimum height=1cm, minimum width=2.8cm, align=center] (B) at (1.5,0) {Bijective};

    \draw[thick, -latex] (M) -- (B3);
    \draw[thick, -latex] (B3) -- (B1);
    \draw[thick, -latex] (B3) -- (B);
    \draw[thick, -latex] (M) -- (I);
    \draw[thick, -latex] (I) -- (B);

    \draw[-latex, very thick, magenta] (B3) to [bend right=15] (M);
    \draw[-latex, very thick, magenta] (B1) to [bend right=15] (B3);
    \draw[-latex, very thick, magenta] (B) to [bend right=15] (I);

    \draw[-latex, thick, blue] (B) to [bend right=15] (B3);
    \draw[-latex, thick, blue] (I) to [bend right=15] (M);

\end{scope}
    
\end{tikzpicture}
}
\caption{\textcolor{cyan}{Marseille} v \textcolor{blue}{commutative} and \textcolor{magenta}{trapping} networks.} \label{figure:Marseille}
\end{figure}
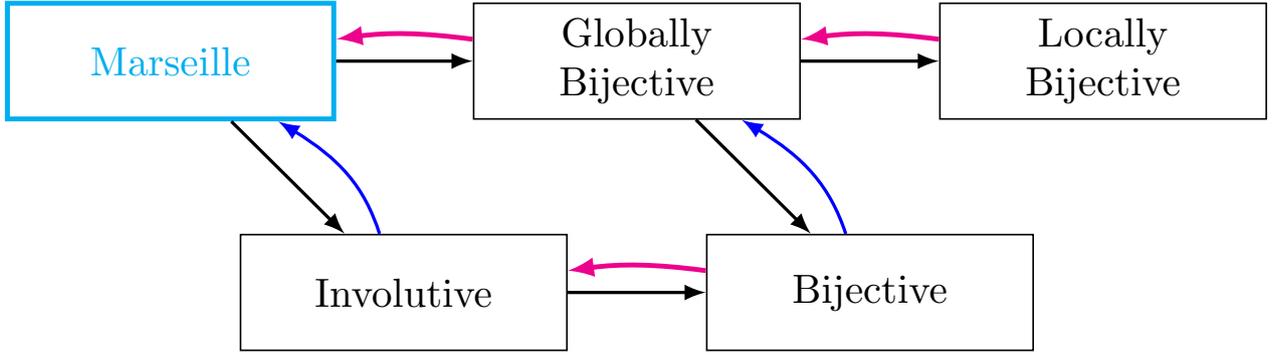

The forthcoming classification of Marseille networks as trapping networks will involve proving that bijective trapping networks are involutive. For the sake of completeness, we prove that all trapping networks have a period of at most $2$, but their transient length can be up to $n$.

\begin{proposition} \label{proposition:transient_length_period_trapping}
Let $f \in \Trapping{\Functions}(n)$, then $f^{n+2} = f^n$. Moreover, for all $n \ge 3$, there exists a trapping network with period $2$ and transient length $n$.
\end{proposition}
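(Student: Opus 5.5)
The plan is to use item~\ref{item:trapping_interval=Tf} of Theorem~\ref{theorem:trapping_networks}: for a trapping $f$ we have $T_f(x) = [x, f(x)]$, so every orbit $x_k = f^k(x)$ gives a chain of principal trapspaces. Since $x_{k+1} = f(x_k) \in T_f(x_k)$, we get $T_f(x_{k+1}) \subseteq T_f(x_k)$. Write $\delta_k = \HammingDistance(x_k, x_{k+1}) = \dim T_f(x_k)$, so $n \ge \delta_0 \ge \delta_1 \ge \dots$.

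\textbf{Key observation.} If $\delta_{k+1} = \delta_k$, then $T_f(x_{k+1}) = T_f(x_k)$. This gives $[x_{k+1}, x_{k+2}] = [x_k, x_{k+1}]$, so $x_{k+2}$ is the opposite of $x_{k+1}$ in that subcube, namely $x_{k+2} = x_k$. Hence $x_k$ is periodic with period at most $2$. The dimension can strictly decrease at most $\delta_0 \le n$ times, and once it reaches $0$ the configuration is a fixed point, which is an equality step. Therefore some $m \le n$ has $\delta_{m+1} = \delta_m$, and $x_{m+2} = x_m$. It follows that $x_{n+2} = x_n$ for every $x$, i.e.\ $f^{n+2} = f^n$.

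\textbf{Construction.} The same argument shows that an orbit of transient length $n$ must strictly decrease at every step and so ends in a fixed point. The period $2$ must therefore come from a separate component. I would build the network as $f = F_\PrincipalTrapspaces(\mathcal{Q})$ for a suitable pre-principal $\mathcal{Q}$; by Lemma~\ref{lemma:principal_pre-principal} such an $f$ is trapping with $\PrincipalTrapspaces(f) = \mathcal{Q}$.

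\begin{enumerate}
    \item \emph{The chain.} Let $x_0 = 0^n$ and $C_0 = \B^n$, so $x_1 = 1^n$. Let $C_k$ be the subcube obtained by fixing the first $k$ coordinates to the values of $x_k$, and let $x_{k+1} = C_k - x_k$. Concretely, $x_k$ has an alternating prefix $1,0,1,\dots$ of length $k-1$ followed by a constant block. This gives nested subcubes $C_0 \supset C_1 \supset \dots \supset C_n = \{x_n\}$ with $C_{k+1} = [x_{k+1}, x_{k+2}]$.
    \item \emph{The $2$-cycle.} For $n \ge 3$ there are $2^n - (n+1) \ge 4$ configurations off the orbit. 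Choose an edge $E = \{y, y'\}$ of two such configurations (this is routine to exhibit explicitly).
    \item \emph{The collection.} Let $\mathcal{Q}$ consist of the $C_k$, the edge $E$, and the singletons of all non-orbit configurations outside $E$. Orbit singletons must be excluded, since they would shrink $T(x_k)$.
\end{enumerate}

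\textbf{Verification via Lemma~\ref{lemma:trapping_collection_of_subcubes}.}
\begin{itemize}
    \item \emph{Covering.} The members of $\mathcal{Q}$ cover $\B^n$.
    \item \emph{Intersections.} We have $C_i \cap C_j = C_{\max(i,j)}$. Any intersection involving $E$ or a singleton is either $E$ itself or a union of non-orbit points, each of which lies in a member of $\mathcal{Q}$.
    \item \emph{Irreducibility.} $C_k$ is not a union of other members, because $x_k$ lies only in $C_0, \dots, C_k$. $E$ is not a union of other members, because its points are not singletons in $\mathcal{Q}$.
\end{itemize}
The resulting $f$ then sends $x_k \mapsto x_{k+1}$ with transient length $n$, swaps $y$ and $y'$, and fixes every other configuration.

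The main obstacle I expect is this final bookkeeping. One must check carefully that $E$ meets each $C_k$ only in non-orbit points, and that the intersection $E \cap C_k$, when it is a single point, is indeed a union of members of $\mathcal{Q}$. The first part of the statement is straightforward once the key observation is in place.
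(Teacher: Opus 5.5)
Your proof of $f^{n+2}=f^n$ is correct and follows the paper's argument: the chain $T_f(f^k(x))$ is descending, and an equality step forces $f^{k+2}(x)=f^k(x)$. For the construction you take a different route. The paper writes down the network explicitly ($t^i\mapsto t^{i+1}$, $c^1\leftrightarrow c^2$, identity elsewhere) and asserts that it is trapping. Building it as $F_\PrincipalTrapspaces(\mathcal{Q})$ makes trapping automatic, but it moves all the work into showing that $\mathcal{Q}$ is pre-principal, and that is where your proposal has a gap.

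The edge $E=\{y,y'\}$ cannot be an arbitrary edge of non-orbit points. The second condition of Lemma~\ref{lemma:trapping_collection_of_subcubes} requires $A\cap B$ to \emph{equal} a union of members of $\mathcal{Q}$. Your stated check, that each point of the intersection lies in some member, is vacuous, since $C_0=\B^n$. Neither $\{y\}$ nor $\{y'\}$ belongs to $\mathcal{Q}$, so a one-point intersection $E\cap C_k$ can never be such a union. The verification you defer to the end therefore cannot succeed.

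Concretely, if some $C_k$ separates $y$ from $y'$, then say $\mathcal{Q}(y')=\{y'\}\notin\mathcal{Q}$. The collection is then not pre-principal, and $F(\mathcal{Q})$ fixes $y'$, which destroys the $2$-cycle. For instance, with $n=3$ your orbit is $000,111,100,101$. The non-orbit edge $E=\{010,110\}$ meets $C_1=\{z:z_1=1\}$ only in $110$. Then $F(\mathcal{Q})$ sends $010\mapsto 110\mapsto 110$, and the resulting trapping network has only fixed points as periodic points.

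The missing requirement is $E\cap C_k\in\{\emptyset,E\}$ for every $k$. You can meet it by placing $E$ inside $C_0\setminus C_1=\{z:z_1=0\}$ and away from $x_0=0^n$, e.g.\ $E=\{0\,1\,0^{n-2},\;0\,1\,1\,0^{n-3}\}$; this is where $n\ge 3$ is needed. Then $E\cap C_k=\emptyset$ for all $k\ge 1$, all three conditions of the lemma hold, and your argument goes through. This is the same kind of placement the paper uses: its $c^1,c^2$ lie in $\{z:z_1=0\}$, outside every principal trapspace of its chain other than $\B^n$.
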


\begin{proof}
For any $x$, let $O(x) = \{ f^i(x) : i \in \mathbb{N} \}$ be the orbit of $x$. The sequence $T_i := T_f( f^i(x) )$ for $i \in \mathbb{N}$ is a descending chain of subcubes. If $T_i = T_{i+1}$, then we have $[f^i(x), f^{i+1}(x)] = [f^{i+1}(x), f^{i+2}(x)]$, hence $f^i(x) = f^{i+2}(x)$ and $T_i = T_{i+1} = \dots = T_n$, and $|O(x)| \le i+1$. Since $T_0$ has dimension $n$, we obtain $T_n = T_{n+1}$ and hence $|O(x)| \le n+1$. Therefore, $f$ has transient length at most $n$. Moreover, if $x$ is a periodic point, say $f^k(x) = x$ we have $T_0 = T_{k+1}$, hence $T_0 = T_1$ and $x = f^2(x)$.

Conversely, the trapping network $f$ with period $2$ and transient length $n \ge 3$ is constructed as follows. First, let $t^1, \dots, t^{n+1} \in \B^n$ be defined as
\[
    t^i_j = \begin{cases}
    1 & \text{if } j < i \\
    i + j \mod 2 & \text{otherwise}.
    \end{cases}
\]
For instance, for $n = 4$ we obtain
\[
    t^1 = 0101, \; t^2 = 1010, \; t^3 = 1101, \; t^4 = 1110, \; t^5 = 1111.
\]
Second, let $c^1 = 0 \dots 00$ and $c^2 = 0 \dots 01$; note that $c^i \ne t^j$ for all $i \in \{1,2\}$ and $j \in \{1, \dots, n+1\}$. Third, let
\[
    f(x) = \begin{cases}
    t^{i+1} &\text{if } x = t^i, i \le n, \\
    c^2 &\text{if } x = c^1, \\
    c^1 &\text{if } x = c^2, \\
    x &\text{otherwise}.
    \end{cases}
\]
Then clearly, $f$ has period $2$ and transient length $n$. It is easily shown that $f$ is also trapping.
\end{proof}

We now classify Marseille networks as specific trapping networks.

\begin{theorem}[Classification of Marseille networks] \label{theorem:Marseille_classification}
The diagram in Figure \ref{figure:Marseille} is correct and complete.
\end{theorem}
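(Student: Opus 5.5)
The plan is to treat the black, magenta and blue arrows of Figure \ref{figure:Marseille} separately, and then give counterexamples for completeness.

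\emph{Black arrows.} By Theorem \ref{theorem:graphs_symmetric}, a Marseille network is globally involutive. Hence each $f^{(S)}$ is a bijection, which gives Marseille $\implies$ Globally Bijective. Taking $S = [n]$ gives Marseille $\implies$ Involutive. The implications Globally Bijective $\implies$ Locally Bijective and Globally Bijective $\implies$ Bijective are the special cases $S = \{i\}$ and $S = [n]$. Involutive $\implies$ Bijective is immediate.

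\emph{Magenta arrows (trapping networks).}
\begin{itemize}
\item Locally Bijective $\implies$ Globally Bijective. By Theorem \ref{theorem:graphs_symmetric}, local bijectivity is equivalent to $\Asynchronous(f)$ being symmetric. For a trapping network, a symmetric $\Asynchronous(f)$ implies that $f$ is Marseille, hence globally bijective.
\item Globally Bijective $\implies$ Marseille. A globally bijective network is locally bijective, so the same argument applies.
\item Bijective $\implies$ Involutive. I use Proposition \ref{proposition:transient_length_period_trapping}. If $f$ is bijective, every configuration is periodic. The proof of that proposition shows that every periodic point $x$ of a trapping network satisfies $f^2(x) = x$. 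Hence $f^2 = \id$.
\end{itemize}

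\emph{Blue arrows (commutative networks).} Bijective $\implies$ Globally Bijective is one of the results of \cite{BGT20} reviewed in Section \ref{subsection:review_commutative}. For Involutive $\implies$ Marseille: an involutive network is bijective, and a commutative bijective network is Marseille by definition.

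\emph{Completeness.} It suffices to refute the implications that cannot be inferred from the diagram, using small examples with $n = 2$ or $3$.
\begin{itemize}
\item Trapping $\not\Rightarrow$ (Involutive $\implies$ Locally Bijective). Take $f$ on $\B^2$ swapping $00$ and $11$ and fixing $01$ and $10$. It is trapping: the interval of $00$ and of $11$ is all of $\B^2$, and the intervals of the fixed points are singletons. It is involutive. It is not locally bijective, since $f^{(1)}(00) = f^{(1)}(10) = 10$. This one example also shows that a bijective, even involutive, trapping network need not be globally bijective or Marseille.
\item Bijective $\not\Rightarrow$ Involutive in general. A $3$-cycle on $\B^2$ is bijective but not involutive. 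This also shows that the magenta arrow Bijective $\implies$ Involutive genuinely needs the trapping hypothesis.
\item Locally Bijective $\not\Rightarrow$ Bijective in general. Take $f(x) = (x_1 \oplus x_2, x_1 \oplus x_2)$. Each $f^{(i)}$ is a bijection, but $f$ is not, since $f(00) = f(11) = 00$.
\item Globally Bijective $\not\Rightarrow$ Marseille in general. I need a non-commutative, non-trapping network all of whose updates $f^{(S)}$ are bijective. For $n = 2$ this is impossible: a routine check shows that a network on $\B^2$ all of whose updates are bijections is Marseille. So I will search $n = 3$, starting with triangular networks such as $f = (x_1,\, x_2 \oplus x_1,\, x_3 \oplus x_2)$. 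The reasoning is that every update of a triangular XOR network is again triangular with unit diagonal, hence bijective, while $f^{(2,3)} \ne f^{(3,2)}$ gives non-commutativity. I still need to confirm this candidate does the job.
\end{itemize}

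I expect this last counterexample to be the main obstacle. Everything else reduces directly to Theorem \ref{theorem:graphs_symmetric}, Proposition \ref{proposition:transient_length_period_trapping}, and the facts from \cite{BGT20}.
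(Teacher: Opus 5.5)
Your proof of the arrows is essentially the paper's. The top row comes from Theorem \ref{theorem:graphs_symmetric}: locally bijective is equivalent to symmetric $\Asynchronous(f)$, and trapping with symmetric $\Asynchronous(f)$ gives Marseille. Trapping and bijective $\implies$ involutive comes from Proposition \ref{proposition:transient_length_period_trapping}; the identity $f^{n+2}=f^n$ with $f$ bijective already gives $f^2=\id$. The blue arrows come from \cite{BGT20}. Your first completeness example is exactly the paper's counterexample (c), and your $3$-cycle plays the role of its $4$-cycle (d).

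You go beyond the paper in also refuting Locally Bijective $\implies$ Bijective and the implications out of Globally Bijective. The paper gives no witnesses for these, yet its own definition of completeness requires them: neither (c) nor (d) is locally bijective. So your extra examples are genuinely needed, and $f=(x_1\oplus x_2,\,x_1\oplus x_2)$ correctly handles the first one.

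Two corrections on the last item.

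\emph{First}, your claim that no example exists for $n=2$ is false. Take $f(x)=(x_1\oplus x_2,\,\neg x_2)$. Its updates $f^{(1)}(x)=(x_1\oplus x_2,\,x_2)$ and $f^{(2)}(x)=(x_1,\,\neg x_2)$ are both bijections. The map $f$ itself is the $4$-cycle $00\to 01\to 10\to 11\to 00$. Hence $f$ is globally bijective but not involutive.

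\emph{Second}, the implication you must refute is Globally Bijective $\implies$ Involutive, not merely Globally Bijective $\implies$ Marseille. Since Marseille $\implies$ Involutive is a black arrow, a globally bijective, non-Marseille network that happened to be involutive would leave that implication standing. Non-commutativity only certifies non-Marseille. Your $n=3$ candidate does satisfy the stronger requirement:
\begin{itemize}
\item each $f^{(S)}$ is linear with a unit lower-triangular matrix over $\mathbb{F}_2$, hence bijective;
\item $f^2(x)=(x_1,\,x_2,\,x_1\oplus x_3)$, so $f^2(100)=101\neq 100$.
\end{itemize}
With that check added, your completeness argument covers every non-inferable implication, in the general, trapping and commutative classes alike.
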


\begin{proof}
The implications in the top row of Figure \ref{figure:Marseille} all follow from previous results. We now prove the remaining implications.

\begin{enumerate}

    \item Marseille $\implies$ Involutive. Marseille is equivalent to Globally Involutive.

    \item Commutative and Involutive $\implies$ Marseille. Trivial.

    \item Involutive $\implies$ Bijective. Trivial.

    \item Trapping and Bijective $\implies$ Involutive. Follows from Proposition \ref{proposition:transient_length_period_trapping}.
\end{enumerate}

We now exhibit counterexamples to implications that are not displayed in Figure \ref{figure:Marseille}.
\begin{enumerate}[(a)]

    \item \label{item:counterexample_c} Trapping and Involutive $\centernot\implies$ Locally Bijective.

    \item \label{item:counterexample_d} Bijective $\centernot\implies$ Involutive.
    
\end{enumerate}

\begin{center}
    
\begin{tikzpicture}

    \begin{scope}[xshift=0cm, yshift=0cm]
    \node (c) at (1,-0.5) {\ref{item:counterexample_c}};
    \node (00) at (0,0) {$00$};
    \node (01) at (0,2) {$01$};
    \node (10) at (2,0) {$10$};
    \node (11) at (2,2) {$11$};

    \path[draw] (00) -- (01) -- (11)
    (00) -- (10) -- (11);
    
    
    \draw[very thick,-latex, blue] (00) -- (01);
    \draw[very thick,-latex, blue] (00) -- (10);  
    \draw[very thick,-latex, blue] (11) -- (10);
    \draw[very thick,-latex, blue] (11) -- (01);
    \end{scope}

    \begin{scope}[xshift=5cm, yshift=0cm]
    \node (d) at (1,-0.5) {\ref{item:counterexample_d}};
    \node (00) at (0,0) {$00$};
    \node (01) at (0,2) {$01$};
    \node (10) at (2,0) {$10$};
    \node (11) at (2,2) {$11$};

    \path[draw] (00) -- (01) -- (11)
    (00) -- (10) -- (11);
    
    
    \draw[very thick,-latex, blue] (00) -- (10);
    \draw[very thick,-latex, blue] (10) -- (11);  
    \draw[very thick,-latex, blue] (11) -- (01);
    \draw[very thick,-latex, blue] (01) -- (00);
    \end{scope}
    
\end{tikzpicture}   
\end{center}

\end{proof}

\section{Lille networks} \label{section:Lille}

\subsection{Alternate definitions} \label{subsection:Lille_definitions}

A network is \Define{Lille} if it is commutative and idempotent. We begin this section with four alternative definitions of Lille networks.

\begin{theorem}[Alternative definitions of Lille networks] \label{theorem:Lille_definitions}
The following are equivalent for $f \in \Functions( n )$.
\begin{enumerate}
    \item \label{item:Lille_definition} 
    $f$ is Lille, i.e. it is commutative and idempotent.

    \item \label{item:Lille_converges}
    $f$ is a constant on arrangements.

    \item \label{item:Lille_xy}
    for all $x \in \B^n$ and $y \in [x, f(x)]$, $[y, f(y)] = [y, f(x)]$.

    \item \label{item:Lille_ST}
    for all $S, T \subseteq [n]$, $f^{( S,T )} = f^{ (S \cup T) }$.
    
\end{enumerate}
\end{theorem}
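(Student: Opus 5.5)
I will mirror the proof of Theorem \ref{theorem:Marseille_definitions}, proving the cycle $\ref{item:Lille_definition} \implies \ref{item:Lille_xy} \implies \ref{item:Lille_definition}$, the equivalence $\ref{item:Lille_definition} \iff \ref{item:Lille_ST}$, and $\ref{item:Lille_definition} \iff \ref{item:Lille_converges}$. The workhorse is Theorem \ref{theorem:commutative_trapping}, item \ref{item:commutative_xy}, which gives $[y,f(x)] \subseteq [y,f(y)] \subseteq [x,f(x)]$ for commutative $f$ and $y \in [x,f(x)]$.

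\emph{$\ref{item:Lille_definition} \implies \ref{item:Lille_xy}$.} Let $f$ be commutative and idempotent, and let $y = f^{(S)}(x)$ with $S = \Delta(x,y) \subseteq \Delta(x,f(x))$. By commutativity $f = f^{(S,[n]\setminus S)}$, so $f(y) = f^{([n])}(f^{(S)}(x))$. Writing $f^{(S,[n])} = f^{(S,S,[n]\setminus S)}$, it suffices to show $f^{(S,S)} = f^{(S)}$. This is local idempotence serialised: commutative idempotent networks are globally idempotent by the review in Section \ref{subsection:review_commutative}. Hence $f(y) = f^{(S,[n]\setminus S)}(x) = f(x)$, and thus $[y,f(y)] = [y,f(x)]$. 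In fact this shows the stronger statement $f(y)=f(x)$ for all $y\in[x,f(x)]$.

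\emph{$\ref{item:Lille_xy} \implies \ref{item:Lille_definition}$.} Condition \ref{item:Lille_xy} gives $[y,f(y)] = [y,f(x)]$, and $[y,f(x)] \subseteq [x,f(x)]$ since both $y$ and $f(x)$ lie in $[x,f(x)]$. So the hypothesis of Theorem \ref{theorem:commutative_trapping}, item \ref{item:commutative_xy} holds and $f$ is commutative. Taking $y = f(x)$ yields $[f(x), f(f(x))] = \{f(x)\}$, i.e. $f^2 = f$.

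\emph{$\ref{item:Lille_definition} \iff \ref{item:Lille_ST}$.} For the forward direction, serialise: $f^{(S,T)} = f^{(S\setminus T, S\cap T, S\cap T, T\setminus S)}$. Idempotence of $f^{(S\cap T)}$ (global idempotence) collapses this to $f^{(S \setminus T, S\cap T, T\setminus S)} = f^{(S\cup T)}$. For the converse, taking $S \cap T = \emptyset$ gives the third definition of commutativity, and $S=T=[n]$ gives $f^2=f$.

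\emph{$\ref{item:Lille_definition} \iff \ref{item:Lille_converges}$.} This is where I expect the only genuine work, since it passes through Theorem \ref{theo:boolean_cs}. A constant on arrangements is a union of arrangement networks, hence commutative. On each arrangement $f$ is constant with value in $Y$, and outside the contents $f$ is the identity. Since $f(x)\in Y\subseteq\hat{\mathcal X}$, we get $f(f(x)) = f(x)$, so $f$ is idempotent.

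Conversely, write a Lille network as a union of arrangement networks by Theorem \ref{theo:boolean_cs}, and fix one arrangement $\mathcal X$ with core $Y$. I would show $f$ is constant on $\hat{\mathcal X}$ using the strengthened form of \ref{item:Lille_xy}: $f$ is constant on each interval $[x,f(x)]$. Idempotence forces $f$ to be the identity on $Y$, while an arrangement network maps each $y\in Y$ into $Y$ with $f_i$ uniform on free dimensions. So for $i\in[n]$ either $f_i$ fixes the coordinate on $Y$ (free directions become identity) or it is constant. The delicate point is that an arrangement network must send every $x\in\hat{\mathcal X}$ to a point of $Y$, and on free dimensions of $\mathcal X$ a locally idempotent update cannot negate. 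Hence the free dimensions effectively split $\mathcal X$ into sub-arrangements on which $f$ is constant. If this bookkeeping gets messy, a fallback is to re-partition $\hat{\mathcal X}$ into the fibres of $f$: each fibre $f^{-1}(c)$ contains $[x,c]$ for every $x$ in it, so it is a union of subcubes through $c$, i.e. an arrangement. This directly exhibits $f$ as a constant on arrangements.
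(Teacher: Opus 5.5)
Your proofs of $\ref{item:Lille_definition} \implies \ref{item:Lille_xy}$ and $\ref{item:Lille_definition} \iff \ref{item:Lille_ST}$ are the paper's. Your $\ref{item:Lille_xy} \implies \ref{item:Lille_definition}$ is a slightly more direct variant: taking $y = f(x)$ gives $f^2 = f$ at once, whereas the paper derives global idempotence by contradiction. The real difference is $\ref{item:Lille_definition} \iff \ref{item:Lille_converges}$, which the paper simply attributes to Theorem \ref{theo:boolean_cs}.

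For $\ref{item:Lille_definition} \implies \ref{item:Lille_converges}$, use your fallback and discard the first sketch. Its intermediate claims are false for the decomposition that Theorem \ref{theo:boolean_cs} hands you:
\begin{itemize}
\item $f$ need not be constant on the content of an arrangement: $\id$ is a Lille arrangement network for $\{\B^n\}$.
\item Idempotence does not make $f$ the identity on $Y$: the constant map $x \mapsto 00$ is a Lille arrangement network for $\{\B^2\}$, with $Y = \B^2$.
\end{itemize}

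The fibre argument is complete, and if you apply it directly to $\B^n$ it bypasses the classification entirely. Since $f(y) = f(x)$ for every $y \in [x,f(x)]$, each fibre $f^{-1}(c)$ with $c \in \Fix(f)$ equals $\bigcup_{x \in f^{-1}(c)} [x,c]$. So $\{[x,c] : f(x) = c\}$ is an arrangement whose intersection contains $c$, $f \equiv c$ on its content, and these contents partition $\B^n$.

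The converse can be made equally self-contained. If $x \in X \in \mathcal{X}$ and $f \equiv c \in Y \subseteq X$ on $\hat{\mathcal{X}}$, then $[x,c] \subseteq X$. Hence $f$ is constant on $[x,f(x)]$, and for $x$ outside all contents $[x,f(x)] = \{x\}$, so item \ref{item:Lille_xy} holds directly. This replaces your appeal to Theorem \ref{theo:boolean_cs} for commutativity.

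Compared with the paper's one-line citation, your route gives an explicit, canonical arrangement decomposition of a Lille network, namely its fibres. With the adjustment above it also gives a proof of the whole theorem that never invokes Theorem \ref{theo:boolean_cs}. It still relies on Theorem \ref{theorem:commutative_trapping} and on the serialisation facts for commutative networks.
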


\begin{proof}
$\ref{item:Lille_definition} \iff \ref{item:Lille_converges}$. Follows from Theorem \ref{theo:boolean_cs}.

$\ref{item:Lille_definition} \implies \ref{item:Lille_xy}$. If $y \in [x, f(x)]$, we have $y = f^{(S)}( x )$ for some $S \subseteq [n]$. By idempotence of $f^{(S)}$, we have
\[
    f( y ) = f^{(S, [n] \setminus S)}( y ) = f^{(S, S, [n] \setminus S)}( x ) = f^{(S, [n] \setminus S)}( x ) = f( x ).
\]

$\ref{item:Lille_xy} \implies \ref{item:Lille_definition}$. $f$ is commutative from Theorem \ref{theorem:commutative_trapping}. Suppose, for the sake of contradiction, that $f^{(S)}$ is not idempotent for some $S \subseteq [n]$, i.e. $f^{(S)}(x)  = y \ne f^{(S)}(y)$. Then $y \in [x, f(x)]$ and yet $f_S( y ) \ne f_S( x )$, and hence $f(y) \ne f(x)$, which is the desired contradiction.

$\ref{item:Lille_definition} \implies \ref{item:Lille_ST}$. We have 
\[
    f^{(S, T)} = f^{( S \Delta T, S \cap T, S \cap T )} = f^{(S \cup T)}.
\]

$\ref{item:Lille_ST} \implies \ref{item:Lille_definition}$. Clearly $f$ is commutative, and $f^{(S, S)} = f^{(S)}$ for all $S \subseteq [n]$, i.e. $f$ is globally idempotent. Thus, $f$ is Lille.
\end{proof}

\subsection{Relation amongst the three main graphs: the triangular case} \label{subsection:graphs_triangular}


A graph is \Define{oriented} if $(u,v) \in E \implies (v,u) \notin E$ for all $u \ne v$. Let us say that a graph is \Define{triangular} if the only cycles in the graph are its loops; in other words, the vertices can be sorted such that the adjacency matrix is triangular.  Say a graph is \Define{sink-terminal} if all its terminal components have cardinality one. 

Let us say a network is \Define{Distinct Principal Trapspaces} (DPT) if $T_f( x ) \ne T_f( y )$ for all $x \ne y$; in other words, $f$ is DPT if and only if $\TrappingGraph( f )$ triangular.
A \Define{fixed point} of $f$ is a configuration $x$ such that $f(x) = x$; we denote the set of fixed points of $f$ by $\Fix( f )$.
We say a network is \Define{Trapspace-FP} if every trapspace of $f$ contains a fixed point. The equivalence between Trapspace-FP and Sink-terminal $\TrappingGraph$ is given in Lemma \ref{lemma:sink_terminal_T} below; its proof is straightforward and hence omitted.
We strengthen this notion in two ways: a network is \Define{Interval-FP} if every interval contains a fixed point; it is \Define{Interval-UFP} if every interval contains a unique fixed point.

Recall that a Boolean network is \Define{fixable} if and only if there is a word $w$ over the alphabet $[n]$ such that $f^w(x)$ is a fixed point for all $x$ \cite{AGRS20}. Equivalently, $f$ is fixable if and only if all its asynchronous attractors are fixed points, i.e. $\Asynchronous( f )$ is sink-terminal.

\begin{lemma} \label{lemma:sink_terminal_T}
Let $f \in \Functions( n )$. The following are equivalent:
\begin{enumerate}
    \item \label{item:sink-terminal_T}
    $\TrappingGraph( f )$ is sink-terminal;

    \item \label{item:smaller_T}
    for any $x \in \B^n \setminus \Fix( f )$, there exists $y \in T_f(x)$ with $T_f( y ) \subset T_f( x )$;

    \item \label{item:mininal_trapspaces}
    the fixed points of $f$ are the only minimal trapspaces of $f$, i.e. $M(f) = \Fix(f)$;

    \item \label{item:fp_in_P}
    any principal trapspace of $f$ contains a fixed point;

    \item \label{item:fp_in_T}
    $f$ is trapspace-FP, i.e. any trapspace of $f$ contains a fixed point.
\end{enumerate}
\end{lemma}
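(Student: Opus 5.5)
The plan is to prove the cycle \ref{item:sink-terminal_T} $\implies$ \ref{item:smaller_T} $\implies$ \ref{item:fp_in_P} $\implies$ \ref{item:fp_in_T} $\implies$ \ref{item:mininal_trapspaces} $\implies$ \ref{item:sink-terminal_T}. Throughout, the working tool is the transitive, reflexive graph $\TrappingGraph(f)$, whose out-neighbourhoods are the principal trapspaces $T_f(x)$. Two facts follow from transitivity. First, $x$ and $y$ lie in the same strong component if and only if $T_f(x) = T_f(y)$. Second, the strong component of $x$ is terminal if and only if $T_f(y) = T_f(x)$ for every $y \in T_f(x)$, which says exactly that $T_f(x)$ is a minimal trapspace (any trapspace inside $T_f(x)$ containing some $y$ contains $T_f(y)$). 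I will also use that $x \in \Fix(f)$ if and only if $T_f(x) = \{x\}$, since $[x,f(x)] \subseteq T_f(x)$ and $\{x\}$ is a trapspace when $f(x)=x$.

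\ref{item:sink-terminal_T} $\implies$ \ref{item:smaller_T}. Let $x \notin \Fix(f)$, so $|T_f(x)| \ge 2$. If $T_f(y) = T_f(x)$ for all $y \in T_f(x)$, then the component of $x$ would be terminal, and it would contain all of $T_f(x)$, so it would have at least two elements, a contradiction. Hence some $y \in T_f(x)$ has $T_f(y) \subseteq T_f(x)$ with $T_f(y) \neq T_f(x)$, i.e.\ $T_f(y) \subset T_f(x)$.

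\ref{item:smaller_T} $\implies$ \ref{item:fp_in_P}. Start from $x$. While $x_k \notin \Fix(f)$, pick $x_{k+1} \in T_f(x_k)$ with $T_f(x_{k+1}) \subset T_f(x_k)$. Since $\B^n$ is finite, the chain must stop at some $x_k \in \Fix(f) \cap T_f(x)$.

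\ref{item:fp_in_P} $\implies$ \ref{item:fp_in_T}. Any trapspace $X$ is nonempty, so it contains some $x$ and hence $T_f(x)$, which contains a fixed point.

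\ref{item:fp_in_T} $\implies$ \ref{item:mininal_trapspaces}. A minimal trapspace contains a fixed point $z$, and $\{z\}$ is itself a trapspace, so by minimality the trapspace equals $\{z\}$. Conversely, each fixed point gives the trapspace $\{z\}$, which is clearly minimal. Hence $M(f) = \Fix(f)$, and the minimal trapspaces are exactly the singletons of fixed points.

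\ref{item:mininal_trapspaces} $\implies$ \ref{item:sink-terminal_T}. Take a terminal component $C$ and $x \in C$. By the second fact above, $T_f(x)$ is a minimal trapspace, hence a singleton $\{z\}$ with $z$ a fixed point; since $x \in T_f(x)$ we get $z = x$. The component is then contained in $T_f(x) = \{x\}$, so it has cardinality one.

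The only step requiring care is the identification of terminal components of $\TrappingGraph(f)$ with minimal principal trapspaces, via transitivity. Everything else is a finiteness or unfolding-definitions argument.
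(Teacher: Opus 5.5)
Your proof is correct and complete. The paper does not prove this lemma; it calls the proof straightforward and omits it, so there is no competing route to compare against.

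Your cycle $(1)\Rightarrow(2)\Rightarrow(4)\Rightarrow(5)\Rightarrow(3)\Rightarrow(1)$ fills that omission cleanly. It rests on two observations. First, in the reflexive transitive graph $\TrappingGraph(f)$, the strong component of $x$ is $\{y : T_f(y) = T_f(x)\}$, and it is terminal exactly when $T_f(x)$ is a minimal trapspace. Second, $x \in \Fix(f)$ if and only if $T_f(x) = \{x\}$.

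Two cosmetic points only:
\begin{itemize}
\item In $(3)\Rightarrow(1)$ you use item (3) in the form ``every minimal trapspace is a fixed-point singleton''. Under the literal reading $M(f) = \Fix(f)$ the step is even shorter: $x \in T_f(x)$ with $T_f(x)$ minimal gives $x \in M(f) = \Fix(f)$, hence $T_f(x) = \{x\}$.
\item In $(2)\Rightarrow(4)$, writing the descending sequence as $x_k$ clashes with the paper's use of subscripts for coordinates. A superscript such as $x^k$ would be safer.
\end{itemize}
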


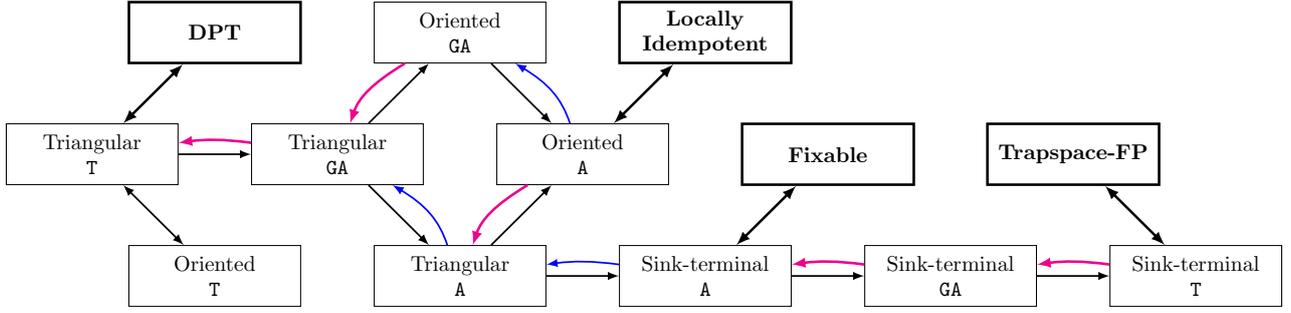
\begin{figure}
\centering
\resizebox{\textwidth}{!}{
\begin{tikzpicture}[yscale=2, xscale=4, font=\small]

    \node[draw, minimum height=1cm, minimum width=2.8cm, align=center] (TT) at (0,1) {Triangular \\ $\TrappingGraph$};
    \node[draw, minimum height=1cm, minimum width=2.8cm, align=center] (OT) at (0.5,0) {Oriented \\ $\TrappingGraph$};
    \node[draw, very thick, minimum height=1cm, minimum width=2.8cm, align=center] (DPT) at (0.5,2) {\textbf{DPT}};

    \node[draw, minimum height=1cm, minimum width=2.8cm, align=center] (TGA) at (1,1) {Triangular \\ $\GeneralAsynchronous$};
    \node[draw, minimum height=1cm, minimum width=2.8cm, align=center] (OGA) at (1.5,2) {Oriented \\ $\GeneralAsynchronous$};

    \node[draw, minimum height=1cm, minimum width=2.8cm, align=center] (TA) at (1.5,0) {Triangular \\ $\Asynchronous$};
    \node[draw, minimum height=1cm, minimum width=2.8cm, align=center] (OA) at (2,1) {Oriented \\ $\Asynchronous$};
    \node[draw, very thick, minimum height=1cm, minimum width=2.8cm, align=center] (LI) at (2.5,2) {\textbf{Locally} \\ \textbf{Idempotent}};
    
    \node[draw, minimum height=1cm, minimum width=2.8cm, align=center] (STA) at (2.5,0) {Sink-terminal \\ $\Asynchronous$};
    \node[draw, very thick, minimum height=1cm, minimum width=2.8cm, align=center] (F) at (3,1) {\textbf{Fixable}};

    \node[draw, minimum height=1cm, minimum width=2.8cm, align=center] (STGA) at (3.5,0) {Sink-terminal \\ $\GeneralAsynchronous$};

    \node[draw, minimum height=1cm, minimum width=2.8cm, align=center] (STT) at (4.5,0) {Sink-terminal \\ $\TrappingGraph$};
    \node[draw, very thick, minimum height=1cm, minimum width=2.8cm, align=center] (TFP) at (4,1) {\textbf{Trapspace-FP}};

    \draw[thick, latex-latex] (TT) -- (OT);

    \draw[thick, -latex] (TGA) -- (OGA);
    \draw[very thick, -latex, magenta] (OGA) to [bend right=15] (TGA);

    \draw[thick, -latex] (TA) -- (OA);
    \draw[very thick, -latex, magenta] (OA) to [bend right=15] (TA);

    \draw[thick, -latex] (TT) -- (TGA);
    \draw[very thick, -latex, magenta] (TGA) to [bend right=15] (TT);
    

    \draw[thick, -latex] (TGA) -- (TA);
    \draw[thick, -latex, blue] (TA) to [bend right=15] (TGA);
    
    \draw[thick, -latex] (OGA) -- (OA);
    \draw[thick, -latex, blue] (OA) to [bend right=15] (OGA);

    \draw[thick, -latex] (TA) -- (STA);
    \draw[thick, -latex, blue] (STA) to [bend right=15] (TA);

    \draw[thick, -latex] (STA) -- (STGA);
    \draw[very thick, -latex, magenta] (STGA) to [bend right=15] (STA);

    \draw[thick, -latex] (STGA) -- (STT);
    \draw[very thick, -latex, magenta] (STT) to [bend right=15] (STGA);

    \draw[very thick, latex-latex] (TT) -- (DPT);
    \draw[very thick, latex-latex] (OA) -- (LI);
    \draw[very thick, latex-latex] (STA) -- (F);
    \draw[very thick, latex-latex] (STT) -- (TFP);

\end{tikzpicture}
}
\caption{Relation amongst the three main graphs: the triangular case.} \label{figure:graphs_triangular}
\end{figure}

\begin{theorem}[Graphs defined from networks: Triangular graphs] \label{theorem:graphs_triangular}
The diagram in Figure \ref{figure:graphs_triangular} is correct and complete.
\end{theorem}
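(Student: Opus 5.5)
The plan is to split the diagram into its black, magenta and blue arrows, prove a sufficient subset of each, and then give small counterexamples for completeness. The basic tools are three facts. First, $\Asynchronous(f) \subseteq \GeneralAsynchronous(f) \subseteq \TrappingGraph(f)$ as edge sets. Second, $\TrappingGraph(f)$ is transitive. Third, by Theorem \ref{theorem:trapping_networks}, $\GeneralAsynchronous(f) = \TrappingGraph(f)$ exactly when $f$ is trapping.

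\textbf{Black arrows.} Triangular $\implies$ Oriented holds for any graph. The converse for $\TrappingGraph$ uses transitivity: a cycle through $u \ne v$ yields both $(u,v)$ and $(v,u)$. The same observation gives DPT $\iff$ Triangular $\TrappingGraph$, since $u,v$ lie on a common cycle of $\TrappingGraph(f)$ iff $T_f(u) = T_f(v)$. The arrows Triangular/Oriented $\TrappingGraph \Rightarrow \GeneralAsynchronous \Rightarrow \Asynchronous$ follow because these properties pass to subgraphs. For Oriented $\Asynchronous \iff$ Locally Idempotent, I reuse the decomposition of $f^{(i)}$ into the maps $g^z : \B \to \B$ from the proof of Theorem \ref{theorem:graphs_symmetric}. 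Each $g^z$ is idempotent iff it is not the transposition, iff the edge $\Gamma^z$ is not used in both directions. Triangular $\Asynchronous \implies$ Sink-terminal $\Asynchronous$ holds because in an acyclic finite graph every terminal component is a single vertex. Sink-terminal $\Asynchronous \implies$ Sink-terminal $\GeneralAsynchronous$: take a terminal component $C$ of $\GeneralAsynchronous(f)$ and $x \in C$. Following $\Asynchronous$-edges from $x$ reaches a fixed point $z$, and these are $\GeneralAsynchronous$-edges, so $z \in C$. Since $N^{out}(\GeneralAsynchronous(f);z) = \{z\}$, we get $C = \{z\}$. The identical argument gives Sink-terminal $\GeneralAsynchronous \implies$ Sink-terminal $\TrappingGraph$. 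Finally, fixable $\iff$ Sink-terminal $\Asynchronous$ is the cited characterisation, and Trapspace-FP $\iff$ Sink-terminal $\TrappingGraph$ is Lemma \ref{lemma:sink_terminal_T}.

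\textbf{Magenta and blue arrows.} For trapping $f$, the arrows Oriented $\GeneralAsynchronous \Rightarrow$ Triangular $\GeneralAsynchronous$, Triangular $\GeneralAsynchronous \Rightarrow$ Triangular $\TrappingGraph$, and Sink-terminal $\TrappingGraph \Rightarrow$ Sink-terminal $\GeneralAsynchronous$ are immediate from $\GeneralAsynchronous(f) = \TrappingGraph(f)$ together with Triangular $\iff$ Oriented for transitive graphs.

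The remaining magenta arrows, Oriented $\Asynchronous \Rightarrow$ Triangular $\Asynchronous$ and Sink-terminal $\GeneralAsynchronous \Rightarrow$ Sink-terminal $\Asynchronous$, are the main obstacle. For the first, suppose $x \ne y$ lie in a common strong component of $\GeneralAsynchronous(f)$, so $[x,f(x)] = [y,f(y)] =: X$. Choose such a pair at minimum Hamming distance. The goal is to show this distance is $1$, which would give an asynchronous $2$-cycle. I would try to show that some neighbour $z \in [x,y]$ of $x$ also satisfies $T_f(z) = X$. The approach would be an inductive argument like the "Trapping and Symmetric $\Asynchronous$" step of Theorem \ref{theorem:graphs_symmetric}, or a count of dimensions along the chain $T_f(z) \subseteq X$. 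For the second, I would show that any terminal component $C$ of $\Asynchronous(f)$ generates a trapspace $[C]$. Then $[C]$ contains a $\GeneralAsynchronous$-sink, and tracing back from it would force $|C| = 1$.

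For the blue arrows I would use the serialisation $f^{(S)} = f^{(s_k)} \circ \dots \circ f^{(s_1)}$. This turns a $\GeneralAsynchronous$-cycle into a closed $\Asynchronous$-walk with non-loop steps, giving Triangular $\Asynchronous \Rightarrow$ Triangular $\GeneralAsynchronous$. Similarly, a $2$-cycle $x \leftrightarrow y$ in $\GeneralAsynchronous$ yields an $\Asynchronous$-cycle, which for commutative networks must be a $2$-cycle, giving Oriented $\Asynchronous \Rightarrow$ Oriented $\GeneralAsynchronous$. For Sink-terminal $\Asynchronous \Rightarrow$ Triangular $\Asynchronous$, I would invoke Theorem \ref{theo:boolean_cs}. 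An asynchronous $2$-cycle inside an arrangement network lies on a free dimension with negating behaviour, and such a dimension would prevent that component from ever reaching a fixed point.

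\textbf{Completeness.} I expect $2$-dimensional counterexamples to suffice. The $4$-cycle used in Theorem \ref{theorem:graphs_symmetric} is Oriented $\Asynchronous$ but neither Triangular $\Asynchronous$ nor Sink-terminal. A network with $00 \to 10 \to 11$ and $00 \to 01 \to 11$ in the $\GeneralAsynchronous$ sense, but with $f(00) = 11$, can be adjusted to separate Triangular $\GeneralAsynchronous$ from Triangular $\TrappingGraph$. Further small variants can separate the sink-terminal classes. For each arrow absent from Figure \ref{figure:graphs_triangular}, I would check the example against the generic, trapping and commutative versions in turn.
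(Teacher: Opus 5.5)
Most of your plan works: the black arrows, the three magenta arrows that follow from $\GeneralAsynchronous(f) = \TrappingGraph(f)$, and your serialisation argument for Triangular $\Asynchronous \Rightarrow$ Triangular $\GeneralAsynchronous$ (a more direct route than the paper's, which goes through Locally Idempotent and DPT). The genuine gap is the magenta arrow Oriented $\Asynchronous \Rightarrow$ Triangular $\Asynchronous$.

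You reduce it to the following claim: if two distinct configurations share a strong component of $\GeneralAsynchronous(f)$ (equivalently, the same principal trapspace), then some such pair is at Hamming distance $1$. This claim is false for trapping networks. Let $f(00) = 11$, $f(11) = 00$, and fix $01$ and $10$. Then $f$ is trapping and $T_f(00) = T_f(11) = \B^2$. But the neighbours $01, 10$ of $00$ are fixed points, so no adjacent pair shares a principal trapspace; in fact $\Asynchronous(f)$ is triangular. This is exactly the network you need for completeness, to show that Trapping and Triangular $\Asynchronous$ does not imply Oriented $\GeneralAsynchronous$. So any proof of your intermediate lemma would contradict completeness.

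The missing idea is to keep the asynchronous cycle rather than pass to $\GeneralAsynchronous$-components. Suppose $x \to \dots \to y \to x$ in $\Asynchronous(f)$. The last arc gives $\HammingDistance(x,y) = 1$. Since $y$ is reachable from $x$, we have $y \in T_f(x) = [x, f(x)]$. So the coordinate $i$ in which $x$ and $y$ differ lies in $\Delta(x, f(x))$, hence $f^{(i)}(x) = y$, and $x \leftrightarrow y$ is an asynchronous $2$-cycle.

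Your blue arrows silently need this same fact. You assert that an asynchronous cycle of a commutative network ``must be a $2$-cycle''. As stated this is false: the negation network on $\B^2$ has an asynchronous $4$-cycle. In the intended sense, ``forces a $2$-cycle'', it is unproved. Likewise, your appeal to Theorem \ref{theo:boolean_cs} only excludes $2$-cycles, so it yields Oriented rather than Triangular $\Asynchronous$. Both are repaired by the corrected magenta argument, since commutative networks are trapping (Theorem \ref{theorem:commutative_trapping}).

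Separately, in Sink-terminal $\GeneralAsynchronous \Rightarrow$ Sink-terminal $\Asynchronous$, ``tracing back'' does not work as stated: the fixed point in $[C]$ need not lie in $C$ unless $C = [C]$. You must prove $C = [C]$, as follows.
\begin{itemize}
\item For every $a \in C$, all of $C$ is reachable from $a$, and the arcs $a \to f^{(i)}(a)$ for $i \in \Delta(a, f(a))$ stay in $C$. Hence $[a, f(a)] = T_f(a) = [C]$.
\item Therefore any $y \in [C]$ adjacent to some $a \in C$ is an asynchronous successor of $a$, so $y \in C$.
\item Connectivity of the subcube $[C]$ then forces $C = [C]$. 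This set contains a fixed point, so $|C| = 1$.
\end{itemize}

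Finally, $2$-dimensional examples do not suffice for completeness. In $\B^2$, a non-singleton terminal component of $\GeneralAsynchronous(f)$ always produces a trapspace without fixed points. So separating Sink-terminal $\TrappingGraph$ from Sink-terminal $\GeneralAsynchronous$ requires $n \geq 3$.
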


\begin{proof}
All the one-way black implications in Figure \ref{figure:graphs_triangular}, such as ``Triangular $\TrappingGraph$ $\implies$ Triangular $\GeneralAsynchronous$'' all follow from some basic facts about the graphs; as such, we omit their proofs.

We begin by proving all the equivalences in Figure \ref{figure:graphs_triangular}.
\begin{enumerate}
    \item Locally Idempotent $\iff$ Oriented $\Asynchronous$. For all $i \in [n]$ one can decompose $f_i$ as $2^{n-1}$ functions $f_{i,a} : \B \to \B$ for all $a \in \B^{n-1}$ as follows. For any $\alpha \in \B$, let $y \in \B^n$ such that $y_i = \alpha$ and $y_{-i} = a$, then $f_{i,a}( \alpha ) = f_i( y )$. Then $f^{(i)}$ is idempotent if and only if $f_{i,a}$ is idempotent for all $a \in \B^{n-1}$, which in turn is equivalent to all arcs $( x, f^{(i)}(x) )$ in $\Asynchronous(f)$ being oriented.

    \item Trapspace-FP $\iff$ Sink-terminal $\TrappingGraph$. From Lemma \ref{lemma:sink_terminal_T}.

    \item Fixable $\iff$ Sink-terminal $\Asynchronous$. Trivial.

    \item DPT $\iff$ Triangular $\TrappingGraph$. By definition.

    \item Triangular $\TrappingGraph$ $\iff$ Oriented $\TrappingGraph$. A transitive graph is oriented if and only if it is triangular.
\end{enumerate}

We now prove all the magenta implications in Figure \ref{figure:graphs_triangular}, which hold for all trapping networks.
\begin{enumerate}[resume]
    \item Trapping and Oriented $\GeneralAsynchronous$ $\implies$ Triangular $\TrappingGraph$. In this case, $\GeneralAsynchronous = \TrappingGraph$ is transitive and oriented, and hence triangular.

    \item Trapping and Oriented $\Asynchronous$ $\implies$ Triangular $\Asynchronous$. For the sake of contradiction, suppose there is a cycle in $\Asynchronous( f )$, say $x \to \dots \to y \to x$. Then $y \in T_f( x )$ and $\HammingDistance( x, y ) = 1$, hence $x \to y$ in $\Asynchronous( f )$. Thus, the asynchronous graph is not oriented, which is the desired contradiction.

    \item Trapping and Trapspace-FP $\implies$ Fixable. By Lemma \ref{lemma:sink_terminal_T}, wWe only need to prove that if for any $x \in \B^n \setminus \Fix( f )$, there exists $y \in T_f(x)$ with $T_f( y ) \subset T_f( x )$, then $f$ is fixable. Suppose that, for the sake of contradiction, $x$ does not reach a configuration with a smaller principal trapspace. Let $A(x) \subseteq T_f(x)$ be the set of configurations reachable from $x$ in $\Asynchronous( f )$; by our hypothesis, $T_f(a) = T_f( x )$ for all $a \in A(x)$. We prove by induction on $d$ that $A( x )$ contains all the configurations $y \in T_f( x )$ at Hamming distance at most $d$ from $x$. The claim is clear for $d = 0$, hence suppose it holds for $d$. If $d \ge \HammingDistance( x, f(x) )$, there is nothing to prove, hence suppose $d \le \HammingDistance( x, f(x) ) - 1$ and let $y \in T_f( x )$ such that $\HammingDistance( x, y ) = d + 1$. Let $i \in \Delta(x, y)$, so that the configuration $a = (\neg y_i, y_{-i})$ satisfies $a \in T_f( x )$ and $\HammingDistance( a,x ) = d$, and by induction hypothesis, $a \in A(x)$. Since $N^{out}( \GeneralAsynchronous(f); a ) = T_f( x )$, we have $i \in \Delta( a, f(a) )$ and hence $a \to y$ in $\Asynchronous( f )$; thus, $y \in A(x)$ and the claim is proved. Thus, for any $y \in T_f(x)$, we have $T_f(y) = T_f(x)$, which is the desired contradiction.
\end{enumerate}

We now prove all the blue implications in Figure \ref{figure:graphs_triangular}, that hold for all commutative networks.
\begin{enumerate}[resume]
    \item Commutative and Locally Idempotent $\implies$ DPT. Anticipating a result below, let us prove that $f(y) = f(x)$ for all $y \in [x, f(x)]$ (and hence they have distinct principal trapspaces). We have $y = f^{(S)}( x )$ for some $S \subseteq [n]$, and hence
    \[
        f(y) = f^{(S, S, [n] \setminus S)}( x ) = f(x).
    \]

    \item Commutative and Fixable $\implies$ Oriented $\Asynchronous$. Thanks to Theorem \ref{theo:boolean_cs}, if $f$ is commutative and $\Asynchronous( f )$ is not oriented, then there is a connected component of $\Asynchronous( f )$ that does not contain a fixed point, and hence $f$ is not fixable.
\end{enumerate}

We finally exhibit counterexamples to implications that are not displayed in Figure \ref{figure:graphs_triangular}.
\begin{enumerate}[(a)]
    \item \label{item:counterexample_e}
    Triangular $\GeneralAsynchronous$ $\centernot\implies$ Triangular $\TrappingGraph$.

    \item \label{item:counterexample_f}
    Trapping and Triangular $\Asynchronous$ $\centernot\implies$ Oriented $\GeneralAsynchronous$.

    \item \label{item:counterexample_g}
    Oriented $\GeneralAsynchronous$ $\centernot\implies$ Sink-terminal $\TrappingGraph$.

    \item \label{item:counterexample_h}
    Sink-terminal $\TrappingGraph$ $\centernot\implies$ Sink-terminal $\GeneralAsynchronous$.

    \item \label{item:counterexample_i}
    Sink-terminal $\GeneralAsynchronous$ $\centernot\implies$ Sink-terminal $\Asynchronous$.

    \item \label{item:counterexample_j}
    Trapping and Sink-terminal $\Asynchronous$ $\centernot\implies$ Oriented $\Asynchronous$.
\end{enumerate}

\begin{center}

\begin{tikzpicture}

\begin{scope}[xshift=0cm, yshift=0cm]
    \node (e) at (1, -0.5) {\ref{item:counterexample_e}};
    \node (00) at (0,0) {$00$};
    \node (01) at (0,2) {$01$};
    \node (10) at (2,0) {$10$};
    \node (11) at (2,2) {$11$};

    \path[draw] (00) -- (01) -- (11)
    (00) -- (10) -- (11);
    
    
    \draw[very thick,-latex, blue] (00) -- (10);
    \draw[very thick,-latex, blue] (00) -- (01);  
    \draw[very thick,-latex, blue] (01) -- (11);
    \draw[very thick,-latex, blue] (11) -- (10);
\end{scope}

\begin{scope}[xshift=5cm, yshift=0cm]
    \node (f) at (1, -0.5) {\ref{item:counterexample_f}};

    \node (00) at (0,0) {$00$};
    \node (01) at (0,2) {$01$};
    \node (10) at (2,0) {$10$};
    \node (11) at (2,2) {$11$};

    \path[draw] (00) -- (01) -- (11)
    (00) -- (10) -- (11);
    
    
    \draw[very thick,-latex, blue] (00) -- (10);
    \draw[very thick,-latex, blue] (00) -- (01);  
    \draw[very thick,-latex, blue] (11) -- (01);
    \draw[very thick,-latex, blue] (11) -- (10);
\end{scope}   

\begin{scope}[xshift=10cm, yshift=0cm]
    \node (g) at (1, -0.5) {\ref{item:counterexample_g}};

    \node (00) at (0,0) {$00$};
    \node (01) at (0,2) {$01$};
    \node (10) at (2,0) {$10$};
    \node (11) at (2,2) {$11$};

    \path[draw] (00) -- (01) -- (11)
    (00) -- (10) -- (11);
    
    
    \draw[very thick,-latex, blue] (00) -- (10);
    \draw[very thick,-latex, blue] (01) -- (00);  
    \draw[very thick,-latex, blue] (11) -- (01);
    \draw[very thick,-latex, blue] (10) -- (11);
\end{scope}   

\begin{scope}[xshift=0cm, yshift=-5cm]
    \node (h) at (1.5, -0.5) {\ref{item:counterexample_h}};

    \node (000) at (0,0) {$000$};
    \node (001) at (1,1) {$001$};
    \node (010) at (0,2) {$010$};
    \node (011) at (1,3) {$011$};
    \node (100) at (2,0) {$100$};
    \node (101) at (3,1) {$101$};
    \node (110) at (2,2) {$110$};
    \node (111) at (3,3) {$111$};

    \path[draw] (000) -- (001) -- (011) -- (111)
    (000) -- (010) -- (110) -- (111)
    (000) -- (100) -- (101) -- (111)
    (001) -- (101)
    (010) -- (011)
    (100) -- (110);
    
    \draw[very thick,latex-latex, blue] (000) -- (100);

    \draw[very thick,latex-latex, blue] (000) -- (010);
    \draw[very thick,-latex, blue] (001) -- (011);
    \draw[very thick,-latex, blue] (111) -- (101);
    
    \draw[very thick,-latex, blue] (110) -- (111);
    \draw[very thick,-latex, blue] (101) -- (100);

\end{scope}    

\begin{scope}[xshift=5cm, yshift=-5cm]
    \node (i) at (1, -0.5) {\ref{item:counterexample_i}};

    \node (00) at (0,0) {$00$};
    \node (01) at (0,2) {$01$};
    \node (10) at (2,0) {$10$};
    \node (11) at (2,2) {$11$};

    \path[draw] (00) -- (01) -- (11)
    (00) -- (10) -- (11);
    
    
    \draw[very thick,latex-latex, blue] (00) -- (10);
    \draw[very thick,latex-latex, blue] (00) -- (01);  

\end{scope} 

\begin{scope}[xshift=10cm, yshift=-5cm]
    \node (j) at (1, -0.5) {\ref{item:counterexample_j}};

    \node (00) at (0,0) {$00$};
    \node (01) at (0,2) {$01$};
    \node (10) at (2,0) {$10$};
    \node (11) at (2,2) {$11$};

    \path[draw] (00) -- (01) -- (11)
    (00) -- (10) -- (11);
    
    
    \draw[very thick,latex-latex, blue] (00) -- (10);
    \draw[very thick,-latex, blue] (00) -- (01);  
    \draw[very thick,-latex, blue] (10) -- (11);
\end{scope}   

\end{tikzpicture}
\end{center}
\end{proof}

\subsection{Globally idempotent networks} \label{subsection:globally_idempotent}

In this subsection, we give alternate definitions for globally idempotent networks.

\begin{theorem}[Alternate definitions of globally idempotent networks] \label{theorem:globally_idempotent_definitions}
Globally idempotent networks are trapping. More precisely, the following are equivalent for $f \in \Functions(n)$:
\begin{enumerate}
    \item \label{item:globally_idempotent_definition}
    $f$ is globally idempotent, i.e. $f^{(S,S)} = f^{(S)}$ for all $S \subseteq [n]$.

    \item \label{item:globally_idempotent_xy}
    $[y, f(y)] \subseteq [y, f(x)]$ for all $x \in \B^n$ and any $y \in [x, f(x)]$.

    \item \label{item:globally_idempotent_ST}
    $f^{(S, T)} \sqsupseteq f^{(S \cap T)}$ for all $S, T \subseteq [n]$.
    
\end{enumerate}
\end{theorem}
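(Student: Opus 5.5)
We prove the cycle of implications \ref{item:globally_idempotent_definition} $\implies$ \ref{item:globally_idempotent_xy} $\implies$ \ref{item:globally_idempotent_definition}. We then connect \ref{item:globally_idempotent_ST}, and finally read off that globally idempotent networks are trapping. Throughout, for $y \in [x, f(x)]$ we write $y = f^{(S)}(x)$ with $S = \Delta(x,y) \subseteq \Delta(x,f(x))$, and we use the criterion of Theorem \ref{theorem:trapping_networks}, item \ref{item:delta(y,g(y))}.

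\textbf{Step 1: \ref{item:globally_idempotent_definition} $\implies$ \ref{item:globally_idempotent_xy}.} We must show $\Delta(y, f(y)) \subseteq \Delta(y, f(x))$.
\begin{itemize}
\item For $i \in S$: idempotence of $f^{(S)}$ gives $f^{(S)}(y) = y$, so $f_i(y) = y_i$ and $i \notin \Delta(y,f(y))$.
\item For $i \notin S$ with $f_i(y) \ne y_i = x_i$: suppose for contradiction that $f_i(x) = x_i$. Then $f^{(S \cup \{i\})}(x) = y$. Idempotence of $f^{(S \cup \{i\})}$ forces $f_i(y) = y_i$, a contradiction. Hence $f_i(x) \ne x_i = y_i$, i.e. $i \in \Delta(y, f(x))$.
\end{itemize}

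\textbf{Step 2: \ref{item:globally_idempotent_xy} $\implies$ \ref{item:globally_idempotent_definition} and trapping.} Since $y_S = f_S(x)$, the set $\Delta(y, f(x))$ is disjoint from $S$. By \ref{item:globally_idempotent_xy}, so is $\Delta(y,f(y))$, hence $f^{(S)}(y) = y$, which is idempotence of $f^{(S)}$ at $x$. Moreover, $y$ and $f(x)$ both lie in $[x,f(x)]$, so
\[
    [y,f(y)] \subseteq [y,f(x)] \subseteq [x,f(x)].
\]
This is item \ref{item:delta(y,g(y))} of Theorem \ref{theorem:trapping_networks}, so every network satisfying \ref{item:globally_idempotent_xy} (hence every globally idempotent network) is trapping.

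\textbf{Step 3: the implications involving \ref{item:globally_idempotent_ST}.} For \ref{item:globally_idempotent_definition} $\implies$ \ref{item:globally_idempotent_ST}, write $z = f^{(S,T)}(x)$ and $y = f^{(S)}(x)$. We must show that every $i \in S \cap T$ with $f_i(x) \ne x_i$ satisfies $z_i \ne x_i$. Such an $i$ lies in $S$ and is moved by $f$ at $x$, so $y_i = f_i(x) \ne x_i$. By Step 1, $[y,f(y)] \subseteq [y,f(x)]$, and $f_i(x) = y_i$; hence $f_i(y) = y_i$. Updating $T \ni i$ therefore keeps $z_i = y_i \ne x_i$.

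For \ref{item:globally_idempotent_ST} $\implies$ \ref{item:globally_idempotent_definition}, the specialisation $S = T$ only yields $f^{(S,S)} \sqsupseteq f^{(S)}$. This controls the coordinates $i \in S$ that $f$ moves at $x$: they stay moved, so they agree with $f^{(S)}(x)$. It does not directly control a coordinate $i \in S$ with $f_i(x) = x_i$ but $f_i(y) \ne y_i$, i.e. exactly the failure of idempotence.

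\textbf{Main obstacle.} The main obstacle is this last direction. We would first try to combine \ref{item:globally_idempotent_ST} for well-chosen pairs, such as $(S \setminus \{i\}, S)$ and $(S, \{i\})$, with an induction on $\HammingDistance(x,y)$. The aim is first to establish that $f$ is trapping, and then to leverage the trapping inclusion $[y,f(y)] \subseteq [x,f(x)]$ to rule out the offending coordinate $i$. If these choices fail, we would search for a small counterexample in $\Functions(2)$ to see precisely which extra hypothesis the statement requires.
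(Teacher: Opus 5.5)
Your Steps 1 and 2 and the (1)$\Rightarrow$(3) half of Step 3 are correct, and they follow the paper's route.

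\begin{itemize}
\item For (1)$\Rightarrow$(2), the paper simply asserts $f_i(y)=f_i(x)$ for every $i\in\Delta(y,f(y))$. Your case split supplies the justification that assertion needs: for $i\in S$ use idempotence of $f^{(S)}$, and for $i\notin S$ use idempotence of $f^{(S\cup\{i\})}$.
\item Your (2)$\Rightarrow$(1) is the paper's contrapositive argument stated directly. It also correctly shows that globally idempotent networks are trapping.
\item Your (1)$\Rightarrow$(3) is the paper's observation $f^{(S,T)}=f^{(S,T\setminus S)}$. It can be shortened: $i\in S$ already gives $f_i(y)=y_i$ by idempotence of $f^{(S)}$.
\end{itemize}

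One small point concerns Step 2. Let $S$ be arbitrary there, rather than $\Delta(x,y)$ as in your global convention. The argument only uses $y=f^{(S)}(x)\in[x,f(x)]$ and $y_S=f_S(x)$. With your convention as written, you would only obtain idempotence of $f^{(S)}$ at points $x$ with $S\subseteq\Delta(x,f(x))$.

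The direction you could not close, (3)$\Rightarrow$(1), is not a gap in your argument: it is false as stated. The paper's one-line proof (``we have $f^{(S)}=f^{(S,S)}$'') commits exactly the non sequitur you flagged, since $S=T$ only yields $f^{(S)}\sqsubseteq f^{(S,S)}$.

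Unwinding the definitions makes the weakness precise. Condition (3) says that $f_i(f^{(S)}(x))=f_i(x)$ for all $x$, $S$ and $i\in S\cap\Delta(x,f(x))$. Global idempotence requires this for every $i\in S$.

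Your proposed search in $\Functions(2)$ succeeds. Let $f(00)=10$ and $f(10)=11$, and let $01$ and $11$ be fixed points.
\begin{itemize}
\item The only non-vacuous instances of (3) are $x=00$ with $i=1$, and $x=10$ with $i=2$.
\item For $x=00$: every $S\ni 1$ gives $f^{(S)}(00)=10$, and $f_1(10)=1=f_1(00)$.
\item For $x=10$: every $S\ni 2$ gives $f^{(S)}(10)=11$, and $f_2(11)=1=f_2(10)$.
\item So (3) holds, yet $f^2(00)=11\neq 10=f(00)$.
\end{itemize}
This $f$ is not even trapping, since $f(10)\notin[00,10]$. Hence your plan A, which first extracts trapping from (3), cannot succeed.

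The missing ingredient is the upper bound. The theorem becomes true if (3) is replaced by $f^{(S\cap T)}\sqsubseteq f^{(S,T)}\sqsubseteq f^{(S\cup T)}$, mirroring Theorem~\ref{theorem:commutative_trapping}. The paper's own proof of (1)$\Rightarrow$(3) in fact establishes both bounds. The upper bound holds for globally idempotent networks because your Step 2 shows they are trapping (then apply Theorem~\ref{theorem:trapping_networks}). Conversely, setting $S=T$ gives $f^{(S)}\sqsubseteq f^{(S,S)}\sqsubseteq f^{(S)}$, whence $f^{(S,S)}=f^{(S)}$ by antisymmetry of $\sqsubseteq$.
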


\begin{proof}
$\ref{item:globally_idempotent_definition} \implies \ref{item:globally_idempotent_xy}$. Let $x \in \B^n$ and $y \in [x, f(x)]$. For any $i \in \Delta( y, f(y) )$, we have $y_i \ne f_i(y) = f_i(x)$, hence $i \in \Delta( y, f(x) )$. 

$\ref{item:globally_idempotent_xy} \implies \ref{item:globally_idempotent_definition}$. Suppose, for the sake of contradiction, that there exist $S \subseteq [n]$, $i \in S$, and $x \in \B^n$ such that $f_i( f^{(S)}(x) ) \ne f_i( x )$. Since $f^{(S)}( x ) = f^{(S \cap \Delta(x, f(x)))}( x )$, we can assume $S \subseteq \Delta( x, f(x) )$. Denoting $y = f^{(S)}( x )$, we have $y \in [x, f(x)]$, $y_i = f_i( x ) = \neg f_i( y )$. Thus, $i \in \Delta( y, f(y) ) \subseteq \Delta( y, f(x) )$, which is the desired contradiction.

$\ref{item:globally_idempotent_definition} \implies \ref{item:globally_idempotent_ST}$. 
$f$ is trapping, hence $f^{( S,T )} \sqsubseteq f^{( S \cup T )}$. Moreover,
\[
    f^{(S, T)} = f^{( S, T \setminus S)},
\]
and hence $f^{(S, T)}_{ S \cap T } = f_{S \cap T}$, which implies $f^{ (S \cap T) } \sqsubseteq f^{( S,T )}$.

$\ref{item:globally_idempotent_ST} \implies \ref{item:globally_idempotent_definition}$. We have $f^{(S)} = f^{(S, S)}$ for all $S \subseteq [n]$, i.e. $f$ is globally idempotent.
\end{proof}

\begin{theorem}[Classification of Lille networks] \label{theorem:Lille_classification}
The diagram in Figure \ref{figure:Lille} is correct and complete.
\end{theorem}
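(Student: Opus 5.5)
The plan mirrors the proofs of Theorems \ref{theorem:graphs_symmetric} and \ref{theorem:Marseille_classification}. First I prove a sufficient set of implications in Figure \ref{figure:Lille}: black for all networks, magenta for trapping networks, blue for commutative networks. Then I give a counterexample, on $\B^2$ or $\B^3$ and drawn as asynchronous graphs, for each maximal implication not inferable from the diagram.

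\textbf{Black implications.}
\begin{itemize}
\item Lille $\implies$ Globally Idempotent. This follows from Theorem \ref{theorem:Lille_definitions}, item \ref{item:Lille_ST}, with $S=T$.
\item Globally Idempotent $\implies$ Locally Idempotent, and Globally Idempotent $\implies$ Idempotent. Both are immediate by taking $S=\{i\}$ and $S=[n]$.
\item Locally Idempotent $\iff$ Oriented $\Asynchronous$, and Fixable $\iff$ Sink-terminal $\Asynchronous$. These are taken from Theorem \ref{theorem:graphs_triangular}.
\item Globally Idempotent $\implies$ Trapping. This is Theorem \ref{theorem:globally_idempotent_definitions}.
\item Trapping properties of globally idempotent networks. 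By Theorem \ref{theorem:globally_idempotent_definitions}, item \ref{item:globally_idempotent_xy}, every $y\in[x,f(x)]$ with $f(y)\neq y$ satisfies $[y,f(y)]\subseteq[y,f(x)]\subsetneq[x,f(x)]$ unless $y=x$. This should yield that globally idempotent networks are DPT and hence Trapspace-FP. Iterating, $f(x)$ is a fixed point, so every interval contains a fixed point.
\end{itemize}

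\textbf{Blue implications.}
\begin{itemize}
\item Commutative and Idempotent $\implies$ Lille. This holds by definition.
\item Commutative and Locally Idempotent $\implies$ Lille. This is the equivalence of local and global idempotence for commutative networks reviewed in Section \ref{subsection:review_commutative}.
\item Commutative and Fixable $\implies$ Lille. I go through Commutative and Fixable $\implies$ Oriented $\Asynchronous$ (Theorem \ref{theorem:graphs_triangular}) and then the previous item.
\end{itemize}

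\textbf{Magenta implications.} For trapping networks I expect to use two facts. First, Trapping and Trapspace-FP $\implies$ Fixable, from Theorem \ref{theorem:graphs_triangular}. Second, for trapping networks Oriented $\GeneralAsynchronous$ is equivalent to DPT. Any further magenta arrow, such as Trapping and Idempotent $\implies$ something, I would prove using $[x,f(x)]=T_f(x)$ and the descending-chain argument of Proposition \ref{proposition:transient_length_period_trapping}.

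\textbf{Completeness and main obstacle.} The network $00\leftrightarrow 11$ with $10,01$ fixed, i.e. counterexample \ref{item:counterexample_f} of Theorem \ref{theorem:graphs_triangular}, is trapping and locally idempotent but not idempotent. It therefore separates local from global idempotence even among trapping networks. Further examples are needed to show:
\begin{itemize}
\item Idempotent $\centernot\implies$ Locally Idempotent; I would try $f(00)=f(11)=01$ with other points fixed, adjusted until it works;
\item Globally Idempotent $\centernot\implies$ commutative;
\item Fixable $\centernot\implies$ Locally Idempotent, using counterexample \ref{item:counterexample_j}.
\end{itemize}
The main obstacle is completeness. I must verify that the chosen small networks simultaneously have every property required and lack every property excluded, which requires exhaustive checking of each $f^{(S)}$ on $\B^2$. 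I would organise this via a table of properties against the counterexamples.
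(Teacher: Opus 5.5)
There is a genuine gap. Your proposal never mentions the nodes \emph{Interval-UFP} and \emph{Interval-UFP Idempotent}, and these carry the one substantive new implication of this theorem: the magenta arrow ``Trapping and Interval-UFP Idempotent $\implies$ Lille''.

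Your placeholder for the remaining magenta work, ``Trapping and Idempotent $\implies$ something'' via the descending-chain argument of Proposition \ref{proposition:transient_length_period_trapping}, points the wrong way. No magenta arrow leaves Idempotent. The network of counterexample \ref{item:counterexample_j} that you already cite ($f(00)=11$, $f(10)=01$, with $01,11$ fixed) is trapping and idempotent but not locally idempotent.

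The missing step is short. If $f$ is trapping, then $[x,f(x)]=T_f(x)$ and $f(x)$ is its unique fixed point. For $y\in[x,f(x)]$, $f(y)$ is the unique fixed point of $T_f(y)\subseteq T_f(x)$, so $f(y)=f(x)$. Hence $[y,f(y)]=[y,f(x)]$, which is Lille by Theorem \ref{theorem:Lille_definitions}.

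You also omit two arrows:
\begin{itemize}
\item the black arrow Lille $\implies$ Interval-UFP Idempotent, which holds because $f(y)=f(x)$ on every interval, so $f(x)$ is the only fixed point there;
\item the magenta arrow Trapping and Trapspace-FP $\implies$ Interval-FP, which holds because every interval is a principal trapspace.
\end{itemize}
Your blue arguments are fine. Combined with your magenta Trapspace-FP $\implies$ Fixable and the fact that commutative networks are trapping, they amount to the paper's single implication ``Commutative and Trapspace-FP $\implies$ Lille''. However, the blue arrows landing in the two Interval-UFP nodes only follow once the black arrow out of Lille above is proved.

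Completeness is essentially not carried out.
\begin{itemize}
\item Your one new candidate, $f(00)=f(11)=01$ with $01,10$ fixed, has asynchronous graph $00\to 01\leftarrow 11$. It is in fact Lille (a constant on an arrangement), so it separates nothing.
\item ``Globally Idempotent $\centernot\implies$ commutative'' is not a relation between nodes of the diagram. What is required is Globally Idempotent $\centernot\implies$ Interval-UFP. The paper uses $f(00)=11$ with everything else fixed, whose interval $[00,11]$ contains three fixed points. A non-commutative globally idempotent example gives this only after the missing magenta arrow is proved.
\item Entirely absent are the non-implications involving Interval-UFP and DPT:
\begin{itemize}
\item Interval-UFP Idempotent $\centernot\implies$ Fixable;
\item Interval-UFP Idempotent $\centernot\implies$ Locally Idempotent;
\item Trapping and Interval-UFP $\centernot\implies$ Idempotent;
\item Trapping and Interval-UFP $\centernot\implies$ Locally Idempotent;
\item Trapping and DPT $\centernot\implies$ Idempotent.
\end{itemize}
\end{itemize}
The first two need a non-trapping network and cannot be realised on $\B^2$, so your plan of exhaustively checking $\B^2$ would not find them. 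On the square, a bidirected asynchronous arc together with idempotence forces two fixed points into one interval. A terminal cycle without bidirected arcs is the whole square, which leaves no fixed points, contradicting idempotence. The paper uses the network on $\B^3$ that sends the three configurations of weight one to $111$ and the three of weight two to $000$; its asynchronous graph has a terminal $6$-cycle.
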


\begin{proof}
Once again, all the unidirectional black implications are either trivial or follow previous results; as such, their proofs are omitted. 

We now prove the magenta implications, that only hold for trapping networks (and that are not direct consequences of their counterparts in Figure \ref{figure:graphs_triangular}).
\begin{enumerate}
    \item Trapping and Trapspace-FP $\implies$ Interval-FP. Every interval is a principal trapspace, and hence it contains a fixed point.

    \item Trapping and Interval-UFP Idempotent $\implies$ Lille. For all $x \in \B^n$, $f(x)$ is the unique fixed point in $[x, f(x)] = T_f(x)$. Let $y \in [x, f(x)]$, then $f(y)$ is the unique fixed point in $T_f(y) \subseteq T_f(x)$, hence $f(y) = f(x)$. Thus, $[y, f(y)] = [x, f(x)]$ for all $x \in \B^n$ and $y \in [x, f(x)]$.
\end{enumerate}

All the blue implications, that only hold for commutative networks, follow from the implication below.
\begin{enumerate}[resume]
    \item Commutative and Trapspace-FP $\implies$ Lille. By Theorem \ref{theorem:graphs_triangular}, $f$ is commutative and locally idempotent, i.e. $f$ is Lille.
\end{enumerate}

\begin{figure}
\centering
\resizebox{\textwidth}{!}{
\begin{tikzpicture}[yscale=2, xscale=4, font=\small]

    
    \node[draw, very thick, red, minimum height=1cm, minimum width=2.8cm, align=center] (IC) at (0,1) {Lille}; 
    \node[draw, minimum height=1cm, minimum width=2.8cm, align=center] (UFP) at (0.5,0) {Interval-UFP \\ Idempotent};

    \node[draw, minimum height=1cm, minimum width=2.8cm, align=center, orange] (GI) at (1,1) {Globally \\ Idempotent};
    \node[draw,  minimum height=1cm, minimum width=2.8cm, align=center] (UFPI) at (1,-1) {Interval-UFP};
    \node[draw,  minimum height=1cm, minimum width=2.8cm, align=center] (I) at (1.5,0) {Idempotent};

    \node[draw, minimum height=1cm, minimum width=2.8cm, align=center] (DPT) at (1.5,2) {DPT};

    \node[draw, minimum height=1cm, minimum width=2.8cm, align=center] (TGA) at (2,1) {Triangular \\ $\GeneralAsynchronous$};

    \node[draw, minimum height=1cm, minimum width=2.8cm, align=center] (TA) at (2.5,0) {Triangular \\ $\Asynchronous$};
    \node[draw, minimum height=1cm, minimum width=2.8cm, align=center] (OGA) at (2.5,2) {Oriented \\ $\GeneralAsynchronous$};

    \node[draw, minimum height=1cm, minimum width=2.8cm, align=center] (OA) at (3,1) {Locally \\ Idempotent};
    
    \node[draw, minimum height=1cm, minimum width=2.8cm, align=center] (F) at (3.5,0) {Fixable};

    \node[draw, minimum height=1cm, minimum width=2.8cm, align=center] (FI) at (2,-1) {Interval-FP};

    \node[draw, minimum height=1cm, minimum width=2.8cm, align=center] (FPT) at (4,-1) {Trapspace-FP};


    \draw[thick, -latex] (IC) -- (UFP);
    \draw[very thick, -latex, magenta] (UFP) to [bend right=15] (IC);

    \draw[thick, -latex] (UFP) -- (UFPI);
    \draw[thick, -latex, blue] (UFPI) to [bend right=15] (UFP);
    
    \draw[thick, -latex] (UFPI) -- (FI);
    \draw[thick, -latex, blue] (FI) to [bend right=15] (UFPI);

    \draw[thick, -latex] (IC) -- (GI);
    \draw[thick, -latex, blue] (GI) to [bend right=15] (IC);

    \draw[thick, -latex] (DPT) -- (TGA);
    \draw[very thick, -latex, magenta] (TGA) to [bend right=15] (DPT);

    \draw[thick, -latex] (GI) -- (I);
    \draw[thick, -latex, blue] (I) to [bend right=15] (GI);
    
    \draw[thick, -latex] (GI) -- (DPT);
    \draw[thick, -latex, blue] (DPT) to [bend right=15] (GI);

    \draw[thick, -latex] (TGA) -- (TA);
    \draw[thick, -latex, blue] (TA) to [bend right=15] (TGA);

    \draw[thick, -latex] (TGA) -- (OGA);
    \draw[very thick, -latex, magenta] (OGA) to [bend right=15] (TGA);

    \draw[thick, -latex] (TA) -- (OA);
    \draw[very thick, -latex, magenta] (OA) to [bend right=15] (TA);

    \draw[thick, -latex] (OGA) -- (OA);
    \draw[thick, -latex, blue] (OA) to [bend right=15] (OGA);


    \draw[thick, -latex] (TA) -- (F);
    \draw[thick, -latex, blue] (F) to [bend right=15] (TA);

    \draw[thick, -latex] (F) -- (FPT);
    \draw[very thick, -latex, magenta] (FPT) to [bend right=15] (F);


    \draw[thick, -latex] (UFP) -- (I);
    \draw[thick, -latex, blue] (I) to [bend right=15] (UFP);

    \draw[thick, -latex] (I) -- (FI);
    \draw[thick, -latex, blue] (FI) to [bend right=15] (I);


    \draw[thick, -latex] (FI) -- (FPT);
    \draw[very thick, -latex, magenta] (FPT) to [bend right=15] (FI);

\end{tikzpicture}
}
\caption{\textcolor{red}{Lille} v \textcolor{blue}{commutative} and \textcolor{magenta}{trapping} networks.} \label{figure:Lille}
\end{figure}
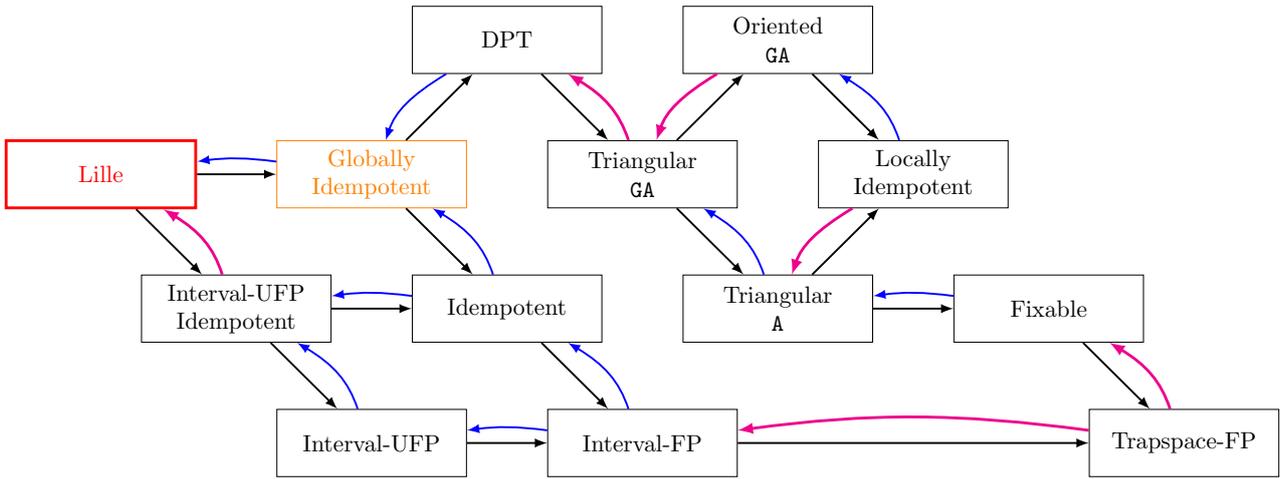

We now exhibit counterexamples to implications that are not displayed in Figure \ref{figure:Lille}.
\begin{enumerate}[(a)]
    \item \label{item:counterexample_k} Interval-UFP and Idempotent $\centernot\implies$ Fixable.

    \item \label{item:counterexample_l} Interval-UFP and Idempotent $\centernot\implies$ Locally Idempotent.

    \item \label{item:counterexample_m} Trapping and Interval-UFP $\centernot\implies$ Idempotent.

    \item \label{item:counterexample_n} Globally Idempotent $\centernot\implies$ Interval-UFP.
    
    \item \label{item:counterexample_o} Trapping and DPT $\centernot\implies$ Idempotent.
    
    \item \label{item:counterexample_p} Trapping and Idempotent $\centernot\implies$ Locally Idempotent.
    
    \item \label{item:counterexample_q} Trapping and Interval-UFP $\centernot\implies$ Locally Idempotent.
\end{enumerate}

\begin{center}
    
\begin{tikzpicture}

\begin{scope}[xshift=0cm, yshift=0cm]
    \node (kl) at (1.5, -0.5) {\ref{item:counterexample_k} \ref{item:counterexample_l}};
    \node (000) at (0,0) {$000$};
    \node (001) at (1,1) {$001$};
    \node (010) at (0,2) {$010$};
    \node (011) at (1,3) {$011$};
    \node (100) at (2,0) {$100$};
    \node (101) at (3,1) {$101$};
    \node (110) at (2,2) {$110$};
    \node (111) at (3,3) {$111$};

    \path[draw] (000) -- (001) -- (011) -- (111)
    (000) -- (010) -- (110) -- (111)
    (000) -- (100) -- (101) -- (111)
    (001) -- (101)
    (010) -- (011)
    (100) -- (110);
    
    \draw[very thick,latex-latex, blue] (001) -- (101);
    \draw[very thick,latex-latex, blue] (001) -- (011);

    \draw[very thick,latex-latex, blue] (010) -- (110);
    \draw[very thick,latex-latex, blue] (010) -- (011);

    \draw[very thick,latex-latex, blue] (100) -- (110);
    \draw[very thick,latex-latex, blue] (100) -- (101);
\end{scope}

\begin{scope}[xshift=5cm, yshift=0cm]
    
    \node (m) at (1, -0.5) {\ref{item:counterexample_m}};
    \node (00) at (0,0) {$00$};
    \node (01) at (0,2) {$01$};
    \node (10) at (2,0) {$10$};
    \node (11) at (2,2) {$11$};

    \path[draw] (00) -- (01) -- (11)
    (00) -- (10) -- (11);
    
    
    \draw[very thick,latex-latex, blue] (00) -- (10);
    \draw[very thick,latex-latex, blue] (00) -- (01);  
    \draw[very thick,-latex, blue] (10) -- (11);
    \draw[very thick,-latex, blue] (01) -- (11);
\end{scope}
 
\begin{scope}[xshift=10cm, yshift=0cm]
    \node (n) at (1, -0.5) {\ref{item:counterexample_n}};
    \node (00) at (0,0) {$00$};
    \node (01) at (0,2) {$01$};
    \node (10) at (2,0) {$10$};
    \node (11) at (2,2) {$11$};

    \path[draw] (00) -- (01) -- (11)
    (00) -- (10) -- (11);
    
    
    \draw[very thick,-latex, blue] (00) -- (10);
    \draw[very thick,-latex, blue] (00) -- (01);  
\end{scope}  

\begin{scope}[xshift=0cm, yshift=-4cm]
    \node (o) at (1, -0.5) {\ref{item:counterexample_o}};
    \node (00) at (0,0) {$00$};
    \node (01) at (0,2) {$01$};
    \node (10) at (2,0) {$10$};
    \node (11) at (2,2) {$11$};

    \path[draw] (00) -- (01) -- (11)
    (00) -- (10) -- (11);
    
    
    \draw[very thick,-latex, blue] (00) -- (10);
    \draw[very thick,-latex, blue] (00) -- (01);  
    \draw[very thick,-latex, blue] (11) -- (01);  
\end{scope} 

\begin{scope}[xshift=5cm, yshift=-4cm]
    \node (p) at (1, -0.5) {\ref{item:counterexample_p}};
    \node (00) at (0,0) {$00$};
    \node (01) at (0,2) {$01$};
    \node (10) at (2,0) {$10$};
    \node (11) at (2,2) {$11$};

    \path[draw] (00) -- (01) -- (11)
    (00) -- (10) -- (11);
    
    
    \draw[very thick,latex-latex, blue] (00) -- (10);
    \draw[very thick,-latex, blue] (00) -- (01);  
    \draw[very thick,-latex, blue] (10) -- (11);  
\end{scope}

\begin{scope}[xshift=10cm, yshift=-4cm]
    \node (q) at (1, -0.5) {\ref{item:counterexample_q}};
    \node (00) at (0,0) {$00$};
    \node (01) at (0,2) {$01$};
    \node (10) at (2,0) {$10$};
    \node (11) at (2,2) {$11$};

    \path[draw] (00) -- (01) -- (11)
    (00) -- (10) -- (11);
    
    
    \draw[very thick,latex-latex, blue] (00) -- (10);
    \draw[very thick,-latex, blue] (00) -- (01);  
    \draw[very thick,-latex, blue] (10) -- (11);  
    \draw[very thick,-latex, blue] (01) -- (11);  
\end{scope}

\end{tikzpicture}
\end{center}

\end{proof}

\section{Conclusion}

\paragraph{Summary of results}

This paper tied together the topics of trapspaces of Boolean networks and of commutative networks.
We have introduced trapping networks, that generalise commutative networks and are a normal form for the study of trapspaces of networks.
We have also classified the collections of trapspaces and of principal trapspaces of networks.
We have then focused on two particular classes of commutative networks, namely Marseille (bijective commutative) and Lille (idempotent commutative) networks.
Those two classes are very well structured (for instance, one may extract over twenty different definitions of Lille networks in the paper) and highlight the broader structure of commutative and trapping networks at large.

\paragraph{Future work}

We identify three main avenues for potential future work. First, one may want to develop the theory of trapping networks. In particular, one may classify networks of a particular class (e.g. linear, monotone, increasing, or with particular interaction graphs) that are trapping. Second, we have classified the collections of principal trapspaces as the so-called pre-principal collections of subcubes, which have a ``simple'' definition (i.e. $\mathcal{Q} = \mu( \mathcal{Q} )$). However, the computational complexity of deciding whether a collection of subcubes is pre-principal remains unknown. The same holds for collections of trapspaces, which are the pre-ideal collections of subcubes. Third, trapping networks are one way of generalising commutative networks; Table \ref{table:characterisation} naturally suggests other ways of doing so. It would be interesting to see how different generalisations of commutativity behave, both in terms of their dynamical properties and whether classifications such as in Figures \ref{figure:Marseille} and \ref{figure:Lille} can be derived.

\section*{Acknowledgment}

I would like to thank Lo\"ic Paulev\'e and Sara Riva for developing work on trapspaces that led to the topic of this paper, and for fruitful discussions and advice throughout the preparation of this paper.


\end{document}